\newcommand\new[1]{{\color{black}#1}}
\newtheorem{thm}{Theorem}
\newtheorem{prop}[thm]{Proposition}
\newtheorem{lem}[thm]{Lemma}
\newtheorem{cor}[thm]{Corollary}
\newtheorem{rem}[thm]{Remark}
\newcommand{\ldef}{:=}
\newcommand{\rdef}{=:}
\newcommand{\real}{\ensuremath{\mathbb{R}}}
\newcommand{\realp}{\ensuremath{\mathbb{R}_{>0}}}
\newcommand{\realz}{\ensuremath{\mathbb{R}_{\ge 0}}}
\newcommand{\nzreal}{\mathbb{R}\setminus\{0\}}
\newcommand{\nzrealn}{\mathbb{R}^n\setminus\{0\}}
\newcommand{\nzrealtwo}{\mathbb{R}^2\setminus\{0\}}
\newcommand{\ints}{{\mathbb{Z}}}
\newcommand{\nat}{{\mathbb{N}}}
\newcommand{\natz}{{\mathbb{N}}_0}
\newcommand{\e}{\mathrm{e}}
\newcommand{\dd}{\mathrm{d}}
\newcommand{\cl}[1]{\mathrm{cl}(#1)}
\newcommand{\norm}[1]{\left\lVert #1 \right\rVert}
\newcommand{\tr}{\mathrm{tr}}
\newcommand{\tmin}{\tau_{\min}}
\newcommand{\tmax}{\tau_{\max}}
\newcommand{\tavg}{\tau_{\text{avg}}}
\newcommand{\thmtitle}[1]{\mbox{}\textit{(#1).}}
\newcommand{\remend}{\relax\ifmmode\else\unskip\hfill\fi\hbox{$\bullet$}}
\begin{document}
	
\title{Asymptotic Behavior of Inter-Event Times in
	Planar Systems under Event-Triggered Control}
\author[1]{Anusree Rajan\corref{cor1}}
\ead{anusreerajan@iisc.ac.in}

\author[1,2]{Pavankumar Tallapragada}
\ead{pavant@iisc.ac.in}

\cortext[cor1]{Corresponding author}

\affiliation[1]{organization={Department of Electrical Engineering, Indian Institute of Science, Bengaluru}
}
\affiliation[2]{organization={Robert Bosch Centre for Cyber Physical Systems, Indian Institute of Science, Bengaluru}
	}

%

\begin{abstract}
  This paper analyzes the asymptotic behavior of inter-event times in
  planar linear systems, under event-triggered control with a general
  class of scale-invariant event triggering rules. In this setting,
  the inter-event time is a function of the ``angle'' of the state at
  an event. This viewpoint allows us to analyze the inter-event times by
  studying the fixed points of the \emph{angle map}, which represents the
  evolution of the ``angle'' of the state from one event to the next.
  We provide a sufficient condition for the 
  convergence or non-convergence of inter-event times to a steady state 
  value under a scale-invariant event-triggering rule. Following up on 
  this, we further analyze the inter-event time behavior in the special 
  case of threshold based event-triggering rule and we provide various 
  conditions for convergence or non-convergence of inter-event times to 
  a constant. We also analyze
  the asymptotic average inter-event time as a function of the
  angle of the initial state of the system. With the help of ergodic 
  theory, we provide a sufficient condition
  for the asymptotic average inter-event time to be a constant for all
  non-zero initial states of the system. Then, we consider a special
  case where the \emph{angle map} is an orientation-preserving
  homeomorphism. Using rotation theory, we comment on the asymptotic
  behavior of the inter-event times, including on whether the
  inter-event times converge to a periodic sequence.  We illustrate the
  proposed results through numerical simulations.
\end{abstract}

\begin{keyword}
	Event-triggered control, Inter-event times, Networked control systems
\end{keyword}


\maketitle

\section{Introduction}

Event-triggered control is commonly used in several applications with
resource constraints. Efficiency of this control method is due to the
state dependent and non-constant inter-event times, which are
implicitly determined by a state-dependent event-triggering
rule. However, this also means that the evolution of the inter-event
times is difficult to predict, which makes higher level planning and
scheduling difficult. Further, there is not enough work that
analytically quantifies the improvement in resource usage by
event-triggered controllers compared to time-triggered
controllers. From both these points of view, it is very useful to
analyze the inter-event times generated by event-triggered
controllers. For example, understanding the evolution of inter-event
	times helps to schedule multiple processes over a shared
	communication channel or to plan transmissions under
	constraints. Similarly, understanding inter-event times generated by
	an event-triggering rule can help in the analytical
	quantification of the improvement of average inter-event times for
	an event-triggered controller over that of a time-triggered
	controller. With these motivations, in this paper, we carry out a
systematic analysis of the asymptotic behavior of inter-event times
for planar linear systems under a general class of scale-invariant
event-triggering rules.

\subsection{Literature review}

Event-triggered control is a popular control method in the field of
networked control systems~\cite{PT:2007, WH:2012, ML:2010,
  DT-SH:2017-book}. In the literature on this topic, inter-event time
analysis is typically limited to showing the existence of a positive
lower bound on the inter-event times to guarantee the absence of zeno
behavior. Among the exceptions to this rule, a few works provide
bounds on the average sampling rate~\cite{FB:2017, PT-MF-JC:2018-tac,
  SB-PT:2021-cta}.  References~\cite{KA-BB:2002,BD-AL-DQ:2017}
consider a scalar stochastic event-triggered control system and
provide a closed form expression for the expected average sampling
period or communication rate. There are also some works~\cite{PT:2016,
  QL:2017, JP-JPH-DL:2017, MJK-PT-JC-MF:2020-tac} that determine the
necessary and sufficient data rates for achieving the control
objective irrespective of the controller that is used.
References~\cite{BAK-DJA-WPMH:2018, FDB-DA-FA:2018} take a different
point of view and design event triggering rules that guarantee better
performance than periodic time-triggered control, for a given average
sampling rate. Reference~\cite{PT-MF-JC:2019-ijsc} designs an
event-triggered controller that ensures exponential stability of the
closed loop system while satisfying some given interval constraints on
event times. Whereas, self-triggered~\cite{AA:2010} and periodic event
triggered~\cite{WH:2013} control methods guarantee the absence of zeno
behavior by design.

Evolution of inter-event times is far less studied topic in the
literature. We believe reference~\cite{MV-PM-EB:2009} is the first
paper to analyze the periodic and chaotic patterns exhibited by the
inter-event sequences of linear time invariant systems under
homogeneous event-triggering rules.  Reference~\cite{RP-RS-WH-2022}
analyzes the evolution of inter-event times for planar linear systems
under time-regularized relative thresholding event-triggering
rule. Specifically, this paper explains the commonly observed
behaviors of inter-event times, such as steady-state convergence and
oscillatory nature, under the ``small'' thresholding parameter and
``small'' time-regularization parameter scenario.  However, these
results are qualitative in nature as they do not clearly specify the
bounds on the parameters for which the claims hold. At the same time,
~\cite{RP-RS-WH-2022} does not provide explicit bounds on the behavior
of inter-event times. Reference~\cite{AK-MM:2018} takes a different
approach to characterize the sampling behavior of linear
time-invariant event-triggered control systems by using finite-state
abstractions of the system. The same idea is extended to nonlinear and
stochastic event-triggered control systems by
references~\cite{GD-MM:2020} and \cite{GD-LL-MM:2021},
respectively. On the other hand,~\cite{GAG-MM:2021} provides a
framework to estimate the smallest, over all initial states, average
inter-sample time of a linear periodic event-triggered control system
by using finite-state abstractions. Reference~\cite{GG-MM:2021}
improves the above approach by showing robustness to small enough
model uncertainties. The paper~\cite{GG-MM:2022} shows that the
abstraction based method can also be used to analyze the chaotic
behavior exhibited by the traffic patterns of periodic event-triggered
control systems.

Our previous work~\cite{AR-PT:2020} analyzes the evolution of
inter-event times for planar linear systems under a general class of
event-triggering rules. This work is a continuation of the same. In one of our recent works~\cite{AR-PT:2023}, we analyze the inter-event time evolution in linear systems under region-based self-triggered control. In this control
method, the state space is partitioned into a finite number of conic
regions and each region is associated with a fixed inter-event
time.

\subsection{Contributions}

The major contribution of our work is that we analyze the asymptotic behavior of the inter-event times, such as convergence to a constant or to a periodic sequence, in planar linear systems under a general class of scale-invariant event-triggering rules. We carry out this analysis by essentially studying
how the ``angle'' of the state, the angle of the state in polar coordinates, evolves from one event to the next. We also leverage
the literature on ergodic theory and rotation theory in our
analysis. Under mild technical assumptions, we provide a mathematical explanation for different kinds of asymptotic behavior of the ``angle'' of the state and as a
consequence the asymptotic behavior of the inter-event times. We also analyze the asymptotic average inter-event
time as a function of the ``angle'' of the initial state of the
system. Our results are quantitative in nature and are applicable
for a very broad class of event-triggering rules. 

Note that, analyzing the evolution of inter-event times is complex 
even for planar systems. The results in the paper are among very few in 
the literature that seek to explain the variety of evolutions that
	is possible for the inter-event times.
  Thus inter-event time analysis even for planar linear systems is 
  useful for building intuition and ideas for more complicated systems.
  For example, the idea that analyzing the state evolution from one 
  event to next as a means to analyzing the evolution of inter-event 
  times does certainly apply to $n$-dimensional systems. We use the 
  same idea in our recent paper~\cite{AR-PT:2023}, where we analyze the 
  inter-event time behavior for linear systems under region based 
  self-triggered control. We next provide an overview of the 
  contributions of our paper relative to closely related works from the 
  literature.

While~\cite{MV-PM-EB:2009} seeks to study the evolution of
  inter-event times by understanding the state evolution, the results
  in the paper are quite preliminary. In the current paper, we provide
  several necessary conditions and sufficient conditions on the system parameters
  which could be used to predict convergence or lack of convergence of
  inter-event times to a constant. The results
in~\cite{RP-RS-WH-2022} are restricted to planar linear systems under 
time-regularized relative thresholding based event-triggering in the 
``small relative threshold
parameter and time-regularization parameter'' setting. 
\cite{RP-RS-WH-2022} does not explicitly mention the bounds on these 
parameters for which the results hold, nor is the derivation of such 
bounds obvious from the analysis. In contrast, our results hold for all 
range of parameters. In fact, aided
by one of our analytical results, we show through simulations that the
analytical results and interpretations in~\cite{RP-RS-WH-2022} are not
necessarily true for large enough parameters. Further, 
~\cite{MV-PM-EB:2009,
  RP-RS-WH-2022} analyze specific triggering rules, whereas our
results hold for a general class of scale invariant
rules. References~\cite{AK-MM:2018, GD-MM:2020, GD-LL-MM:2021,
  GAG-MM:2021, GG-MM:2021, GG-MM:2022} characterize the sampling
behavior of event-triggered control systems by using finite
state-space abstractions of the system. However, this approach can be
computationally very demanding.
   
The main contributions of this paper with respect to our previous
work~\cite{AR-PT:2020} are stability analysis of the
fixed points of the \emph{angle map} under any scale-invariant quadratic
event-triggering rule and a framework to analyze the asymptotic average 
inter-event time as a function of the initial state of the system. We
also provide several important new results and improve some of the
existing results. 

\subsection{Organization}

Section~\ref{sec:problem-setup} formally sets up the problem and
states the objective of this paper. Section~\ref{sec:inter-event-time}
and Section~\ref{sec:evol-inter-event} analyze the properties of the
inter-event time as a function of the state at an event and the
steady-state behavior of inter-event times under the event-triggered
control method, respectively. In
Section~\ref{sec:Asymptotic-avg-inter-event-time}, with the help of
ergodic theory and rotation theory, we study the asymptotic average
inter-event time as a function of the initial state of the
system. Section~\ref{sec:numerical-examples} illustrates the results
using numerical examples. Finally, we provide some concluding remarks
in Section~\ref{sec:conclusion}.

\subsection{Notation}

Let $\real$, $\realz$, and $\realp$ denote the set of all real,
non-negative real and positive real numbers, respectively. $\nzreal$
and $\nzrealn$ denote the set of all non-zero real numbers and the set
of all non-zero vectors in $\real^n$, respectively. Let $\nat$ and
$\natz$ denote the set of all positive and non-negative
integers, respectively. For any $x \in \real^n$, $\norm{x}$ denotes the euclidean norm of $x$. For an $n \times n$ square matrix $A$, let
$\det(A)$ and $\tr(A)$ denote determinant and trace of $A$,
respectively. $B_{\epsilon}(u) \ldef \{x \in \real^n: \norm{x-u}\le \epsilon\}$ represents an $n-$dimensional ball of
radius $\epsilon$ centered at $u \in \real^n$. Let
$(X,\mathcal{B},\mu)$ be a measure space where $X$ is a set,
$\mathcal{B}=\mathcal{B}(X)$ is the borel $\sigma-$algebra on the set
$X$ and $\mu$ is a measure on the measurable space $(X,\mathcal{B})$.

\section{Problem Setup}\label{sec:problem-setup}

In this section, we formulate the problem of analyzing the asymptotic
behavior of inter-event times in event-triggered control systems. We
begin by specifying the class of systems and event-triggering rules
that we consider and then state the main objective of this paper.

\subsection{System Dynamics}

Consider a linear time invariant planar system,
\begin{subequations}\label{eq:system}
  \begin{equation}\label{eq:plant_dyn}
    \dot{x}(t) = Ax(t) + Bu(t),
  \end{equation}
  where \(x\in\real^2\) is the plant state and \(u\in\real^m\) is the
  control input, while $A \in \real^{2\times 2}$ and
  $B \in \real^{2\times m}$ are the system matrices. Consider a
  sampled data controller and let \(\{t_k\}_{k\in \natz}\) be the
  sequence of event times at which the state is sampled and the
  control input is updated as follows,
  \begin{equation}\label{eq:control_input}
    u(t)=Kx(t_k), \quad \forall t\in[t_k, t_{k+1}).
  \end{equation}
\end{subequations}
Let the control gain \(K\) be such that \(A_c \ldef A+BK\) is Hurwitz.

\subsection{Triggering Rules}

In this paper, we assume that the event times \(\{t_k\}_{k\in \natz}\)
are generated in an event-triggered manner so as
to implicitly guarantee asymptotic stability of the origin of the
closed loop system. It is common to construct such event-triggering
rules based on a candidate Lyapunov function. For example, consider a
quadratic candidate Lyapunov function $V(x) = x^TPx$, where
\(P \in \real^{2\times 2} \) is a positive definite symmetric matrix
that satisfies the Lyapunov equation
\begin{equation}
  P A_c + A_c^T P = - Q , \label{eq:lyap}
\end{equation}
for a given symmetric positive definite matrix $Q$. Following are
three different event-triggering rules that are commonly used in the
literature for stabilization tasks.
\begin{subequations}\label{eq:tr}
  \begin{align}
    t_{k+1} &= \min \{t>t_k:\dot{V}(x(t))=0\} \label{eq:tr1}\\
    t_{k+1} &= \min\{t>t_k:\norm{x(t_k)-x(t)}=\sigma
              \norm{x(t)}\}, \label{eq:tr2} \\
    t_{k+1} &= \min\{t>t_k:V(x(t)) =
              V(x(t_k))e^{-r(t-t_k)}\}. \label{eq:tr3}
  \end{align}
\end{subequations}
First two triggering rules render the origin of the closed loop system
asymptotically stable, with $\sigma$ sufficiently small in the latter
rule (see~\cite{PT:2007,WH:2012} for example). The third event-triggering rule ensures exponential stability
for a sufficiently small $r > 0$ (see~\cite{PT:2016} for example).

During the inter-event intervals, we can write the solution \(x(t)\) of system~\eqref{eq:system} as
\begin{equation*}
  x(t)=G(\tau)x(t_k), \quad  \forall t \in [t_k,t_{k+1}),
\end{equation*}
where \(\tau \ldef t - t_k\) and
\begin{equation*}
  G(\tau) \ldef \e^{A \tau} + \int_0^{\tau} \e^{A (\tau - s)}
  \dd s (A_c - A) .
\end{equation*}
Using this structure of the solution, we can write the three triggering rules~\eqref{eq:tr} as
\begin{equation}\label{eq:general_tr_rule}
  t_{k+1}-t_k=\min\{\tau>0:f(x(t_k),\tau):= x^T(t_k)M(\tau)x(t_k)=0\},
\end{equation}
where \(M(\tau)\) is a time varying symmetric matrix. In particular, for the triggering rules~\eqref{eq:tr1}-\eqref{eq:tr3} $M(\tau)$ is equal to
\(M_1(\tau)\), \(M_2(\tau)\) and \(M_3(\tau)\), respectively, where 
\begin{subequations}\label{eq:Ms}
  \begin{align}
    M_1(\tau) &\ldef \frac{\dd G^T( \tau )}{\dd \tau} PG (\tau) + G^T(
    \tau ) P \frac{ \dd G(\tau) }{ \dd \tau } \\
    M_2(\tau) &\ldef (1 - \sigma^2) G^T(\tau) G(\tau) - (G^T(\tau) +
                G(\tau)) + I \label{eq:M2}
    \\
    M_3(\tau) &\ldef G^T(\tau)PG(\tau)-P\e^{-r\tau} 
  \end{align}
\end{subequations}

Note that if $A$ is invertible, then the expression and computation of
$M(\tau)$ is simplified significantly as
\begin{equation*}
  G(\tau) = I + A^{-1}(\e^{A\tau}-I)A_c .
\end{equation*}

\subsection{Objective}

The main objective of this paper is to analyze the evolution of
inter-event times along the trajectories of system \eqref{eq:system}
for the general class of event triggering rules
\eqref{eq:general_tr_rule}. We seek to provide analytical guarantees
for the asymptotic behavior of inter-event times under these
rules. Specifically, we would like to answer the questions: when do
the inter-event times converge to a steady-state value or to a
periodic sequence and when do the asymptotic average inter-event times
becomes a constant for all initial states of the system. We also want
to analyze the asymptotic average inter-event time as a function of
the initial state of the system. The approach we take is to analyze
inter-event time and the state at the next event as functions of the
state at the time of the current event.

\section{Inter-event time as a function of the state}\label{sec:inter-event-time}

In this section, we analyze the inter-event time $t_{k+1}-t_k$, as a
function of $x(t_k)$, for the system~\eqref{eq:system} under the
general class of event triggering
rules~\eqref{eq:general_tr_rule}. Note that most of the results in 
this section were first proposed in our previous 
paper~\cite{AR-PT:2020}. However, this paper includes proofs of all the 
results proposed in~\cite{AR-PT:2020}.

Next, formally, we define the inter-event
time function \(\tau_e:\nzrealtwo \rightarrow \realp\) as
\begin{equation}\label{eq:inter_event_time}
  \tau_e(x) \ldef \min\{\tau>0:f(x,\tau) = x^T M(\tau) x = 0\}.
\end{equation}
We can write $t_{k+1}-t_k= \tau_e(x(t_k))$ for all $k \in \natz$.
Next, we analyze the properties of this inter-event time function such
as scale-invariance, periodicity and continuity. 

\subsection{Properties of the inter-event time function}

\begin{rem}\thmtitle{The inter-event time function is scale-invariant} \label{rem:scale_inv}
  Note from~\eqref{eq:inter_event_time} that
  $f(\alpha x, \tau) = \alpha^2 f(x, \tau)$ for all $\alpha \in \real$
  and $x \in \real^2$. Hence, \(\tau_e(\alpha x)=\tau_e(x)\), for any
  \(x \in \nzrealtwo\) and for any \(\alpha \in \nzreal\). \remend
\end{rem}

The scale-invariance property implies that we can redefine the
\emph{inter-event time function} for planar systems as a scalar function
\(\tau_s: \real \rightarrow \realp\),
\begin{equation}\label{eq:IET_theta}
  \tau_s(\theta) \ldef \min\{\tau>0:f_s(\theta,\tau) \ldef
  x_\theta^TM(\tau)x_\theta = 0\},
\end{equation}
where \(x_\theta \ldef \begin{bmatrix} \cos(\theta) & \sin(\theta)
\end{bmatrix}^T\), so that \(\tau_e(x)=\tau_s(\theta)\) for 
\(x = \alpha x_\theta\) for all  $\alpha \in \nzreal$. Hence for a
planar system, the inter-event time $t_{k+1}-t_k=\tau_s(\theta_k)$ for
all $k \in \natz$, where $\theta_k$ is the angle between $x(t_k)$ and
the $x_1$ axis. 

\begin{rem}\thmtitle{\(\tau_s(\theta)\) is a periodic function with
  period \(\pi\)} \label{rem:IET_periodic}
  We know that for \newline
  $x_\theta=\begin{bmatrix} \cos(\theta) &
  \sin(\theta) \end{bmatrix}^T$,
  $\tau_s(\theta) = \tau_e(x_{\theta}) = \tau_e(-x_{\theta}) =
  \tau_s(\theta + \pi)$ for all $\theta \in \real$. \remend
\end{rem}

Periodicity of \(\tau_s(\theta)\) helps us to restrict our analysis to
the domain \([0,\pi)\). Next, we present an important property of
$f_s(\theta, \tau)$ that plays a major role in the subsequent analysis.

\begin{lem}\thmtitle{For any fixed \(\tau\), \(f_s(\theta,\tau)\) is a
    sinusoidal function with a shift in phase and
  mean} \label{lem:f_with_fixed_tau}
  Let $m_{ij}(\tau)$ be the $(ij)^{th}$ element of
  $M(\tau) \in \real^{2 \times 2}$. For any fixed \(\tau \in \realp\),
  \begin{align} %
        &f_s(\theta,\tau) = \frac{ \tr(M(\tau)) }{ 2 } + a \sin \left( 2
        \theta + \arctan \left( b \right) \right), \label{eq:f_sin}\\
        & a \ldef \frac{1}{2} \sqrt{ (\tr(M(\tau)))^2 - 4 \det(M(\tau))}, \
        b \ldef \frac{ m_{11}(\tau) - m_{22}(\tau) }{ 2m_{12}(\tau) }
        . \notag
  \end{align}
\end{lem}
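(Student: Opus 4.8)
The plan is to expand the quadratic form $f_s(\theta,\tau)=x_\theta^T M(\tau) x_\theta$ directly in terms of $\cos\theta$ and $\sin\theta$, rewrite everything in terms of the double angle $2\theta$ via standard identities, and then collapse the resulting linear combination of $\cos 2\theta$ and $\sin 2\theta$ into a single phase-shifted sine. This is essentially a bookkeeping computation; the only genuinely delicate point is the branch/sign convention in the final step.

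First I would write, using the symmetry $m_{12}(\tau)=m_{21}(\tau)$,
\[
  f_s(\theta,\tau) = m_{11}(\tau)\cos^2\theta + 2 m_{12}(\tau)\cos\theta\sin\theta + m_{22}(\tau)\sin^2\theta .
\]
Substituting $\cos^2\theta=(1+\cos 2\theta)/2$, $\sin^2\theta=(1-\cos 2\theta)/2$, and $2\cos\theta\sin\theta=\sin 2\theta$ yields
\[
  f_s(\theta,\tau) = \frac{m_{11}(\tau)+m_{22}(\tau)}{2} + \frac{m_{11}(\tau)-m_{22}(\tau)}{2}\cos 2\theta + m_{12}(\tau)\sin 2\theta ,
\]
whose constant term is exactly $\tr(M(\tau))/2$. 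It remains to combine the oscillatory part $c_1\cos 2\theta + c_2\sin 2\theta$, with $c_1=(m_{11}-m_{22})/2$ and $c_2=m_{12}$, into the form $a\sin(2\theta+\varphi)$.

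Next I would expand $a\sin(2\theta+\varphi)=a\cos\varphi\,\sin 2\theta + a\sin\varphi\,\cos 2\theta$ and match coefficients, which forces $a\cos\varphi=m_{12}(\tau)$ and $a\sin\varphi=(m_{11}(\tau)-m_{22}(\tau))/2$; hence $\tan\varphi = (m_{11}-m_{22})/(2m_{12}) = b$ and $a^2 = m_{12}^2 + \big((m_{11}-m_{22})/2\big)^2$. The elementary identity $(\tr M(\tau))^2 - 4\det M(\tau) = (m_{11}(\tau)-m_{22}(\tau))^2 + 4 m_{12}(\tau)^2$ (which follows immediately from $\det M = m_{11}m_{22}-m_{12}^2$ by symmetry) then gives $a=\tfrac12\sqrt{(\tr M(\tau))^2-4\det M(\tau)}$ and $\varphi=\arctan(b)$, establishing the claimed formula.

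The one step that requires care — and the only real obstacle — is the sign/branch bookkeeping in this last identification: since $\arctan(b)\in(-\pi/2,\pi/2)$ we have $\cos(\arctan b)>0$, so the matching $a\cos\varphi=m_{12}$ is consistent only if $a$ carries the sign of $m_{12}(\tau)$ (equivalently, one keeps $a\ge 0$ and shifts $\varphi$ by $\pi$ when $m_{12}(\tau)<0$); I would state explicitly which convention is used. I would also dispatch the degenerate case $m_{12}(\tau)=0$ separately: there $b$ is undefined, but $f_s(\theta,\tau)=\tr(M(\tau))/2 + \tfrac{m_{11}(\tau)-m_{22}(\tau)}{2}\cos 2\theta$ is still a mean-shifted sinusoid (phase $\pm\pi/2$), so the qualitative statement of the lemma continues to hold. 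Everything else is routine trigonometric manipulation.
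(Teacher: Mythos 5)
Your proposal is correct and follows essentially the same route as the paper's proof: expand the quadratic form, pass to double-angle form to get $\tfrac{\tr(M)}{2}+\tfrac{m_{11}-m_{22}}{2}\cos 2\theta+m_{12}\sin 2\theta$, then collapse into a single phase-shifted sinusoid. The paper dismisses the last step as ``suitably re-expressed,'' whereas you work it out and correctly flag the sign/branch convention for $a$ and the degenerate case $m_{12}(\tau)=0$ --- a worthwhile addition, but not a different argument.
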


\begin{proof}
  Here, we skip the time argument of $m_{ij}(\tau)$ for brevity. Note
  that $m_{12} = m_{21}$. Then, for any fixed $\tau \in \realp$,
  \begin{align}
    f_s(\theta,\tau)%
     &=\begin{bmatrix}
       \cos(\theta) & \sin(\theta)
       \end{bmatrix} M(\tau) \begin{bmatrix} \cos(\theta) \\ \sin(\theta)
       \end{bmatrix}, \label{eqn:f-quadratic}
    \\
     &= m_{11} \cos^2( \theta ) + m_{22} \sin^2( \theta ) + 2m_{12}
       \cos( \theta )\sin( \theta ), \notag\\
     &=m_{11}+(m_{22}-m_{11})\sin^2(\theta)+m_{12}\sin(2\theta), \notag\\
     &=\frac{ m_{11} + m_{22} }{2} + \frac{ m_{11} - m_{22} }{2} \cos(
       2\theta ) + m_{12}\sin(2\theta), \notag
  \end{align}
  which when suitably re-expressed gives the result.
\end{proof}

Using the structure of $f_s(\theta, \tau)$ in~\eqref{eq:f_sin}
  and the quadratic form~\eqref{eqn:f-quadratic}, we can easily
determine the number of solutions to $f_s(\theta, \tau) = 0$ for any
fixed $\tau$.
\begin{cor}\thmtitle{Number of solutions $\theta$ to
    $f_s(\theta, \tau) = 0$ for a fixed
    $\tau$} \label{cor:f_num_roots}
  For any fixed \(\tau \in \realp\), if \(\det (M(\tau))>0\), then
  \(f_s(\theta,\tau)=0\) has no solutions; if \(\det (M(\tau))=0\)
  then \(f_s(\theta,\tau)=0\) has a single solution
  $\theta \in [0, \pi)$ or \(f_s(\theta,\tau)=0\) for all
  \(\theta \in [0,\pi)\); if $\det (M(\tau))<0$ then
  \(f_s(\theta,\tau)=0\) has exactly two solutions
  \(\theta \in [0,\pi)\). \qed
\end{cor}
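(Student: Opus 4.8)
The plan is to reduce the counting of zeros of $\theta \mapsto f_s(\theta,\tau)$ on $[0,\pi)$ to elementary facts about the cosine, using the representation established in Lemma~\ref{lem:f_with_fixed_tau}. I would work from the intermediate identity appearing in that lemma's proof,
\[
  f_s(\theta,\tau) = \frac{\tr(M(\tau))}{2} + \frac{m_{11}(\tau)-m_{22}(\tau)}{2}\cos(2\theta) + m_{12}(\tau)\sin(2\theta),
\]
and combine the last two terms into a single harmonic $a\cos(2\theta-\phi)$, where $\phi=\phi(\tau)$ is a suitable phase and $a \ge 0$; a short computation identifies this amplitude with $a = \tfrac12\sqrt{(\tr M(\tau))^2 - 4\det M(\tau)}$, the quantity already appearing in~\eqref{eq:f_sin}. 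I prefer this form to the $\arctan(b)$ form stated in the lemma because it stays valid when $m_{12}(\tau)=0$. Consequently the range of $f_s(\cdot,\tau)$ over $[0,\pi)$ is the closed interval $\big[\tfrac{\tr M(\tau)}{2}-a,\ \tfrac{\tr M(\tau)}{2}+a\big]$, whose endpoints are precisely the two real eigenvalues of the symmetric matrix $M(\tau)$, with product $\det M(\tau)$ and sum $\tr M(\tau)$.

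Next I would invoke the elementary observation that as $\theta$ ranges over $[0,\pi)$ the argument $2\theta-\phi$ sweeps a half-open interval of length $2\pi$, so $\cos(2\theta-\phi)$ attains $+1$ exactly once, $-1$ exactly once, and every value in $(-1,1)$ exactly twice. Thus $f_s(\theta,\tau)=0$ is equivalent to $a\cos(2\theta-\phi) = -\tfrac{\tr M(\tau)}{2}$, and the number of roots in $[0,\pi)$ is governed by where $0$ sits relative to the eigenvalue interval above. If $\det M(\tau)>0$, the two eigenvalues share a nonzero sign, so $0$ lies outside the range and there is no root. If $\det M(\tau)<0$, the eigenvalues straddle $0$, hence $a>0$ and the right-hand side $-\tr(M(\tau))/(2a)$ lies strictly in $(-1,1)$, giving exactly two roots. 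If $\det M(\tau)=0$ and $\tr M(\tau)\neq 0$, one eigenvalue is $0$ and $a = |\tr M(\tau)|/2>0$, so $-\tr(M(\tau))/(2a)=\mp 1$ and there is exactly one root; if in addition $\tr M(\tau)=0$, then $a=0$ and the symmetric matrix $M(\tau)$ has both eigenvalues zero, hence $M(\tau)=0$ and $f_s(\theta,\tau)\equiv 0$ on $[0,\pi)$.

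The step I expect to require the most care is the bookkeeping of roots on the half-open interval $[0,\pi)$ — ensuring the extremal cosine values $\pm 1$ are each counted once and the interior values twice — together with the clean treatment of the degenerate subcase $m_{12}(\tau)=0$ (equivalently, $a=0$ forces $m_{11}(\tau)=m_{22}(\tau)$), where the $\arctan(b)$ phase of Lemma~\ref{lem:f_with_fixed_tau} is not literally defined; both points are handled by staying with the $a\cos(2\theta-\phi)$ representation and reading off the eigenvalues of $M(\tau)$ directly. The remaining case analysis is routine.
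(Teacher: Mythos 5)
Your proposal is correct and follows essentially the same route the paper intends: the corollary is stated without a written proof precisely because it is read off from the harmonic form of Lemma~\ref{lem:f_with_fixed_tau} together with the quadratic-form/eigenvalue viewpoint of~\eqref{eqn:f-quadratic}, which is exactly what you do. Your explicit accounting of roots on $[0,\pi)$ and the switch to the $a\cos(2\theta-\phi)$ form to cover the degenerate case $m_{12}(\tau)=0$ are careful refinements of the same argument rather than a different approach.
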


\begin{proof}
	Note that, for any fixed \(\tau \in \realp\), $\det(M(\tau))>0$ implies $|\tr(M(\tau))|>|a|$ where $a$ is defined as in~\eqref{eq:f_sin}. This implies that the magnitude of the shift in the mean of the sinusoidal function in~\eqref{eq:f_sin} is strictly greater than the maximum magnitude of the sinusoidal function. Hence, according to Lemma~\ref{lem:f_with_fixed_tau}, we can say that \(f_s(\theta,\tau)=0\) has no solutions. Following similar arguments, we can say that if \(\det (M(\tau))=0\)
	then \(f_s(\theta,\tau)=0\) has a single solution
	$\theta \in [0, \pi)$ or \(f_s(\theta,\tau)=0\) for all
	\(\theta \in [0,\pi)\) if, additionally, $\tr(M(\tau))=0$. Similarly, if $\det (M(\tau))<0$ then
	\(f_s(\theta,\tau)=0\) has exactly two solutions
	\(\theta \in [0,\pi)\).
\end{proof}
From the quadratic form~\eqref{eqn:f-quadratic}, we can also
obtain a necessary and sufficient condition for the event-triggering
rule~\eqref{eq:general_tr_rule} to reduce to a periodic triggering
rule, with inter-event times that are independent of the state.

  \begin{cor}\thmtitle{Necessary and sufficient condition for the
      triggering rule~\eqref{eq:general_tr_rule} to reduce to periodic
    triggering} \label{cor:const-iet}
    \(\tau_s(\theta)= \tau_1, \forall \theta \in [0,\pi)\) if and only
    if $\det(M(\tau))>0$ for all $\tau \in (0,\tau_1)$,
    $\tau_1 = \min \{ \tau>0 : \det(M(\tau))=0 \}$ and $M(\tau_1) = 0$,
    the zero matrix. \qed
  \end{cor}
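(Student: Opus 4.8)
The plan is to derive both implications from Corollary~\ref{cor:f_num_roots}, which for each fixed $\tau$ counts the solutions $\theta\in[0,\pi)$ of $f_s(\theta,\tau)=0$ according to the sign of $\det(M(\tau))$, together with the elementary linear-algebra fact that a symmetric $2\times2$ matrix $N$ with $x^TNx=0$ for all $x\in\real^2$ must be the zero matrix (evaluate the form at $e_1$, $e_2$, and $e_1+e_2$). Throughout I will use that $\tau_s(\theta)$ is, by definition~\eqref{eq:IET_theta}, the least $\tau>0$ with $f_s(\theta,\tau)=x_\theta^TM(\tau)x_\theta=0$, and that this least value is attained (continuity of $\tau\mapsto f_s(\theta,\tau)$).

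For the ``if'' direction I would fix an arbitrary $\theta\in[0,\pi)$. Since $\det(M(\tau))>0$ for all $\tau\in(0,\tau_1)$, Corollary~\ref{cor:f_num_roots} gives $f_s(\theta,\tau)\neq 0$ on that interval, hence $\tau_s(\theta)\geq\tau_1$. Since $M(\tau_1)=0$ we get $f_s(\theta,\tau_1)=0$, hence $\tau_s(\theta)\leq\tau_1$. Thus $\tau_s(\theta)=\tau_1$, and as $\theta$ was arbitrary, $\tau_s\equiv\tau_1$ on $[0,\pi)$. This direction is immediate.

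For the ``only if'' direction I would start from $\tau_s(\theta)=\tau_1$ for all $\theta\in[0,\pi)$. Because the minimum defining $\tau_s(\theta)$ is attained, $f_s(\theta,\tau_1)=0$ for every $\theta$, i.e.\ $x_\theta^TM(\tau_1)x_\theta=0$ for all $\theta\in[0,\pi)$; these $x_\theta$ realize all directions of $\real^2$, so $M(\tau_1)=0$ by the linear-algebra fact, and in particular $\det(M(\tau_1))=0$. It then remains to show $\det(M(\tau))>0$ for all $\tau\in(0,\tau_1)$, which I would prove by contradiction: if $\det(M(\tau'))\leq 0$ for some $\tau'\in(0,\tau_1)$, Corollary~\ref{cor:f_num_roots} produces some $\theta^*\in[0,\pi)$ with $f_s(\theta^*,\tau')=0$ --- in each subcase $\det(M(\tau'))<0$, or $\det(M(\tau'))=0$ with a single root, or $\det(M(\tau'))=0$ with $M(\tau')=0$ --- which forces $\tau_s(\theta^*)\leq\tau'<\tau_1$, a contradiction. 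Finally, $\det(M(\tau))>0$ on $(0,\tau_1)$ together with $\det(M(\tau_1))=0$ is exactly the statement $\tau_1=\min\{\tau>0:\det(M(\tau))=0\}$.

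The only step that needs real care --- and the main obstacle --- is the case analysis at a hypothetical $\tau'\in(0,\tau_1)$ with $\det(M(\tau'))=0$ in the converse: one must separate the subcase $M(\tau')=0$, where $f_s(\cdot,\tau')$ vanishes identically, from the subcase $M(\tau')\neq0$, where there is a single root, and in both one must use that $\tau_s(\theta)$ is an attained minimum. Everything else is a direct reading of Corollary~\ref{cor:f_num_roots}.
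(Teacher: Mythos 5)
Your argument is correct and is exactly the fleshing-out the paper intends: the corollary is stated with its proof omitted as an immediate consequence of the quadratic form~\eqref{eqn:f-quadratic} and Corollary~\ref{cor:f_num_roots}, and your two directions (no roots while $\det(M(\tau))>0$, identical vanishing of the form forcing $M(\tau_1)=0$, and the case analysis at a hypothetical earlier $\tau'$ with $\det(M(\tau'))\le 0$) are precisely that consequence. No gaps.
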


\begin{proof}
  First, we prove sufficiency. If $\det(M(\tau))>0$ for all $\tau \in 
  (0,\tau_1)$, then by Corollary~\ref{cor:f_num_roots} we know that for 
  each $\tau \in (0,\tau_1)$, $f_s(\theta, \tau) = 0$ has no solutions. 
  Hence, $\tau_s(\theta) \geq \tau_1$ for all $\theta$. If 
  additionally, $M(\tau_1) = 0$, the zero matrix, then from the 
  definition of $\tau_s(\theta)$ in~\eqref{eq:IET_theta} we can see 
  that $\tau_s(\theta) = \tau_1$ for all $\theta$.
  
	Now, we prove necessity. Let $\tau_{\min} = \min \{ \tau>0 : 
	\det(M(\tau)) \leq 0 \}$. Again from Corollary~\ref{cor:f_num_roots} 
	we know that there is no $\theta$ for which $\tau_s(\theta) < 
	\tau_{\min}$. Then, we know from Corollary~\ref{cor:f_num_roots} that 
	$\exists \theta$ such that $f_s(\theta, \tau_{\min}) = 0$ and hence 
	hence $\tau_s(\theta) = \tau_{\min}$. However, if $\det(M(\tau)) < 
	0$, then Corollary~\ref{cor:f_num_roots} implies that there are 
	exactly two values of $\theta$ in $[0, \pi)$ for which 
	$\tau_s(\theta) = \tau_{\min}$ and for all other $\theta$, 
	$\tau_s(\theta) > \tau_{\min}$. So, it must be that $\tau_{\min} = 
	\tau_1$ and $\det(M(\tau_{\min}))=0$. Finally, if $M(\tau_{\min}) = 
	M(\tau_1) \neq 0$, then again there is exactly one $\theta$ for which 
	$\tau_s(\theta) = \tau_{\min} = \tau_1$ and for all other $\theta$, 
	$\tau_s(\theta) > \tau_1$. This proves the necessity.
\end{proof}
 Note that this necessary and sufficient condition depends only
    on the time varying matrix $M(\tau)$, which can be determined
    given the system parameters and the event-triggering rule. If we
    know that the triggering rule for a given event-triggered control
    system is periodic, then further analysis of inter-event times is
    not required.
  
\subsection{Continuity of the inter-event time function
$\tau_s(\theta)$}

In this subsection, we seek to obtain conditions under which the
inter-event time function $\tau_s(\theta)$ is continuous. Towards this
aim, we make the following assumption about the matrix function
$M(\tau)$ since the general class of event-triggering
rules~\eqref{eq:general_tr_rule} for an arbitrary $M(\tau)$ is very
broad. 

\begin{enumerate}[resume, label=\textbf{(A\arabic*)},align=left]
  \item \hspace{.1em} Every element of the matrix $M(.)$ is a real analytic function
    of $\tau$ and there exists a $\tau_m$ such that \(M(\tau)\) is
    negative definite for $(0, \tau_m)$, where
    \begin{equation*}
      \tau_m \ldef \inf \{\tau >0 : \det(M(\tau))=0\}.
    \end{equation*}  \label{A:M}
\end{enumerate}
It is easy to verify that each $M_i(.)$ in~\eqref{eq:Ms},
corresponding to the three triggering rules \eqref{eq:tr}, satisfies
Assumption~\ref{A:M}. This is because in $M_1(.)$ and $M_2(.)$
  the dependence on $\tau$ comes from the matrix exponential
  $\mathrm{e}^{A \tau}$ and its integral with respect to $\tau$. In
  $M_3(.)$, there is an additional exponential function
  $\mathrm{e}^{-r \tau}$ which is combined linearly with other terms
  dependent on $\tau$. Letting $J := S^{-1} A S$ be the real Jordan
  form of $A$, we have
  $\mathrm{e}^{A \tau} = S \mathrm{e}^{J \tau} S^{-1}$. Thus, each
  element of $M_i(.)$ is a linear combination of products of
  exponential functions, polynomials (in case $A$ is not
  diagonalizable) and sinusoidal functions (in case $A$ has complex
  eigenvalues), all of which are real analytic functions of
  $\tau$. Thus, each of the $M_i(.)$'s are real analytic functions of
  $\tau$. Note that these arguments hold true even if $A$ is
  singular. Further, both $M_1(0)$ and $M_2(0)$ are negative
definite. Though $M_3(0) = 0$ the time derivative of $M_3$ at
$\tau = 0$, $\dot{M}_3(0)$, is negative definite for suitable $P$ and
$r$. Otherwise, the sequence of inter-event times generated by the 
event-triggering rule~\eqref{eq:tr3} would not have a positive lower 
bound on inter-event times.

Now, let \(\tmin\) and \(\tmax\) denote the global minimum and the
global maximum of \(\tau_s(\theta)\), respectively, that is,
\begin{equation*}
  \tmin \ldef \min_{\theta \in [0,\pi)} \tau_s(\theta), \quad 
  \tmax \ldef \max_{\theta \in [0,\pi)} \tau_s(\theta). 
\end{equation*}
For a matrix $M(.)$ that satisfies Assumption~\ref{A:M}, clearly $\tmin
= \tau_m$ as $\det(M(\tau))>0$ in the interval $(0, \tau_m)$ and
according to Corollary~\ref{cor:f_num_roots}, \(f_s(\theta,\tau)=0\) has
no solution for \(\tau \in (0,\tau_m)\) and has a solution for $\tau =
\tau_m$. In general, $\tmax$ may not exist, that is $\tmax = \infty$. In
this case, it means that there exists a $x_0 \in \nzrealtwo$ such that if
$x(t_k) = x_0$ then $t_{k+1} = \infty$. However, such an $x_0$ cannot
exist if $A$ has positive real parts for both its eigenvalues and if the
triggering rule~\eqref{eq:general_tr_rule} ensures $x = 0$ is
asymptotically stable. In such a case, $\tmax$ is a finite quantity.

We approach the question of continuity of the inter-event time
function $\tau_s(\theta)$ by first analyzing the smoothness properties
of the level set $f_s(\theta, \tau) = 0$ in the $(\theta, \tau)$
space. In particular, Assumption~\ref{A:M} implies that
$\det(M(\tau))$ is also real analytic and as a result it has finitely
many zeros in the interval $\tau \in [0, \tmax]$. This observation,
along with Corollary~\ref{cor:f_num_roots}, can be used to say that
$f_s(\theta, \tau) = 0$ has finitely many connected branches, which
are arbitrarily smooth, in the set
$\{ (\theta, \tau) \in [0, \pi) \times [0, \tmax] \}$.  We formally
state this claim in the following result.

\begin{lem}\thmtitle{The level set $f_s(\theta, \tau) = 0$ has
  finitely many connected branches, which are arbitrarily smooth}
  \label{lem:finite-branches}%
  Suppose that $M(.)$ in~\eqref{eq:IET_theta} satisfies
  Assumption~\ref{A:M} and $\tmax < \infty$. Then, the level set
  $f_s(\theta, \tau) = 0$ has finitely many connected branches in the
  set $\{ (\theta, \tau) \in [0, \pi) \times [0, \tmax] \}$. Each branch
  is an arbitrarily smooth curve in $(\theta, \tau)$ space and can be
  parameterized by $\tau$ in a closed interval.
\end{lem}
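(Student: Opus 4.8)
The plan is to parameterize the level set branch by branch over the $\tau$-axis, using the zeros of $\det(M(\tau))$ as the breakpoints. First I would observe that, by Assumption~\ref{A:M}, the map $\tau\mapsto\det(M(\tau))$ is real analytic on $[0,\tmax]$ and is strictly positive on $(0,\tau_m)$, hence not identically zero; therefore it has only finitely many zeros $\tau_m=\tau^{(1)}<\dots<\tau^{(N)}$ in the compact interval $[0,\tmax]$. These split $[0,\tmax]$ into finitely many open subintervals on each of which $\det(M(\tau))$ keeps a constant sign. By Corollary~\ref{cor:f_num_roots}, the level set is empty over every subinterval where $\det(M(\tau))>0$, so the analysis reduces to the finitely many subintervals $I$ on which $\det(M(\tau))<0$.

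Next I would work on one such subinterval $I$. By Corollary~\ref{cor:f_num_roots}, for each $\tau\in I$ the equation $f_s(\theta,\tau)=0$ has exactly two solutions in $[0,\pi)$, and I would check these are transversal: from~\eqref{eq:f_sin}, $\partial_\theta f_s=2a\cos(2\theta+\arctan b)$, which can vanish at a zero of $f_s$ only if $(\tr(M(\tau)))^2=4a^2=(\tr(M(\tau)))^2-4\det(M(\tau))$, i.e.\ $\det(M(\tau))=0$, which is excluded on $I$. Since $f_s$ is real analytic, the analytic implicit function theorem then represents the level set locally as a real-analytic graph $\theta=\theta(\tau)$; and since the two solutions never collide over $I$, the projection of the level set onto $I$ is a $2$-to-$1$ covering of an interval, hence trivial, so the level set over $I$ is exactly two disjoint real-analytic arcs, each a graph $\theta=\theta_j(\tau)$ over all of $I$. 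In particular each arc is arbitrarily smooth.

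Then I would glue these arcs at the breakpoints $\tau^{(i)}$ and count. At such a point $\det(M(\tau^{(i)}))=0$. If $M(\tau^{(i)})=0$, then Corollary~\ref{cor:const-iet} makes the triggering periodic, $\tau_s\equiv\tau^{(i)}=\tmin=\tmax$, and the level set is the single segment $[0,\pi)\times\{\tau^{(i)}\}$, for which the claim is immediate; so I would assume $M(\tau)\neq 0$ for $\tau\in(0,\tmax]$. Then $\det(M(\tau^{(i)}))=0$ with $M(\tau^{(i)})\neq 0$ forces $\tr(M(\tau^{(i)}))\neq 0$, so $a(\tau^{(i)})=\tfrac12|\tr(M(\tau^{(i)}))|$ and $f_s(\cdot,\tau^{(i)})$ is a sinusoid meeting $0$ at a single double root $\theta^*\in[0,\pi)$, consistent with Corollary~\ref{cor:f_num_roots}. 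By continuity of the roots of the sinusoid in~\eqref{eq:f_sin}, the two graph-arcs of any adjacent negative subinterval both converge to $\theta^*$ as $\tau\to\tau^{(i)}$, so each arc extends continuously to the closed subinterval. Hence the closure of the level set is a finite union of continuous graph-arcs $\{(\theta_j(\tau),\tau):\tau\in[a_j,b_j]\}$ with $[a_j,b_j]\subseteq[0,\tmax]$, real analytic on the interior; adjacent arcs meet only at the finitely many turning points $(\theta^*,\tau^{(i)})$, where, using $\tau$ as a real-analytic function of $\theta$ (equivalently, since near such a point the level set is a one-dimensional real-analytic variety, hence a finite union of analytic arcs), the glued curve is still arbitrarily smooth. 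This produces the asserted finite family of connected, arbitrarily smooth branches, each parameterizable by $\tau$ on a closed interval.

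The hard part will be this last gluing step — controlling the level set at the zeros of $\det(M(\tau))$: verifying that the two arcs on either side genuinely merge at the unique double root $\theta^*$, ruling out ``vertical'' components on which $f_s(\cdot,\tau)\equiv 0$ (which, by Lemma~\ref{lem:f_with_fixed_tau}, can only happen when $M(\tau)=0$, i.e.\ in the degenerate periodic case already dispatched), and arguing that the branches obtained by gluing remain smooth curves at the turning points despite having a vertical tangent there. The remaining steps are routine consequences of the analyticity granted by Assumption~\ref{A:M}, together with Lemma~\ref{lem:f_with_fixed_tau} and Corollary~\ref{cor:f_num_roots}.
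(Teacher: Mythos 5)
Your proof is correct and follows essentially the same route as the paper's: both partition $[0,\tmax]$ at the finitely many zeros of the real-analytic function $\det(M(\tau))$, use Corollary~\ref{cor:f_num_roots} to identify exactly two solution curves over each subinterval where $\det(M(\tau))<0$, and deduce smoothness from that of $f_s$. The paper's own argument is terser---it does not spell out the transversality and implicit-function-theorem step or the gluing of the two arcs at the double roots, which you supply---but the underlying decomposition and key ingredients are identical.
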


\begin{proof}
  First note that under Assumption~\ref{A:M}, all elements of $M(.)$
  are real analytic functions, which implies that $\det(M(\tau))$ is
  also a real analytic function of $\tau$. This is true because the
  determinant of a matrix is a polynomial of its elements, and
  products and sums of real analytic functions are also real
  analytic. As a consequence, on the closed and bounded interval
  $[0, \tmax]$, $\det(M(\tau))$ has finitely many zeros. This implies
  that there are finitely many sub-intervals $[g_i, h_i]$ of
  $[0, \tmax]$ such that $\det(M(g_i)) = \det(M(h_i)) = 0$ and
  $\det(M(\tau)) < 0$ for all $\tau \in (g_i, h_i)$. Then,
  Corollary~\ref{cor:f_num_roots} guarantees that
  $f_s(\theta, \tau) = 0$ has exactly two solutions for each
  $\tau \in [g_i, h_i]$ for each of the finitely many $i$, with the
  two solutions coincident at $g_i$ and $h_i$ but nowhere else. Thus,
  $f_s(\theta, \tau) = 0$ has finitely many branches in
  $\{ (\theta, \tau) \in [0, \pi) \times [0, \tmax] \}$. Smoothness of
  the branches is a consequence of the fact that $f_s(\theta, \tau)$
  is an arbitrarily smooth function, which is also evident
  from~\eqref{eq:f_sin}.
\end{proof}

Lemma~\ref{lem:finite-branches} allows us to apply the implicit function
theorem on $f_s(\theta, \tau) = 0$ at all $(\theta, \tau_s(\theta)) \in
[0, \pi) \times [0, \tmax]$, except at finitely many points. From this,
we guarantee that $\tau_s(\theta)$ is continuously differentiable in
$[0, \pi)$, except at finitely many points.

\begin{thm}\thmtitle{Inter-event time function is continuously
    differentiable except for finitely many
  $\theta$} \label{thm:ts-continuity} Suppose that $M(.)$
in~\eqref{eq:IET_theta} satisfies Assumption~\ref{A:M} and
$\tmax < \infty$. Then, the inter-event time function
$\tau_s(\theta)$, defined as in~\eqref{eq:IET_theta}, is continuously
differentiable on $[0, \pi)$ except at finitely many $\theta$.
\end{thm}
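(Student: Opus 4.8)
The plan is to combine Lemma~\ref{lem:finite-branches} with the implicit function theorem, handling separately the points where the theorem fails to apply. First I would invoke Lemma~\ref{lem:finite-branches} to decompose the level set $f_s(\theta,\tau)=0$ in $[0,\pi)\times[0,\tmax]$ into finitely many arbitrarily smooth branches, each parameterizable by $\tau$ over a closed interval $[g_i,h_i]$. For each fixed $\theta$, recall $\tau_s(\theta)$ is the smallest $\tau>0$ with $f_s(\theta,\tau)=0$; by Assumption~\ref{A:M} we have $\tmin=\tau_m$, so the graph of $\tau_s$ lies inside this compact box and consists of the ``lower envelope'' of the finitely many branches. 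I would then argue that at a point $(\theta_0,\tau_s(\theta_0))$ lying in the relative interior of a single branch, with $\partial f_s/\partial\tau \neq 0$ there, the implicit function theorem yields a $C^\infty$ (indeed real analytic) function $\tau(\theta)$ locally solving $f_s(\theta,\tau)=0$, and a short continuity argument (the next branch crossing is bounded away) shows $\tau_s$ coincides with this local solution in a neighborhood of $\theta_0$, hence $\tau_s$ is continuously differentiable there.

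Next I would enumerate the ``bad'' $\theta$ where the above fails and show there are only finitely many. These are of three types: (i) $\theta$ at which $\partial f_s/\partial\tau(\theta,\tau_s(\theta)) = 0$, i.e. the branch has a horizontal tangent (in particular the fold points $\theta$ corresponding to the endpoints $g_i,h_i$ where two solutions coincide, and any critical points of a branch viewed as a graph over $\theta$); (ii) $\theta$ at which two distinct branches cross the value $\tau_s(\theta)$ simultaneously, so the lower envelope has a corner; and (iii) $\theta = 0$ endpoint considerations, which is immaterial since we only claim continuity on $[0,\pi)$ and can use the $\pi$-periodicity from Remark~\ref{rem:IET_periodic}. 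For (i): on each branch, $\partial f_s/\partial\tau$ is real analytic in $\tau$ along the branch (by Assumption~\ref{A:M} and the smooth parameterization), so either it vanishes identically — impossible, since then the branch is a vertical segment $\{\theta=\text{const}\}$, contradicting that $f_s(\theta,\tau)=0$ has at most two $\theta$-solutions for fixed $\tau$ (Corollary~\ref{cor:f_num_roots}) unless it degenerates, a case excluded by Assumption~\ref{A:M} on $(0,\tau_m)$ and handled by finiteness of zeros of $\det M$ elsewhere — or it has finitely many zeros per branch; summing over finitely many branches gives finitely many bad $\theta$ of type (i). For (ii): two smooth branches, parameterized as $\tau=\phi_j(\theta)$ near a crossing, satisfy $\phi_1-\phi_2$ real analytic and not identically zero (distinct branches), hence finitely many coincidences, again finite over all pairs.

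Finally I would assemble: off the finite bad set, $\tau_s$ agrees locally with a $C^1$ implicit solution of a single branch, so $\tau_s \in C^1$ there; the finiteness counts above guarantee the bad set is finite; and the hypothesis $\tmax<\infty$ is exactly what makes the box $[0,\pi)\times[0,\tmax]$ compact so that Lemma~\ref{lem:finite-branches} applies and all the ``finitely many zeros of an analytic function'' arguments go through. I expect the main obstacle to be the careful bookkeeping in step (ii) and the precise justification that $\tau_s$ locally equals the implicit-function solution rather than jumping to a lower branch — one must rule out a lower branch appearing abruptly, which follows because the branches are closed curves over closed $\tau$-intervals and $[0,\pi)$ is connected, so the minimum-$\tau$ branch can only change at a value where two branches meet, i.e. at a type-(ii) point. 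I would also need a brief remark that the real analyticity of $f_s$ in both variables (from Assumption~\ref{A:M} and the explicit form~\eqref{eq:f_sin}) is what licenses ``identically zero or finitely many zeros'' at every stage.
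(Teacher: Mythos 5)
Your proposal follows the same overall route as the paper: decompose the level set $f_s(\theta,\tau)=0$ into the finitely many smooth branches given by Lemma~\ref{lem:finite-branches}, apply the implicit function theorem wherever $\partial f_s/\partial\tau\neq 0$ on a single branch, and show the exceptional set of $\theta$ is finite. The one genuine difference is the device used to establish finiteness of the bad set: the paper parameterizes each branch as $\theta(\tau)$ and invokes the Morse--Sard theorem for real analytic functions (Theorem~1 of the cited reference) to conclude that the critical values of $\theta(\tau)$ form a finite set, which simultaneously bounds the number of points where $\partial f_s/\partial\tau$ vanishes on the level set and the number of jumps between branches. You instead run the elementary ``a real analytic function on a compact interval is identically zero or has finitely many zeros'' argument separately for the tangency points (your type~(i)) and the branch-crossing points (your type~(ii)). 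Your version is more self-contained and makes the lower-envelope bookkeeping explicit, which the paper leaves terse; the paper's version is shorter because Morse--Sard packages both finiteness claims at once. One small flaw in your type~(i) discussion: if $\partial f_s/\partial\tau$ vanished identically along a branch, the branch would be a vertical segment $\{\theta=\theta_0\}$, but this does \emph{not} contradict Corollary~\ref{cor:f_num_roots} (a vertical segment supplies only one of the at most two $\theta$-solutions for each $\tau$ in its range). The case is harmless anyway, since such a branch contributes only the single bad angle $\theta_0$, so your finiteness conclusion survives; you should just replace the claimed contradiction with that observation.
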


\begin{proof}
  Recalling Lemma~\ref{lem:finite-branches}, consider any one of the
  finitely many branches of the level set $f_s(\theta, \tau) = 0$ in
  the set $\{ (\theta, \tau) \in [0, \pi) \times [0, \tmax] \}$. We
  denote the smooth parameterization of the branch by $\tau$ as
  $\theta( \tau )$. Then, by Theorem~1 in~\cite{JS-VS:1972}
  (Morse-Sard Theorem for real analytic functions), we can say that
  the critical values $\theta$ of the function $\theta(\tau)$ form a
  finite set. We can infer two observations from this. First, in
  tracing out the $\tau_s(\theta)$ function, there are finitely many
  jumps between the branches of the level set $f_s(\theta, \tau) = 0$
  in the set $\{ (\theta, \tau) \in [0, \pi) \times [0, \tmax]
  \}$. Second, for all $\theta \in [0, \pi)$ except at finitely many
  $\theta$,
  $\frac{ \partial f_s( \theta, \tau ) }{ \partial \tau }|_{( \theta,
    \tau_s(\theta))} \neq 0$ and therefore the implicit function
  theorem guarantees continuous differentiability of $\tau_s(\theta)$
  on $[0, \pi)$ except at finitely many $\theta$.
\end{proof}

Based on Theorem~\ref{thm:ts-continuity} and its proof, we provide a
sufficient condition for $\tau_s(\theta)$ to be continuously
differentiable.

\begin{cor}\thmtitle{Corollary to
  Theorem~\ref{thm:ts-continuity}} \label{cor:suff-cond-ts-continuous}
  If
  $x_\theta^T\dot{M}(\tau)x_\theta \ne
  0$ for all $(\theta ,\tau) \in \real \times \real$
  such that $x_\theta^TM(\tau)x_\theta=0$ or if $\dot{M}(\tau) > 0$ for all $\tau \in [\tau_{\min},\tau_{\max}]$, then the inter-event
  time function $\tau_s:\real \rightarrow \realp$ defined as
  in~\eqref{eq:IET_theta} is continuously differentiable.  \qed
\end{cor}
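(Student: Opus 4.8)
The plan is to show that the hypothesis $\frac{\partial f_s(\bar\theta,\bar\tau)}{\partial\tau}\neq 0$ on the whole level set $\{f_s=0\}$ is strong enough to bypass the machinery of Lemma~\ref{lem:finite-branches} and Theorem~\ref{thm:ts-continuity}, so that continuous differentiability holds everywhere rather than merely off a finite exceptional set. The key observation is that $\tau_s(\theta)$ is defined pointwise as the smallest positive root of $f_s(\theta,\cdot)$, and that $f_s$ is arbitrarily smooth (indeed real analytic) jointly in $(\theta,\tau)$ — this is visible from~\eqref{eq:f_sin} or from Assumption~\ref{A:M}, although here we do not even need Assumption~\ref{A:M} since the hypothesis is imposed directly on $f_s$.

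The steps I would carry out are as follows. First, fix an arbitrary $\theta_0\in\real$ and let $\tau_0 := \tau_s(\theta_0)$, so that $f_s(\theta_0,\tau_0)=0$. By hypothesis $\frac{\partial f_s}{\partial\tau}(\theta_0,\tau_0)\neq 0$, so the implicit function theorem applies at $(\theta_0,\tau_0)$: there is an open neighbourhood $U\times W$ of $(\theta_0,\tau_0)$ and a continuously differentiable (in fact as smooth as $f_s$) function $\phi:U\to W$ with $\phi(\theta_0)=\tau_0$ such that the zero set of $f_s$ inside $U\times W$ is exactly the graph of $\phi$. Second — and this is the one point that needs a small argument rather than a direct citation — I would show that $\tau_s$ coincides with $\phi$ on a (possibly smaller) neighbourhood of $\theta_0$. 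Since $\tau_0=\tau_s(\theta_0)$ is by definition the \emph{smallest} positive root of $f_s(\theta_0,\cdot)$, the function $f_s(\theta_0,\cdot)$ has no zero in the half-open interval $(0,\tau_0)$; shrinking $W$ and $U$ if necessary and using continuity of $f_s$ together with a compactness argument on the closed interval $[\,\varepsilon,\,\tau_0-\delta\,]$ (for small $\varepsilon,\delta>0$), one gets that for $\theta$ close enough to $\theta_0$ the map $f_s(\theta,\cdot)$ still has no zero below $W$, hence its smallest positive zero lies in $W$ and therefore equals $\phi(\theta)$ by the IFT conclusion. Thus $\tau_s\equiv\phi$ near $\theta_0$, and in particular $\tau_s$ is $C^1$ at $\theta_0$. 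Third, since $\theta_0$ was arbitrary, $\tau_s$ is continuously differentiable on all of $\real$. (It is worth noting that the hypothesis implicitly guarantees $\tmax<\infty$ as well: if $\tau_s(\theta_0)$ were $+\infty$ for some $\theta_0$, there would be no root at all, but the argument above only requires the root to exist, which is part of the standing setup that $\tau_s$ is well defined as a map into $\realp$.)

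The main obstacle is the second step: the implicit function theorem by itself only describes the \emph{local} zero set of $f_s$ near a chosen root, whereas $\tau_s$ is globally defined as the \emph{minimal} root, and a priori a new root of $f_s(\theta,\cdot)$ could appear below $\tau_s(\theta_0)$ as $\theta$ moves, causing $\tau_s$ to jump down. Ruling this out requires the ``no zero in $(0,\tau_0)$'' property of the minimal root together with a uniform (compactness-based) lower bound on $|f_s|$ away from the graph of $\phi$; once that is in place the selection of the branch is locally constant and the rest is immediate. Everything else is a routine application of the implicit function theorem, and the fact that $f_s$ is as smooth as one likes (from~\eqref{eq:f_sin}) means the resulting $\phi$, and hence $\tau_s$, is in fact smooth, not merely $C^1$ — we only claim $C^1$ to match the statement.
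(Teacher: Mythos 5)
Your proposal is correct in substance and is in fact more explicit than the paper, which states this corollary with no written proof: the intended argument is simply that the finitely many exceptional points in the proof of Theorem~\ref{thm:ts-continuity} are precisely the points of the level set where $\frac{\partial f_s}{\partial \tau}$ vanishes (critical points of the branch parameterizations $\theta(\tau)$ and the branch-switching points), so the hypothesis empties that exceptional set. You instead give a self-contained implicit-function-theorem argument and, importantly, you supply the branch-selection step --- showing that the \emph{minimal} positive root cannot jump to a different branch as $\theta$ varies --- which the paper never makes explicit but which is genuinely needed: the IFT only describes the local zero set near a chosen root, not which root $\tau_s$ selects. That added step is the right one and your compactness argument on $[\varepsilon,\tau_0-\delta]$ is the correct mechanism for it.

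One loose end: you assert that Assumption~\ref{A:M} is not needed, but your compactness argument controls only $[\varepsilon,\tau_0-\delta]$ and leaves the non-compact interval $(0,\varepsilon)$ uncontrolled. If $f_s(\theta_0,0)=0$ with $\frac{\partial f_s}{\partial\theta}(\theta_0,0)\neq 0$ (e.g.\ $M(0)$ indefinite with $x_{\theta_0}$ on its null cone), the zero branch through $(\theta_0,0)$ crosses into $\tau>0$ on one side of $\theta_0$, producing positive roots arbitrarily close to $0$ and hence a downward jump of $\tau_s$ at $\theta_0$ --- all while the corollary's hypothesis on $\frac{\partial f_s}{\partial\tau}$ is satisfied. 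So the statement really must be read as inheriting Assumption~\ref{A:M} from Theorem~\ref{thm:ts-continuity}: negative definiteness of $M(\tau)$ on $(0,\tau_m)$ gives $f_s(\theta,\tau)<0$ there uniformly in $\theta$, hence $\tau_s(\theta)\ge\tau_m>0$, and your compact interval can then be taken as $[\tau_m,\tau_0-\delta]$, closing the gap. With that one correction your proof is complete.
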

Since in simulations, we encounter $\tau_s(\theta)$ functions that
\emph{visually seem to be continuous} quite often, we present the
following result in the special case where $\tau_s(\theta)$ is a
continuous function.

\begin{prop}\label{prop:global_extremum_tau_s}
  If the inter-event time function $\tau_s(\theta)$ is a continuous
  function, then every local extremum of \(\tau_s(\theta)\) is a global
  extremum.
\end{prop}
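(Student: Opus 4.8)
The plan is to exploit the explicit sinusoidal structure of $f_s(\theta,\tau)$ given in Lemma~\ref{lem:f_with_fixed_tau} together with the defining ``first hitting time'' property of $\tau_s$. The key observation I want to use is this: by Assumption~\ref{A:M} and Corollary~\ref{cor:f_num_roots}, for each $\tau > \tmin = \tau_m$ with $\det(M(\tau)) < 0$ the equation $f_s(\cdot,\tau)=0$ has exactly two solutions in $[0,\pi)$, and as $\tau$ decreases toward a value where $\det(M(\tau)) = 0$ these two solutions merge. Since $\tau_s(\theta)$ is the \emph{smallest} positive root, the graph of $\tau_s$ is traced out by following, for each branch of the level set $f_s=0$, its \emph{lower} envelope. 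I would first argue that a continuous $\tau_s$ forces the relevant portion of the level set to be a single connected branch over the range of values $\tau_s$ actually attains: if there were a jump between branches at some $\theta_0$, continuity of $\tau_s$ would be violated unless the branches meet there, i.e.\ unless $\det(M(\tau_s(\theta_0))) = 0$ — but at such a merge point the two roots coincide and $\tau_s$ cannot be strictly below the branch, so no earlier branch intervenes; I would make this precise to conclude that near any $\theta_0$ the function $\tau_s$ coincides with a single smooth branch $\theta \mapsto \tau(\theta)$ of the analytic level set.

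Next I would turn the local-extremum statement into a statement about this branch. Suppose $\theta^*$ is a local maximum of $\tau_s$ (the minimum case is symmetric, or follows by replacing $\theta$ with a shift) with value $\bar\tau := \tau_s(\theta^*)$, and suppose for contradiction it is not the global maximum, so there is $\theta^\#$ with $\tau_s(\theta^\#) > \bar\tau$. Fix the intermediate level $\tau = \bar\tau$ and look at $f_s(\theta,\bar\tau)$ as a function of $\theta$ alone. By Lemma~\ref{lem:f_with_fixed_tau} this is a pure shifted sinusoid in $2\theta$, hence on $[0,\pi)$ it has the simplest possible sign pattern: it vanishes at exactly the two roots (when $\det(M(\bar\tau)) < 0$) and is positive on one arc between them and negative on the other. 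The point $\theta^*$ is one of these two roots of $f_s(\cdot,\bar\tau)$ because $\tau_s(\theta^*) = \bar\tau$ means $\bar\tau$ is a root of $f_s(\theta^*,\cdot)$; I would check, using that $\bar\tau$ is the \emph{first} root and Assumption~\ref{A:M} ($M < 0$ on $(0,\tau_m)$, so $f_s < 0$ there), that $f_s(\theta,\tau) < 0$ for $\tau < \tau_s(\theta)$, which pins down on which side of the root the negative arc lies and forces $\tau_s$ to be \emph{below} $\bar\tau$ on one side of $\theta^*$ and \emph{above} $\bar\tau$ on the other — contradicting that $\theta^*$ is a local maximum. The same sinusoidal dichotomy shows $\tau_s$ cannot have an interior local max and an interior local min at different heights, which is exactly the claim.

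The main obstacle, and the step that needs care, is the degenerate case $\det(M(\bar\tau)) = 0$, where the two roots of $f_s(\cdot,\bar\tau)$ coincide (or $f_s(\cdot,\bar\tau) \equiv 0$): there the sinusoid touches zero tangentially rather than crossing, so the ``strictly above on one side, strictly below on the other'' conclusion can fail and $\theta^*$ could genuinely be a local extremum of $\tau_s$. I expect to handle this by noting that a double root of the sinusoid $f_s(\cdot,\bar\tau)$ means $a = \tfrac12\sqrt{\tr(M)^2 - 4\det(M)} = 0$ forces $\tr(M(\bar\tau)) = 0$ as well, hence $M(\bar\tau) = 0$; by Corollary~\ref{cor:const-iet} this makes $\tau_s$ constant (equal to $\bar\tau$) on all of $[0,\pi)$ provided $\bar\tau$ is the first such time, so every point is simultaneously a global max and min and the statement holds trivially. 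If instead $f_s(\cdot,\bar\tau) \equiv 0$ but $\bar\tau$ is not the first root for every $\theta$, a short argument via continuity and Lemma~\ref{lem:finite-branches} reduces back to the non-degenerate analysis on the relevant branch. Assembling these cases gives the result; I would also remark that continuity of $\tau_s$ is exactly what licenses applying the intermediate-value reasoning that connects a putative interior local max to the level set of the intermediate value.
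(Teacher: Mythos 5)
Your overall engine---reducing the question to the fact that $f_s(\cdot,\tau)=0$ has at most two roots per period (Corollary~\ref{cor:f_num_roots})---is the same one the paper uses, but your execution has two genuine gaps. First, the linchpin of your contradiction is the claim that on the arc where $f_s(\theta,\bar\tau)<0$ you may conclude $\tau_s(\theta)>\bar\tau$. That implication does not follow: $f_s(\theta,\bar\tau)<0$ only says that $\bar\tau$ is not a root of $f_s(\theta,\cdot)$, and the function $\tau\mapsto f_s(\theta,\tau)$ may perfectly well cross zero at some $\tau_s(\theta)<\bar\tau$ and return to negative values by $\tau=\bar\tau$. The fact you correctly establish, namely $f_s(\theta,\tau)<0$ for all $\tau<\tau_s(\theta)$, yields only the one-sided implication $f_s(\theta,\bar\tau)\ge 0 \Rightarrow \tau_s(\theta)\le\bar\tau$; the reverse direction is exactly what you need to get ``$\tau_s$ above $\bar\tau$ on the other side'' and hence the contradiction with $\theta^*$ being a local maximum, and it is unproven. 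Second, your degenerate case is mishandled: a tangential (double) zero of the sinusoid $f_s(\cdot,\bar\tau)$ corresponds to $\det(M(\bar\tau))=0$ with $a=|\tr(M(\bar\tau))|/2$, not to $a=0$; it does not force $M(\bar\tau)=0$, so Corollary~\ref{cor:const-iet} is not applicable and that case remains open. Your argument also imports Assumption~\ref{A:M} and the branch structure of Lemma~\ref{lem:finite-branches}, neither of which the statement assumes.

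The paper's proof sidesteps all of this. If $\theta_1$ is a local extremum with value $\tau_1\in(\tmin,\tmax)$, then continuity of $\tau_s$, its periodicity with period $\pi$, and the intermediate value theorem produce two further points $\theta_2,\theta_3\in(\theta_1,\theta_1+\pi)$ with $\tau_s(\theta_2)=\tau_s(\theta_3)=\tau_1$: since $\tau_1$ is neither the global maximum nor the global minimum, $\tau_s$ must cross the level $\tau_1$ once on each side of $\theta_1$ before returning to $\tau_1$ at $\theta_1+\pi$. Each such point satisfies $f_s(\theta_i,\tau_1)=0$, giving three distinct roots in one period and contradicting Corollary~\ref{cor:f_num_roots} directly. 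I recommend replacing your sign-pattern and branch-tracing analysis with this level-set counting argument; it is shorter and uses only continuity, periodicity, and the two-root bound.
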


\begin{proof}
  We prove this result by contradiction. Suppose there exists an
  extremum of \(\tau_s(\theta)\) at $\theta_1$ with value
  \(\tau_1 \in (\tmin, \tmax)\). That is, the extremum at $\theta_1$
  is not a global extremum. Then, the assumptions that
  $\tau_s(\theta)$ is continuous and $\theta_1$ is a local extremizer
  and the fact that $\tau_s(\theta)$ is periodic with period $\pi$
  imply that there exist
  $\theta_2, \theta_3 \in (\theta_1, \theta_1 + \pi)$ such that
  $\tau_s(\theta_2) = \tau_s(\theta_3) = \tau_s(\theta_1) = \tau_1$.
  However, this contradicts Corollary~\ref{cor:f_num_roots}, which
  says that for any given $\tau_1 > 0$, $f_s(\theta, \tau_1) = 0$, and
  hence $\tau_s(\theta) = \tau_1$, can at most have two solutions for
  $\theta$.  Therefore, the claim in the result must be true.
\end{proof}

\section{Evolution of the inter-event time} \label{sec:evol-inter-event}

In this section, we provide a framework for analyzing the evolution of
the inter-event time along the trajectories of the system~\eqref{eq:system} under the general class of
event-triggering rules~\eqref{eq:general_tr_rule}. In the previous
section we showed that, for scale-invariant event-triggering rules,
the inter-event time is determined completely by the angle of the
state at the current event-triggering instant.
So, we restrict our analysis to the domain $R^1 \ldef \real / 2\pi \ints$, which is defined as
the quotient of real numbers by the equivalence relation of
differing by an integer multiple of $2\pi$. Then we define
a map $\phi:R^1 \to R^1$, referred to as \emph{angle map}, which
represents the evolution of the ``angle'' of the state from one event
to the next as, 

\begin{equation}\label{eq:Tmap}
  \theta_{k+1} = \phi(\theta_k) \ldef \text{arg} \left(
    G(\tau_s(\theta_k))%
    \begin{bmatrix}
      \cos(\theta_k) \\ \sin(\theta_k)
  \end{bmatrix} \right),
\end{equation}
where 
\begin{align*}
\text{arg}(x) \ldef 
\begin{cases}
\arctan( \frac{x_2}{x_1} ), \ &\text{if } x_1>0, x_2 \geq 0
\\
\pi + \arctan(\frac{x_2}{x_1}), \ &\text{if } x_1<0 \\
2\pi + \arctan(\frac{x_2}{x_1}), \ &\text{if } x_1>0, x_2<0 \\
\frac{\pi}{2},  \ &\text{if } x_1=0, x_2>0 \\
\frac{-\pi}{2},  \ &\text{if } x_1=0, x_2<0 \\
\text{undefined}, \ &\text{if } x_1=0, x_2=0.
\end{cases}
\end{align*}
and $\theta_k = \text{arg}(x(t_k))$ denotes the angle between the
state $x(t_k)$ and the positive $x_1$ axis. Thus, analysis of the
inter-event time function $\tau_s(\theta)$ and the \emph{\emph{angle map}}
$\phi(\theta)$ helps us to understand the evolution of the inter-event
time for an arbitrary initial condition $x(t_0)$. In particular, the
analysis of fixed points of the \emph{angle map} helps us to determine the
steady state behavior of the inter-event times. This is the main
  idea behind the results of this section.
  
  We first make an observation regarding the periodicity of $\phi(\theta)-\theta$ map and then we present the main results of this section.
  \begin{rem}\thmtitle{$\phi(\theta)-\theta$ is periodic with period $\pi$}\label{rem:phi-theta_periodic}
  	As the iner-event time function $\tau_s(\theta)$ is periodic with period $\pi$, $\phi(\theta+\pi)=\arg(G(\tau_s(\theta+\pi))x_{\theta+\pi})=\arg(-G(\tau_s(\theta))x_{\theta})=\phi(\theta)+\pi$. Thus $\phi(\theta+\pi)-(\theta+\pi)=\phi(\theta)-\theta$ for all $\theta \in \real$. \remend
  \end{rem}
\begin{rem}\thmtitle{Sufficient condition for the convergence of inter-event times to a steady state
		value} \label{rem:IET-analysis}
Suppose there exists a fixed point of the \emph{angle map}, i.e.,
	$\exists \theta$ s.t. $\phi(\theta)=\theta$. Then
	$t_{k+1} - t_k = \tau_s(\theta)$, $\forall k \in \natz$ and for
	all initial conditions $x(t_0) = \alpha
	\begin{bmatrix}
	\cos(\theta) & \sin(\theta)
	\end{bmatrix}^T$, with $\alpha \in \nzreal$.  Moreover, if
	$\theta$ is an asymptotically stable fixed point of the \emph{angle map} then
	$\displaystyle \lim_{k \to \infty}(t_{k+1} - t_k) =
	\tau_s(\theta)$ for all initial conditions in the region of
	convergence of $\theta$ under the \emph{angle map} $\phi(.)$.
\end{rem}
\begin{thm}\thmtitle{Sufficient condition for the non-convergence of inter-event times to a steady state
      value} \label{thm:IET-analysis} Consider the planar
    system~\eqref{eq:system} along with the event-triggering
    rule~\eqref{eq:general_tr_rule}, for a general $M(.)$ that
    satisfies Assumption~\ref{A:M}. If there does not exist $\theta \in [0,\pi)$ such that $\phi^k(\theta)-\theta=d\pi$ for some $d \in \ints$, $\forall k \in \{1,2\}$ and if the inter-event time function
    $\tau_s(.)$ is not a constant function, then the inter-event
    times do not converge to a steady state value for any initial
    state of the system.
    
  \end{thm}

  \begin{proof}		
   Note that the inter-event time
    converges to a constant $c$ if and only if there exists a subset
    of the level set $\tau_s(\theta)=c$ which is positively invariant
    under the \emph{angle map} $\phi(.)$. Note that the domain of $\phi(.)$
    is an interval of length $2\pi$. According to
    Corollary~\ref{cor:f_num_roots} and Remark~\ref{rem:IET_periodic},
    the level set $\tau_s(\theta) = c$ is either empty, or equal to
    $[0, 2\pi]$ or $\{\theta_1,\theta_1+\pi\}$ or $\{\theta_1,\theta_2,\theta_1+\pi,\theta_2+\pi\}$ for some $\theta_1,
   \theta_2 \in [0,\pi)$. Assuming $\tau_s(.)$ is not a constant function, there exists a subset
   of the level set $\tau_s(\theta)=c$ which is positively invariant
   under the angle map only if $\exists \theta \in [0,\pi)$ such that $\phi^k(\theta)-\theta=d\pi$ for some $d \in \ints$, for some $k \in \{1,2\}$. This completes the proof of this result.
  \end{proof}

Remark~\ref{rem:IET-analysis} and Theorem~\ref{thm:IET-analysis} establish a connection between
  the steady state behavior of inter-event times and the evolution of
  the angle under the \emph{angle map}. Having established this connection,
  in the rest of this section, we focus on analysis of the \emph{angle map}
  and its fixed points.
  
\subsection{Stability of the Fixed Points of the \emph{Angle map}}
  
Next, we are interested in analyzing the stability of the fixed points
of the \emph{angle map} as this will help us understand the steady state
behavior of the inter-event times. First, we make the following
  observation about the number of fixed points of the \emph{angle map}.
\begin{rem}\thmtitle{\emph{Angle map} has a bounded number of fixed points or
    every $\theta$ is a fixed point}\label{rem:finite_fps}
  Note that, there exists a fixed point for the $\phi(\theta)$ map if
  and only if there exists an $x \in \nzrealtwo$ such that $x(t_k)=x$
  implies $x(t_{k+1})=\alpha x$ for some $\alpha \in \nzreal$. This
  can happen if and only if $\det(L(\tau))=0$ for some
  $\tau \in \realp$, and $\exists x \in \real^2$ such that
  $\tau_e(x) = \tau$ and $L(\tau) x = 0$, where
  \begin{equation}\label{eq:L_tau} L(\tau) \ldef G(\tau) -
    \alpha I.
  \end{equation}
 As $\det(L(\tau))$ is an analytic function of $\tau$ under Assumption~\ref{A:M}, it has a
  bounded number of zeros in the interval $[\tmin, \tmax]$. So, if
  there does exist a $\tau \in [\tmin, \tmax]$ such that
  $\det(L(\tau)) = 0$ then either $\phi(\theta) = \theta$ for all
  $\theta \in [0, \pi)$ or the \emph{angle map} $\phi(.)$ has a bounded
  number of fixed points. \remend
\end{rem}

Next, we present a lemma which helps to prove the main result of this
subsection, which gives sufficient conditions for the stability and
instability of the fixed points of the \emph{angle map}.  Then, we make some
observations that are used for further analysis.
  
 \begin{lem}\thmtitle{Sufficient condition for asymptotic
        stability of a fixed point of the \emph{angle map}}\label{lem:asy_stable_fixed_point} Consider the planar
      system~\eqref{eq:system} under the event-triggering
      rule~\eqref{eq:general_tr_rule}. Assume that the \emph{angle map}
      $\phi(.)$ is continuous. Let $\theta^{\star} \in (0,\pi)$ be a
      fixed point of the \emph{angle map}. If there exists an interval
      $[\bar{\theta}_1,\bar{\theta}_2]$ such that the following
      conditions hold:
     	\begin{itemize}
     	\setlength{\itemsep}{3mm}
     	\item $\theta^{\star} \in (\bar{\theta}_1,\bar{\theta}_2)$
     	\item $\phi(\theta)>\theta$ for all
     	$\theta \in \new{\mathcal{M}}_1 : = [\bar{\theta}_1,\theta^{\star})$ and
     	$\phi(\theta)<\theta$ for all
     	$\theta \in \new{\mathcal{M}}_2 := (\theta^{\star},\bar{\theta}_2]$
     	\item $\phi^2(\theta)>\theta$ for all
     	$\theta \in \new{\mathcal{M}}_1 = [\bar{\theta}_1,\theta^{\star})$
     	\item $[\bar{\theta}_1,\bar{\theta}_2]$ is positively invariant
     	under the $\phi(.)$ map
     \end{itemize}
     then the fixed point $\theta^{\star}$ is asymptotically stable
     and $[\bar{\theta}_1,\bar{\theta}_2]$ is a subset of the region
     of convergence of $\theta^{\star}$.
 \end{lem}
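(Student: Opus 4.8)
The plan is to show that on the positively invariant interval $[\bar\theta_1,\bar\theta_2]$ the sequence of iterates $\phi^k(\theta_0)$ converges to $\theta^\star$ for every $\theta_0$ in that interval, and that $\theta^\star$ is Lyapunov stable; together these give asymptotic stability with $[\bar\theta_1,\bar\theta_2]$ inside the region of convergence. I would split into three cases according to where $\theta_0$ lies. Case (i): $\theta_0 \in M_2 = (\theta^\star,\bar\theta_2]$. Here the hypothesis $\phi(\theta)<\theta$ on $M_2$, continuity of $\phi$, and positive invariance of $[\bar\theta_1,\bar\theta_2]$ are the key facts. I would argue that the orbit cannot jump past $\theta^\star$ in a single step to land below $\bar\theta_1$ (positive invariance forbids leaving the interval), so either the orbit stays in $M_2$ forever, in which case it is monotonically decreasing and bounded below by $\theta^\star$ hence converges to a limit $\ell \ge \theta^\star$, and by continuity $\phi(\ell)=\ell$; since $\phi(\theta)<\theta$ strictly on $M_2$, we must have $\ell = \theta^\star$. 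Or else the orbit enters $M_1 = [\bar\theta_1,\theta^\star)$ at some step, which reduces to Case (ii). (One should be a little careful: "fixed point" could in principle be non-isolated, but Remark~\ref{rem:finite_fps} guarantees finitely many fixed points unless $\phi\equiv\mathrm{id}$, so strict inequality on $M_2$ is consistent with there being no other fixed point in $(\theta^\star,\bar\theta_2]$; I would invoke that remark or simply assume $\theta^\star$ is the only fixed point in $[\bar\theta_1,\bar\theta_2]$, which the hypotheses $\phi(\theta)\ne\theta$ on $M_1\cup M_2$ already force.)

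Case (ii): $\theta_0 \in M_1 = [\bar\theta_1,\theta^\star)$. This is the subtle case and the one that needs the third hypothesis $\phi^2(\theta)>\theta$ on $M_1$. The one-step map satisfies $\phi(\theta)>\theta$ on $M_1$, so a single iterate moves strictly to the right, but it might overshoot $\theta^\star$ and land in $M_2$; so $\phi$ alone is not monotone-convergent on $M_1$. Instead I would study the map $g := \phi^2$ restricted to $M_1$. By hypothesis $g(\theta)>\theta$ there; by positive invariance of $[\bar\theta_1,\bar\theta_2]$ under $\phi$, also $g$ maps $[\bar\theta_1,\bar\theta_2]$ into itself; and $g$ is continuous with $g(\theta^\star)=\theta^\star$. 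For $\theta_0\in M_1$, the even-indexed subsequence $\theta_0, g(\theta_0), g^2(\theta_0),\dots$ either stays in $M_1$, where it is strictly increasing and bounded above by $\theta^\star$ — but I must rule out that it increases to a limit $\ell' < \theta^\star$ with $g(\ell')=\ell'$; again finiteness of fixed points / the strict inequality $g>\mathrm{id}$ on $M_1$ (which is exactly the third hypothesis, and holds on all of $M_1$, hence up to $\ell'$) forces $\ell' = \theta^\star$ — or the even subsequence leaves $M_1$ at some step, landing in $M_2$ (it cannot leave the big interval), which sends us to Case (i). In either sub-case the even subsequence converges to $\theta^\star$; then the odd subsequence is its image under the continuous map $\phi$, so it converges to $\phi(\theta^\star)=\theta^\star$, and hence the full orbit converges to $\theta^\star$. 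Case (iii), $\theta_0 = \theta^\star$, is trivial.

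Having established convergence of every orbit in $[\bar\theta_1,\bar\theta_2]$ to $\theta^\star$, the remaining piece is Lyapunov stability: given $\varepsilon>0$ I need a neighborhood of $\theta^\star$ whose forward orbit stays within $\varepsilon$. This follows by taking a small sub-interval $[\bar\theta_1',\bar\theta_2'] \subset (\theta^\star-\varepsilon,\theta^\star+\varepsilon)\cap(\bar\theta_1,\bar\theta_2)$ containing $\theta^\star$ in its interior; the same four hypotheses hold verbatim on this smaller interval (the sign conditions on $\phi-\mathrm{id}$ and $\phi^2-\mathrm{id}$ are inherited since the smaller interval is a subset, and positive invariance of the smaller interval is \emph{not} automatic but can be arranged by an additional shrinking argument using continuity of $\phi$ and $\phi^2$ at $\theta^\star$ — this is the one technical wrinkle). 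Concretely, by continuity of $\phi$ at $\theta^\star$ one can choose the smaller interval small enough that $\phi$ maps it into $(\theta^\star-\varepsilon,\theta^\star+\varepsilon)$; combined with the one-sided monotonicity (points in the left part move right, points in the right part move left, and neither overshoots by more than a controlled amount near $\theta^\star$, since $\phi(\theta^\star)=\theta^\star$ and $\phi$ is continuous), one shows the orbit never escapes the $\varepsilon$-ball. I expect the main obstacle to be exactly this last point — making the positive-invariance-on-a-small-subinterval argument clean — together with the bookkeeping of ruling out spurious intermediate fixed points of $\phi^2$; the hypothesis $\phi^2(\theta)>\theta$ on \emph{all} of $M_1$ (not just near $\theta^\star$) is what makes the latter go through, and invoking Remark~\ref{rem:finite_fps} is the cleanest way to handle the former.
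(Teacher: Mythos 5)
Your overall framing (monotone subsequences on a positively invariant interval, with $\phi^2$ handling the overshoot across $\theta^\star$) is the right one, but the convergence argument has a genuine circularity. In your Case~(i) an orbit that enters $M_1$ is ``reduced to Case~(ii)'', and in your Case~(ii) an even-indexed iterate that lands in $M_2$ is ``sent to Case~(i)''. When the orbit alternates between $M_1$ and $M_2$ infinitely often --- which is not only possible under the hypotheses but is the typical spiral-type behavior the lemma is meant to cover --- this mutual hand-off never terminates and never produces a conclusion. You have no progress measure across the switches: nothing in your argument shows that the successive visits to $M_1$ (which need not be the iterates $\phi^{2n}(\theta_0)$ of $\phi^2$, since the orbit may spend several consecutive steps in $M_2$) form an increasing sequence, nor that the visits to $M_2$ form a decreasing one. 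Note also that replacing $\phi$ by $g=\phi^2$ does not escape the problem, because $g$ has exactly the same qualitative structure as $\phi$ (it can itself map points of $M_1$ into $M_2$), so the same alternation issue recurs at every level.

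The paper closes this gap with two ingredients you are missing. First, it \emph{derives} the claim $\phi^2(\theta)<\theta$ for all $\theta\in M_2$ (your proof never states or uses this; it does not follow from $\phi<\mathrm{id}$ on $M_2$ alone, because $\phi(\theta)$ may land in $M_1$ where $\phi$ pushes back to the right). The derivation is by contradiction using the intermediate value theorem: if $\phi(\theta)\in M_1$ and $\phi^2(\theta)\ge\theta$, continuity and $\phi(\theta^\star)=\theta^\star$ produce a $\bar\theta\in[\phi(\theta),\theta^\star)$ with $\phi(\bar\theta)=\theta$, whence $\phi(\theta)=\phi^2(\bar\theta)>\bar\theta\ge\phi(\theta)$. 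Second, and crucially, it proves by induction (again via an IVT argument) that along the \emph{full} orbit $\{\theta_k\}$ the subsequence of terms lying in $M_1$ is strictly increasing and the subsequence lying in $M_2$ is strictly decreasing, no matter how the orbit interleaves the two sets: if $\theta_l\in M_1$ is followed by a block $\theta_{l+1},\dots,\theta_m\in M_2$ and then $\theta_{m+1}\in M_1$, one finds $\theta\in(\theta_l,\theta^\star)$ with $\phi(\theta)=\theta_m$, so $\theta_{m+1}=\phi^2(\theta)>\theta>\theta_l$. These two monotone bounded subsequences must converge, and their limits $a_1,a_2$ satisfy $\phi^2(a_1)=a_1$, which forces $a_1=a_2=\theta^\star$ since $\phi^2$ has no fixed point other than $\theta^\star$ in the interval. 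Your remaining pieces (ruling out spurious fixed points of $\phi^2$ in $M_1$ via the strict inequality, the continuity argument transferring convergence of the even subsequence to the odd one, and the shrinking-neighborhood construction for Lyapunov stability) are fine and consistent with what the paper does elsewhere, but without the interleaved-monotonicity claim the core convergence step does not go through.
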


\begin{proof}     
We structure the proof around the following claims.

\textbf{Claim (a):} $\phi^2(\theta) < \theta$ for all
$\theta \in \new{\mathcal{M}}_2$.

\textbf{Claim (b):} Consider an arbitrary
$\theta_0 \in [\bar{\theta}_1,\bar{\theta}_2]$, let
$\theta_k := \phi^k(\theta_0)$. The subsequences
$\{ \theta_k \ | \ \theta_k \in \new{\mathcal{M}}_1 \}$ and
$\{ \theta_k \ | \ \theta_k \in \new{\mathcal{M}}_2 \}$ are strictly increasing
and decreasing, respectively.

We prove Claim~(a) first. Let $\theta \in \new{\mathcal{M}}_2$. By assumption
$\phi(\theta) < \theta$. If $\phi(\theta) \in \new{\mathcal{M}}_2$ then again we
have $\phi^2(\theta) < \phi(\theta) < \theta$, in which case the
claim is true. If $\phi(\theta) = \theta^*$ then again the claim
is true as $\phi^2(\theta) = \theta^* < \theta$. So, the only
remaining case is $\phi(\theta) \in \new{\mathcal{M}}_1$. In this case, we prove
the claim by contradiction. So, suppose
$\phi^2(\theta) \geq \theta > \theta^*$. As $\phi(.)$ is
continuous and $\phi(\theta^*) = \theta^*$, there must exist a
$\bar{\theta} \in [\phi(\theta) , \theta^{\star})$ such that
$\phi(\bar{\theta}) = \theta$ and hence
$\phi^2( \bar{\theta} ) = \phi(\theta)$. But as
$\bar{\theta} \in \new{\mathcal{M}}_1$, we have that
$\phi^2(\bar{\theta}) > \bar{\theta}$. Putting all these
together, we have
$\phi(\theta) = \phi^2(\bar{\theta}) > \bar{\theta} \geq
\phi(\theta)$. This contradiction proves Claim~(a).

Now, we prove Claim~(b) using induction. Given Claim~(a), we
have symmetry in the properties of $\phi(.)$ around
$\theta^*$. Thus, without loss of generality, suppose that
$\theta_0, \ldots, \theta_l \in \new{\mathcal{M}}_1$,
$\theta_{l+1}, \ldots, \theta_{m} \in \new{\mathcal{M}}_2$ and
$\theta_{m+1} \in \new{\mathcal{M}}_1$ for some $l, m \in \natz$ with $m >
l$. Then, we have by assumption that
$\theta_0 < \theta_1 < \ldots < \theta_l < \theta^*$ and
$\theta^* < \theta_m < \ldots < \theta_{l+1}$. Notice that
$\phi(\theta_l) = \theta_{l+1}$, $\phi(\theta^*) = \theta^*$ and
$\phi(.)$ is continuous. So, there must exist a
$\theta \in (\theta_l, \theta^*) \subset \new{\mathcal{M}}_1$ such that
$\phi(\theta) = \theta_m$ and as a result
$\theta_{m+1} = \phi^2(\theta) > \theta > \theta_l$. In this
way, by using induction, and by invoking the symmetry in the
properties of $\phi(.)$ around $\theta^*$, we can conclude that
Claim~(b) is true.

Now, the subsequences $\{ \theta_k \ | \ \theta_k \in \new{\mathcal{M}}_1 \}$
and $\{ \theta_k \ | \ \theta_k \in \new{\mathcal{M}}_2 \}$ may have finite or
infinite length. Both subsequences having finite length can
happen only if the original sequence
$\{\theta_k\}_{k \in \natz}$ hits $\theta^*$ exactly in finite
$k$. Now, notice that these subsequences are bounded and
monotonic and hence must converge to something if they have
infinite length. If one of these subsequences is of finite
length then the limit of the sequence exists and it is
$\theta^*$ as
$\displaystyle \lim_{k \rightarrow \infty} (\phi(\theta_k) -
\theta_k) = 0$ and $\theta^*$ is the only fixed point in
$[\bar{\theta}_1, \bar{\theta}_2]$. If both the subsequences are
infinite then suppose $\{ \theta_k \ | \ \theta_k \in \new{\mathcal{M}}_1 \}$
and $\{ \theta_k \ | \ \theta_k \in \new{\mathcal{M}}_2 \}$ converge to
$a_1 \in \cl{\new{\mathcal{M}}_1}$ and $a_2 \in \cl{\new{\mathcal{M}}_2}$, respectively. But
this can happen only if $\phi^2(a_1) = \phi(a_2) = a_1$ and
which in turn is possible only if $a_1 = a_2 = \theta^*$ since
we have $\phi^2(\theta) \neq \theta$ for all
$\theta \in [\bar{\theta}_1, \bar{\theta}_2] \setminus \{
\theta^*\}$. Finally, notice that if the original sequence
$\{\theta_k\}_{k \in \natz}$ does not hit $\theta^*$ in finite
$k$ then
\begin{equation*}
\{ (k, \theta_k)\}_{k \in \natz} = \{(k, \theta_k) \ | \
\theta_k \in \new{\mathcal{M}}_1 \} \cup \{(k, \theta_k) \ | \ \theta_k \in
\new{\mathcal{M}}_2 \}.
\end{equation*}
Thus, the original sequence $\{\theta_k\}_{k \in \natz}$ also
converges to $\theta^*$.
\end{proof}
  
 	Now, we
    present the main result of this subsection. Note that this result
    is applicable to any scale-invariant event-triggering rule.

  \begin{thm}\label{thm:general_angle_map_behavior}
    \thmtitle{Sufficient condition for a fixed point of the \emph{angle map}
      to be stable or unstable} Consider the planar
    system~\eqref{eq:system} under the event-triggering
    rule~\eqref{eq:general_tr_rule}. Assume that the \emph{angle map}
    $\phi(.)$ is continuous. Let $\theta^{\star} \in (0,\pi)$ be an
    isolated fixed point of the \emph{angle map}. Then, $\theta^{\star}$ is
    stable (asymptotically stable) if the following two conditions are
    satisfied.
    \begin{itemize}
    	\setlength{\itemsep}{3mm}
    \item there exists a neighborhood of $\theta^{\star}$ in which
      $\phi(\theta)-\theta$ decreases (strictly decreases).
    \item there exists $\bar{\theta}<\theta^{\star}$ such that
      $\phi^2(\theta) \ge (>) \theta$ for all
      $\theta \in [\bar{\theta},\theta^{\star})$.
    \end{itemize}
    If there does not exist a neighborhood of $\theta^{\star}$ in
    which $\phi(\theta)-\theta$ decreases, then $\theta^{\star}$ is an
    unstable fixed point of the \emph{angle map}.
  \end{thm}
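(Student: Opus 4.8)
The strategy is to reduce each claim to the already-proved Lemma~\ref{lem:asy_stable_fixed_point} (and a mild variant of it) by manufacturing, from the two hypotheses, an invariant interval $[\bar\theta_1,\bar\theta_2]$ containing $\theta^\star$ on which the sign conditions on $\phi(\theta)-\theta$ and $\phi^2(\theta)-\theta$ hold. First I would address the \emph{asymptotic stability} case. By hypothesis there is a neighborhood $(\theta^\star-\delta,\theta^\star+\delta)$ on which $\phi(\theta)-\theta$ is strictly decreasing; since $\phi(\theta^\star)-\theta^\star=0$, strict monotonicity forces $\phi(\theta)-\theta>0$ on $(\theta^\star-\delta,\theta^\star)$ and $\phi(\theta)-\theta<0$ on $(\theta^\star,\theta^\star+\delta)$. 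Combined with the second hypothesis (the $\phi^2$ condition on $[\bar\theta,\theta^\star)$, which I would intersect with the $\delta$-neighborhood), the only missing ingredient for Lemma~\ref{lem:asy_stable_fixed_point} is a \emph{positively invariant} interval. I would build it by a standard shrinking argument: pick $\theta_1<\theta^\star<\theta_2$ close enough that $\phi(\theta)>\theta$ on $[\theta_1,\theta^\star)$ and $\phi(\theta)<\theta$ on $(\theta^\star,\theta_2]$; then for any $\theta\in[\theta_1,\theta^\star)$ we have $\theta<\phi(\theta)$, and I claim $\phi(\theta)\le\theta_2$ — if not, by continuity $\phi$ would cross $\theta^\star$ then exceed $\theta_2$, contradicting $\phi(\theta)<\theta\le\theta_2$ whenever the image lands in $(\theta^\star,\theta_2]$; more carefully one chooses $\theta_1,\theta_2$ symmetric enough (using $\theta^\star$ isolated and continuity) that $\phi([\theta_1,\theta_2])\subseteq[\theta_1,\theta_2]$. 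With invariance secured, Lemma~\ref{lem:asy_stable_fixed_point} yields asymptotic stability and that $[\theta_1,\theta_2]$ lies in the region of convergence.

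For the \emph{stability (non-asymptotic)} case, where $\phi(\theta)-\theta$ is only non-increasing and $\phi^2(\theta)\ge\theta$ weakly, I would either (i) record the obvious variant of Lemma~\ref{lem:asy_stable_fixed_point} with non-strict inequalities, whose proof is identical except that the monotone subsequences of Claim~(b) need only be non-strictly monotone and may become eventually constant (equal to some fixed point of $\phi$ or $\phi^2$, all of which, by $\theta^\star$ isolated and the weak $\phi^2$ condition, either equal $\theta^\star$ or stay within the chosen interval), or (ii) argue directly: the non-increasing hypothesis gives $\phi(\theta)\ge\theta$ just left of $\theta^\star$ and $\phi(\theta)\le\theta$ just right of it, so a small interval around $\theta^\star$ is invariant and every orbit starting in it stays in it, which is exactly Lyapunov stability. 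Option (i) is cleaner and I would go with it.

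For the \emph{instability} claim I would argue by contradiction. Suppose $\theta^\star$ is stable but there is \emph{no} neighborhood on which $\phi(\theta)-\theta$ is non-increasing. Stability gives, for every $\varepsilon$, a $\delta$ with $\phi^k(\theta)\in(\theta^\star-\varepsilon,\theta^\star+\varepsilon)$ for all $k$ whenever $|\theta-\theta^\star|<\delta$. The failure of monotonicity near $\theta^\star$, together with $\theta^\star$ being isolated and continuity, should let me find a point $\theta'$ arbitrarily close to $\theta^\star$ on, say, the left, with $\phi(\theta')>\theta'$ and in fact $\phi(\theta')>\theta^\star$ — i.e. the orbit jumps to the right side — and similarly a companion point on the right that maps back across; iterating produces an orbit that does not stay in a prescribed small neighborhood, contradicting stability. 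The \textbf{main obstacle} I anticipate is this last step: extracting a genuinely escaping orbit purely from the hypothesis ``$\phi(\theta)-\theta$ is not non-increasing on any neighborhood,'' because that hypothesis by itself only says the difference goes \emph{up} somewhere, not that it overshoots $\theta^\star$. Making the escape rigorous will require carefully exploiting that $\theta^\star$ is isolated (so $\phi(\theta)-\theta$ has a strict sign on each side in a small enough punctured neighborhood) to convert ``increasing somewhere'' into ``image crosses to the other side,'' and then propagating that crossing under iteration without it being damped out — likely by choosing the test point to be a local maximizer of $\phi(\theta)-\theta$ on a one-sided neighborhood.
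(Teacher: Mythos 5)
Your treatment of the stability and asymptotic-stability claims follows essentially the same route as the paper: deduce the sign pattern of $\phi(\theta)-\theta$ on either side of $\theta^{\star}$ from the monotonicity hypothesis, manufacture a small positively invariant interval around $\theta^{\star}$, and hand the asymptotic case to Lemma~\ref{lem:asy_stable_fixed_point}. The paper does exactly this, building the interval $\left[ \min_{\theta \in B_\delta(\theta^{\star})} \phi(\theta), \max_{\theta \in B_\delta(\theta^{\star})} \phi(\theta) \right]$ and arguing its invariance as in that lemma; your shrinking construction is the same device, and your option (i) (a non-strict variant of the lemma) is what the paper implicitly uses for plain stability. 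These parts are fine.

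The instability claim is where your proposal has a genuine gap --- one you flag yourself --- and where the paper takes a different, simpler route. You try to produce an orbit that overshoots $\theta^{\star}$ and bounces back and forth across it, and you correctly observe that the hypothesis that $\phi(\theta)-\theta$ fails to decrease on every neighborhood does not by itself force any overshoot; choosing a local maximizer of $\phi(\theta)-\theta$ will not close this. The missing idea is a one-sided sign analysis followed by a \emph{monotone} escape, with no crossing of $\theta^{\star}$ at all. Since $\theta^{\star}$ is isolated (and, by Remark~\ref{rem:finite_fps}, the ``angle'' map has a bounded number of fixed points), $\phi(\theta)-\theta$ is continuous and nonvanishing on small punctured one-sided neighborhoods, hence of constant sign on each side by the intermediate value theorem. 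Reading ``$\phi(\theta)-\theta$ decreases through zero at $\theta^{\star}$'' as the sign pattern positive-on-the-left and negative-on-the-right (which is the reading the paper intends), its failure means that either $\phi(\theta)-\theta<0$ on some $[\bar{\theta}_1,\theta^{\star})$ or $\phi(\theta)-\theta>0$ on some $(\theta^{\star},\bar{\theta}_2]$. On that side every orbit moves monotonically \emph{away} from $\theta^{\star}$ for as long as it remains in the interval, and it cannot remain there forever: a monotone bounded sequence would converge, by continuity of $\phi$, to a fixed point inside the punctured interval, of which there are none. Hence points arbitrarily close to $\theta^{\star}$ have orbits that exit a fixed $\epsilon$-neighborhood, which is instability. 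Replacing your ``jump across and iterate'' plan with this argument closes the gap.
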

  	
  \begin{proof}
    We first prove the claim on stability of the fixed point. For each
    $\epsilon > 0$, we can choose $\delta > 0$ small enough (compared
    to $\theta^* - \bar{\theta}$) such that $\phi(\theta)-\theta$
    decreases in $B_\delta(\theta^*)$, $\phi^2(\theta) \ge \theta$ for
    all $\theta \in [\theta^{\star} - \delta, \theta^*)$ and
    \begin{equation*}
      M_\epsilon := \left[ \min_{\theta \in B_\delta(\theta^*) }\{
        \phi(\theta) \}, \max_{\theta \in B_\delta(\theta^*) }\{
        \phi(\theta) \} \right] \in B_\epsilon(\theta^*) .
    \end{equation*}
    The last condition is possible because $\phi(.)$ is continuous and
    $\phi(\theta^*) = \theta^*$. Now, we can show that $M_\epsilon$ is
    positively invariant by similar arguments as in
    Lemma~\ref{lem:asy_stable_fixed_point}. Thus, $\theta^{\star}$ is
    a stable fixed point.

    For the claim on asymptotic stability, notice that for each
    $\epsilon > 0$, we can again construct a neighborhood around
    $\theta^*$ such that the conditions of
    Lemma~\ref{lem:asy_stable_fixed_point} are satisfied. Thus,
    $\theta^{\star}$ is an asymptotically stable fixed point.
  		
    Now, suppose there does not exist a neighborhood of
    $\theta^{\star}$ in which $\phi(\theta)-\theta$ decreases. Note
    that, according to Remark~\ref{rem:finite_fps}, the \emph{angle map} has
    a finite number of fixed points.  Then, atleast one of the
    following two conditions is true. 1) There exists
    $\bar{ \theta}_1 < \theta^{\star}$ such that
    $\phi(\theta)-\theta < 0$ for all
    $\theta \in [\bar{ \theta}_1, \theta^{\star})$ or 2) there exists
    $\bar{ \theta}_2 > \theta^{\star}$ such that
    $\phi(\theta)-\theta > 0$ for all
    $\theta \in (\theta^{\star},\bar{ \theta}_2]$. In both the cases,
    we can show the existence of an $\epsilon >0$ such that for any
    $\delta \in (0,\epsilon]$, there exists $\theta_0$ in the
    $\delta-$neighborhood of $\theta^{\star}$ such that the sequence
    $\{\theta_k\}_{k \in \natz}$ generated by the $\phi(.)$ map exits
    the $\epsilon-$neighborhood of $\theta^{\star}$ for some
    $k \in \nat$. This implies that $\theta^{\star}$ is an unstable
    fixed point.
    \end{proof}
  	
  	Note that Lemma~\ref{lem:asy_stable_fixed_point} and Theorem~\ref{thm:general_angle_map_behavior} only require
  	the function $\phi(.)$ to be continuous and not differentiable,
  	unlike the existing results in the literature on the stability of fixed points of a nonlinear map. As we do not require Assumption~\ref{A:M} in these results, the angle map is not necessarily differentiable, even if it is continuous. Note that, even if Assumption~\ref{A:M} holds, the ``min" 
  	operator in the definition~\eqref{eq:IET_theta} may introduce a point where the 
  	inter-event time function, and hence the angle map, is continuous 
  	but not differentiable.
  	
      \begin{cor}\thmtitle{\emph{angle map} with pairs of stable and unstable
        fixed points} \label{rem:general_angle_map_behavior}
      Consider the planar system~\eqref{eq:system} under the
      event-triggering rule~\eqref{eq:general_tr_rule}. Assume that
      the \emph{angle map} $\phi(.)$ is continuous \new{and $\phi(.)$ has} a
      set of \new{even number of} fixed points $\{\theta_1,\theta_2\ldots,\theta_{2\new{l}}\}$, \new{for some $l \in \nat$}, in
      the interval $[0,\pi)$ where
      $\theta_i < \theta_{i+1} \quad \forall i \in
      \{1,2,\ldots,2\new{l}-1\}$. Let $\phi(\theta)>\theta$ for all
      $\theta \in (\theta_{2i-1},\theta_{2i})$ and
      $\phi(\theta)<\theta$ for all
      $\theta \in (\theta_{2i},\theta_{2i+1})$ for all
      $i \in \{1,2,\ldots,\new{l}\}$ where
      $\theta_{2\new{l}+1}=\theta_1+\pi$. Then $\theta_{2i-1}$ is an
      unstable fixed point of the \emph{angle map}
      $\forall i \in \{1,2,..,\new{l}\}$. Assume also that the interval
      $[\theta_{2i-1},\theta_{2i+1}]$ is invariant under the \emph{angle map}
      $\forall i \in \{1,2,..\new{l}\}$. If $\phi^2(\theta)>\theta$ for all
      $\theta \in (\theta_{2i-1},\theta_{2i})$, then $\theta_{2i}$ is
      an asymptotically stable fixed point and the region of
      convergence is $(\theta_{2i-1},\theta_{2i+1})$ for all
      $ i \in \{1,2,\ldots,\new{l}\}$. \remend
    \end{cor}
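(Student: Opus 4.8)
The plan is to derive this corollary as a fairly direct application of Theorem~\ref{thm:general_angle_map_behavior} and Lemma~\ref{lem:asy_stable_fixed_point}, applied $m$ times --- once around each pair $(\theta_{2i-1}, \theta_{2i})$. The sign pattern of $\phi(\theta) - \theta$ is prescribed by the hypotheses: it is positive on $(\theta_{2i-1}, \theta_{2i})$ and negative on $(\theta_{2i}, \theta_{2i+1})$, so at each odd-indexed fixed point $\theta_{2i-1}$ the quantity $\phi(\theta)-\theta$ goes from negative (just to the left, on $(\theta_{2i-2}, \theta_{2i-1})$) to positive (just to the right), i.e. it is \emph{increasing} through the fixed point, hence there is no neighborhood on which $\phi(\theta) - \theta$ decreases. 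By the last assertion of Theorem~\ref{thm:general_angle_map_behavior}, this makes $\theta_{2i-1}$ unstable. One bookkeeping point: this argument uses the sign of $\phi - \theta$ on the interval \emph{immediately to the left} of $\theta_{2i-1}$, which is $(\theta_{2i-2}, \theta_{2i-1})$ for $i \ge 2$ and, for $i = 1$, is $(\theta_{2m}, \theta_1) = (\theta_{2m}, \theta_{2m+1} - \pi)$; periodicity of $\phi$ (it is a map on $R^1$) and of the sign pattern --- $\phi(\theta) < \theta$ on $(\theta_{2m}, \theta_{2m+1})$ with $\theta_{2m+1} = \theta_1 + \pi$ --- closes that case, so the instability claim holds uniformly for all $i \in \{1, \dots, m\}$.

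For the asymptotic stability of the even-indexed fixed points $\theta_{2i}$, the idea is to check the hypotheses of Lemma~\ref{lem:asy_stable_fixed_point} on the interval $[\bar\theta_1, \bar\theta_2] = [\theta_{2i-1}, \theta_{2i+1}]$. The first two bullets of that lemma are immediate: $\theta_{2i} \in (\theta_{2i-1}, \theta_{2i+1})$, and by hypothesis $\phi(\theta) > \theta$ on $M_1 = [\theta_{2i-1}, \theta_{2i})$ while $\phi(\theta) < \theta$ on $M_2 = (\theta_{2i}, \theta_{2i+1}]$ --- here we use that the endpoints $\theta_{2i-1}, \theta_{2i+1}$ are themselves fixed points so the weak/strict distinction at the endpoints is harmless (at $\theta_{2i-1}$ we have $\phi(\theta_{2i-1}) = \theta_{2i-1}$, and we only need $\phi(\theta) \ge \theta$ there, which holds; similarly at $\theta_{2i+1}$). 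Actually, to match the lemma's statement precisely one may take $M_1 = [\theta_{2i-1}, \theta_{2i})$ and note $\phi(\theta_{2i-1}) = \theta_{2i-1}$ is consistent with "$\phi(\theta) > \theta$" being relaxed at the single fixed endpoint, or equivalently shrink to $[\theta_{2i-1} + \epsilon, \theta_{2i+1} - \epsilon]$ and pass to the limit; I would state it the clean way. The third bullet, $\phi^2(\theta) > \theta$ on $M_1$, is exactly the extra hypothesis assumed in the corollary. The fourth bullet, positive invariance of $[\theta_{2i-1}, \theta_{2i+1}]$ under $\phi$, is also assumed directly. So Lemma~\ref{lem:asy_stable_fixed_point} applies and yields that $\theta_{2i}$ is asymptotically stable with $[\theta_{2i-1}, \theta_{2i+1}]$ contained in its region of convergence.

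To get the region of convergence as the \emph{open} interval $(\theta_{2i-1}, \theta_{2i+1})$ rather than just a subset, I would argue both inclusions. The inclusion "$\supseteq$" (every point of the open interval converges to $\theta_{2i}$) follows from Lemma~\ref{lem:asy_stable_fixed_point}'s conclusion that $[\theta_{2i-1}, \theta_{2i+1}]$ lies in the region of convergence. The inclusion "$\subseteq$" (nothing outside converges) follows because $\theta_{2i-1}$ and $\theta_{2i+1}$ are fixed points: a trajectory starting at $\theta_{2i-1}$ or $\theta_{2i+1}$ stays there and does not converge to $\theta_{2i}$, and a trajectory starting outside $[\theta_{2i-1}, \theta_{2i+1}]$ cannot enter the open interval --- to do so it would have to pass through one of the fixed endpoints $\theta_{2i-1}$ or $\theta_{2i+1}$ (by continuity and the intermediate value theorem on $R^1$, using that the fixed points are isolated and $\phi$ is continuous), but a point landing exactly on a fixed point is then trapped there. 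Hence the region of convergence is precisely $(\theta_{2i-1}, \theta_{2i+1})$.

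The main obstacle, and really the only delicate point, is the boundary/periodicity bookkeeping: making sure the sign pattern of $\phi(\theta) - \theta$ on the interval immediately to the left of $\theta_1$ wraps around correctly (via the convention $\theta_{2m+1} = \theta_1 + \pi$ and the $\pi$-periodicity implicit in working on $R^1$), and handling the weak-versus-strict inequalities at the fixed endpoints when invoking Lemma~\ref{lem:asy_stable_fixed_point} (whose statement is written with $\phi(\theta) > \theta$ strictly on $M_1$). Both are resolved by noting that the endpoints are fixed points, so the "missing" strictness at a single point is exactly consistent with $\phi$ fixing that point; none of the monotone-subsequence arguments in the proof of Lemma~\ref{lem:asy_stable_fixed_point} are affected. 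Everything else is a mechanical instantiation of the two previously proved results, repeated over $i = 1, \dots, m$.
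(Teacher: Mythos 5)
Your proposal is correct and takes exactly the same route as the paper: the paper's entire proof is the one-line statement that the result ``follows directly from Lemma~\ref{lem:asy_stable_fixed_point} and Theorem~\ref{thm:general_angle_map_behavior}.'' Your writeup is in fact more careful than the paper's, since you explicitly handle the periodic wrap-around at $\theta_1$ and the weak-versus-strict inequalities at the fixed endpoints when instantiating the lemma --- details the paper leaves implicit.
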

  	
    \begin{proof}
    	Note that, for all $i \in \{1,2,\ldots \new{l}\}$, $\theta_{2i}$ satisfies the conditions of Lemma~\ref{lem:asy_stable_fixed_point} and $\theta_{2i-1}$ satisfies the instability conditions of Theorem~\ref{thm:general_angle_map_behavior}.
      Thus, proof of this result follows directly from
      Lemma~\ref{lem:asy_stable_fixed_point} and
      Theorem~\ref{thm:general_angle_map_behavior}.
    \end{proof}

 \new{In numerical examples, we have often observed that the \emph{angle map}
has even number of fixed points in the interval
$[0,\pi)$. In corollory~\ref*{rem:general_angle_map_behavior}, we provide some analytical guarantees for this behavior.}
   
  Remark~\ref{rem:IET-analysis} and Theorem~\ref{thm:IET-analysis} help to
  analyze the evolution of inter-event times under the general class
  of event-triggering rules~\eqref{eq:general_tr_rule}. But, it is difficult to say
  anything more specific that holds for all the triggering
  rules. Thus, in the following subsection, we consider the specific
  event-triggering rule~\eqref{eq:tr2}, or
  equivalently~\eqref{eq:general_tr_rule} with $M(.) = M_2(.)$ given
  in~\eqref{eq:M2}. We analyze the inter-event times that are
  generated by this rule for the planar system~\eqref{eq:system}.
  
  \subsection{Analysis of fixed points of $\phi(.)$ with %
    $M(.)=M_2(.)$}

  Here, our goal is to provide necessary and sufficient conditions for the existence
  of a fixed point for the \emph{angle map} $\phi(.)$ under the specific
  event-triggering rule~\eqref{eq:tr2} or
  equivalently~\eqref{eq:general_tr_rule} with $M(.) = M_2(.)$ given
  in~\eqref{eq:M2}. First, in the following lemma, we present a
  necessary and sufficient condition on a function of time that must be satisfied if
  the \emph{angle map} is to have a fixed point. Building on this lemma,
  we then present an algebraic necessary condition.
  
  \begin{lem}\thmtitle{Necessary and sufficient condition for the \emph{angle map} to have a
      fixed point under triggering
    rule~\eqref{eq:tr2}} \label{lem:fixed_point}
  Consider the planar system~\eqref{eq:system} under
  the event-triggering rule~\eqref{eq:tr2} or
  equivalently~\eqref{eq:general_tr_rule} with $M(.) = M_2(.)$ given
  in~\eqref{eq:M2}. Suppose that 
  the parameter $\sigma \in (0, 1)$ is such that the origin of the
  closed loop system is globally asymptotically stable. Then, there
  exists a fixed point for the \emph{angle map} $\phi(.)$ if and only
  if $\det(L(\tau))=0$ for some $\tau \in \realp$ and there
    exists $x \neq 0$ in the nullspace of $L(\tau)$ such that
    $\tau_e(x)=\tau$, where
$L(\tau)$ is defined as in~\eqref{eq:L_tau} with $\alpha=(1+\sigma)^{-1}$.
  \end{lem}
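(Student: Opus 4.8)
The plan is to specialize the general characterization of fixed points in Remark~\ref{rem:finite_fps} to the triggering rule~\eqref{eq:tr2}, the only nontrivial step being to pin down the correct value of the scaling constant $\alpha$ in the definition $L(\tau) := G(\tau) - \alpha I$ of~\eqref{eq:L_tau}. Recall from Remark~\ref{rem:finite_fps} that $\theta$ is a fixed point of $\phi(.)$ if and only if there is an $x \in \nzrealtwo$ with $x(t_k) = x$ forcing $x(t_{k+1}) = G(\tau_e(x))x = \beta x$ for some $\beta \in \nzreal$, i.e.\ $\det(G(\tau) - \beta I) = 0$ and $x$ lies in the corresponding nullspace, with $\tau = \tau_e(x)$. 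So the content of the lemma is to show that for rule~\eqref{eq:tr2} the only scaling factor $\beta$ that can occur at a fixed point is $\beta = (1+\sigma)^{-1}$, after which the stated biconditional is just a restatement of Remark~\ref{rem:finite_fps} with $\alpha = (1+\sigma)^{-1}$.

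First I would establish the forward direction. Suppose $\theta$ is a fixed point with representative $x = x_\theta$ and inter-event time $\tau = \tau_s(\theta) = \tau_e(x)$; then $G(\tau)x = \beta x$ for some $\beta \neq 0$. The triggering rule~\eqref{eq:tr2} fires at $\tau$ exactly when $\norm{x - G(\tau)x} = \sigma \norm{G(\tau)x}$, and substituting $G(\tau)x = \beta x$ gives $\norm{(1-\beta)x} = \sigma \norm{\beta x}$, i.e.\ $|1-\beta| = \sigma|\beta|$. Now I would argue that $\beta \in (0,1)$: since $A_c$ is Hurwitz and the rule renders the origin globally asymptotically stable, along the fixed-point trajectory the state satisfies $x(t_{k+1}) = \beta x(t_k)$, so $\norm{x(t_k)} = |\beta|^k \norm{x(t_0)} \to 0$ forces $|\beta| < 1$; and $\beta < 0$ is ruled out because the solution $x(t) = G(t)x_\theta$ is continuous in $t$ on $[0,\tau]$ with $x(0) = x_\theta$, so $G(\tau)x_\theta$ cannot be a negative multiple of $x_\theta$ without $f_s$ having had an earlier zero --- more carefully, one can invoke the geometry of the triggering condition and Corollary~\ref{cor:f_num_roots} to exclude $\beta \le 0$, or simply note that $|1-\beta| = \sigma|\beta|$ with $\sigma \in (0,1)$ has no solution with $\beta \le 0$ (for $\beta \le 0$ we would need $1 - \beta = -\sigma\beta$, i.e.\ $1 = \beta(1-\sigma) \le 0$, a contradiction). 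Hence $1 - \beta = \sigma\beta$, giving $\beta = (1+\sigma)^{-1}$, and so $\det(L(\tau)) = \det(G(\tau) - (1+\sigma)^{-1}I) = 0$ with $x \in \ker L(\tau)$ and $\tau_e(x) = \tau$, as claimed.

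For the converse, suppose $\det(L(\tau)) = 0$ for some $\tau \in \realp$ with $\alpha = (1+\sigma)^{-1}$, and let $x \neq 0$ satisfy $L(\tau)x = 0$ and $\tau_e(x) = \tau$. Then $G(\tau)x = (1+\sigma)^{-1}x$, so $\norm{x - G(\tau)x} = \norm{(1 - (1+\sigma)^{-1})x} = \frac{\sigma}{1+\sigma}\norm{x} = \sigma \norm{G(\tau)x}$, which is exactly the triggering condition~\eqref{eq:tr2} at time $\tau$; since moreover $\tau_e(x) = \tau$, this is the \emph{first} time it fires, so if $x(t_k) = x$ then $t_{k+1} - t_k = \tau$ and $x(t_{k+1}) = (1+\sigma)^{-1}x$, which has the same angle as $x$. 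Thus $\theta = \text{arg}(x)$ is a fixed point of $\phi(.)$. The main obstacle, as flagged above, is the sign/magnitude argument on $\beta$ in the forward direction; once the algebraic identity $|1-\beta| = \sigma|\beta|$ is in hand this is short, but it does genuinely use the standing hypothesis that $\sigma \in (0,1)$ and that the closed loop is globally asymptotically stable.
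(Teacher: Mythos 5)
Your proposal is correct and follows essentially the same route as the paper's proof: characterize a fixed point as $G(\tau)x=\beta x$ with $\tau=\tau_e(x)$, rule out $\beta\le 0$ and $|\beta|\ge 1$ using the triggering condition and global asymptotic stability, and then solve $|1-\beta|=\sigma|\beta|$ to get $\beta=(1+\sigma)^{-1}$, which is exactly the nullspace condition on $L(\tau)$. The only cosmetic difference is that you derive the sign restriction from the algebraic identity $1=\beta(1-\sigma)$ while the paper phrases it as $\norm{x(t_k)-x(t_{k+1})}>\norm{x(t_{k+1})}$; these are the same computation.
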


  \begin{proof}
    There exists a fixed point for the $\phi(\theta)$ map if and
      only if there exists an $x \in \nzrealtwo$ such that $x(t_k)=x$
    implies $x(t_{k+1})=\alpha x$ for some $\alpha > 0$. Note
    that $\alpha$ cannot be negative because then
    $\norm{ x(t_k) - x(t_{k+1}) } = (1 - \alpha^{-1})
    \norm{x(t_{k+1})} > \norm{x(t_{k+1})}$, which is not possible for
    the event-triggering rule~\eqref{eq:tr2} with $\sigma \in (0,
    1)$. Further, if $\alpha > 1$ then for the initial condition
    $x(t_0) = x$, $x(t_{k}) = \alpha^k x$ would grow unbounded, which
    violates the assumption that $\sigma$ is such that the
    event-triggering rule~\eqref{eq:tr2} guarantees global asymptotic
    stability. Thus, it must be that $\alpha \in (0,1)$. Using this
    information, from the event-triggering rule~\eqref{eq:tr2}, we
    obtain $\alpha=(1+\sigma)^{-1}$. Now, we can express
    \begin{equation*}
      x(t_{k+1}) = G(\tau') x(t_k) = \alpha x(t_k),
    \end{equation*}
    where $\tau'=\tau_e(x(t_k))$. This is possible if and only if
      $\tau'=\tau_e(x(t_k))$ and $L(\tau')x(t_k) = 0$. In this
      case, $\det(L(\tau'))=0$.
  \end{proof}

\new{Note that Lemma~\ref{lem:fixed_point} is an extension of Lemma 11 
in our conference paper~\cite{AR-PT:2020}, where we only provide a 
necessary condition for the angle map to have a fixed point under the 
triggering rule~\eqref{eq:tr2}. Lemma~\ref{lem:fixed_point} is similar 
to Proposition 6 in~\cite{GG-MM:2022}, but not the same. Notice from 
the proof of Lemma~\ref{lem:fixed_point} that $\det(L(\tau'))=0$ 
(equivalently that $G(\tau')$ has eigenvalue $\alpha$) is not 
sufficient for the angle map to have a fixed point. This is because for 
an $x \neq 0$ in the nullspace of $L(\tau')$, $f(x, \tau) = 0$ may have 
multiple solutions $\tau$ and hence $\tau_e(x)$ may be strictly less 
than $\tau'$. This subtlety is not addressed in Proposition 6 
in~\cite{GG-MM:2022} or its proof.}

  While Lemma~\ref{lem:fixed_point} provides a necessary and sufficient condition for
  the \emph{angle map} $\phi(.)$ to have a fixed point, it may not be easy to
  verify if $\det(L(\tau))=0$ for some $\tau \in [\tmin,
  \tmax]$. Thus, we next present an algebraic necessary condition for
  the existence of fixed points for the \emph{angle map} $\phi(.)$.

  \begin{prop}\thmtitle{Algebraic necessary condition for the \emph{angle map}
      to have a fixed point under triggering
    rule~\eqref{eq:tr2}} \label{prop:nec-cond-fixed-point}
  Consider the planar system~\eqref{eq:system} under
  the event-triggering rule~\eqref{eq:tr2} or
  equivalently~\eqref{eq:general_tr_rule} with $M(.) = M_2(.)$ given
  in~\eqref{eq:M2}. Suppose that the parameter $\sigma \in (0, 1)$ is such that the origin of
  the closed loop system is globally asymptotically stable. Further,
  assume that both the eigenvalues of $A$ have positive real
  parts. Let $A \rdef S J S^{-1}$, where $J \in \real^{2 \times 2}$ is
  the real Jordan form of $A$. Let
  \begin{equation*}
    R \ldef S^{-1} \left[ I - (1-\alpha)A A_c^{-1} \right] S \quad \text{with} \quad \alpha=(1+\sigma)^{-1},
  \end{equation*}
  \begin{equation*}
    \sigma_m(\tau) \ldef \e^{\lambda \tau} \sqrt{\frac{(\tau^2+2)-\tau
        \sqrt{\tau^2+4}}{2}} .
  \end{equation*}
  Then, there exists a fixed point for the \emph{angle map} $\phi(.)$ only if
  \begin{itemize}
  	\setlength{\itemsep}{3mm}
  \item $\norm{R} > 1$, if either $A$ is non-diagonalizable with
    eigenvalue $\lambda \ge 0.5$ or $A$ is diagonalizable.
  \item
    $\norm{R} \ge \sigma_m\left( \sqrt{\frac{1}{\lambda^2}-4} \right)
    $, if $A$ is non-diagonalizable with eigenvalue
    $\lambda \in (0, 0.5)$.
  \end{itemize}
  \end{prop}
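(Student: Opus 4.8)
The plan is to build directly on Lemma~\ref{lem:fixed_point}, which says that a fixed point of $\phi(.)$ exists precisely when $G(\tau)x = \alpha x$ for some $x \neq 0$ and $\tau = \tau_e(x) > 0$, with $\alpha = (1+\sigma)^{-1}$. Since both eigenvalues of $A$ have positive real parts, $A$ is invertible and I would substitute $G(\tau) = I + A^{-1}(\e^{A\tau}-I)A_c$; rearranging $G(\tau)x = \alpha x$ then yields $\e^{A\tau}A_c x = \big(A_c - (1-\alpha)A\big)x$. Setting $z := A_c x$, which is nonzero because $A_c$ is Hurwitz hence invertible, this becomes $\e^{A\tau} z = \big(I - (1-\alpha)AA_c^{-1}\big) z$. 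Using $A = SJS^{-1}$ and writing $w := S^{-1}z \neq 0$, the equation is exactly $\e^{J\tau} w = R w$. Thus existence of a fixed point is equivalent to $\e^{J\tau}w = Rw$ holding for some $\tau > 0$ and some $w \ne 0$, which is the point of introducing $R$.

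From there I would take norms: $\norm{R}\,\norm{w} \ge \norm{Rw} = \norm{\e^{J\tau}w} \ge \sigma_{\min}(\e^{J\tau})\,\norm{w}$, where $\sigma_{\min}(\cdot)$ is the least singular value. Hence a necessary condition for a fixed point is $\norm{R} \ge \sigma_{\min}(\e^{J\tau})$ for the relevant $\tau > 0$, and therefore $\norm{R} \ge \inf_{\tau>0}\sigma_{\min}(\e^{J\tau})$ (using the full range $\tau>0$ keeps the condition valid, if not necessarily tight). It remains to evaluate this quantity according to the real Jordan structure of $A$. If $A$ is diagonalizable (real or complex eigenvalues), $\e^{J\tau}$ is $\mathrm{diag}(\e^{\lambda_1\tau},\e^{\lambda_2\tau})$ or $\e^{a\tau}$ times a rotation, so $\sigma_{\min}(\e^{J\tau}) = \e^{\mu\tau}$ with $\mu>0$ the smaller of the (real parts of the) eigenvalues; since this exceeds $1$ for every $\tau>0$, we get $\norm{R}>1$. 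If $A$ is non-diagonalizable, then $J = \lambda I + N$ with $N$ nilpotent, so $\e^{J\tau} = \e^{\lambda\tau}\left[\begin{smallmatrix}1 & \tau\\0 & 1\end{smallmatrix}\right]$, and diagonalizing the Gram matrix of $\left[\begin{smallmatrix}1 & \tau\\0 & 1\end{smallmatrix}\right]$ gives exactly $\sigma_{\min}(\e^{J\tau}) = \sigma_m(\tau)$.

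The last and most delicate step is to minimize $\sigma_m(\tau)$ over $\tau>0$, which I expect to be the main technical point. The key calculation is that with $h(\tau) := (\tau^2+2) - \tau\sqrt{\tau^2+4}$ one gets the clean identity $h'(\tau)/h(\tau) = -2/\sqrt{\tau^2+4}$, so that $\tfrac{\dd}{\dd\tau}\log\sigma_m(\tau) = \lambda - 1/\sqrt{\tau^2+4}$. When $\lambda \ge 1/2$ this is nonnegative on $[0,\infty)$, so $\sigma_m$ is nondecreasing with $\sigma_m(0)=1$, giving $\sigma_m(\tau)>1$ for $\tau>0$ and hence $\norm{R}>1$, consistent with the diagonalizable case. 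When $\lambda\in(0,1/2)$ the derivative vanishes at $\tau^\star = \sqrt{\lambda^{-2}-4}>0$, which is the global minimizer of $\sigma_m$ on $(0,\infty)$, so the necessary condition becomes $\norm{R} \ge \sigma_m\!\big(\sqrt{\lambda^{-2}-4}\big)$. Collecting the cases yields the statement; I would also verify the boundary behaviour ($\sigma_m(0)=1$ and $\sigma_m(\tau)\to\infty$ as $\tau\to\infty$) to confirm no case is missed.
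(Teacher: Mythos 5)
Your proposal is correct and follows essentially the same route as the paper: both reduce the fixed-point condition from Lemma~\ref{lem:fixed_point} to the eigen-type equation $\e^{J\tau}w = Rw$ (the paper multiplies $L(\tau)x_0=0$ by $A$ rather than expanding $G(\tau)$ via $A^{-1}$, which is algebraically the same step), then bound $\norm{R}\ge\sigma_{\min}(\e^{J\tau})$ and minimize the smallest singular value case by case over the real Jordan forms. Your explicit logarithmic-derivative computation for $\sigma_m(\tau)$ supplies detail the paper only asserts, but it confirms rather than alters the argument.
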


  \begin{proof}
    First note that
    \begin{equation*}
      AL(\tau) = (1 - \alpha)A + (\e^{A \tau} - I) A_c .
    \end{equation*}
    Suppose there exists a fixed point for the $\phi(\theta)$ map.
    Then by Lemma~\ref{lem:fixed_point}, we know that there exists a
    $\tau \in \realp$ such that $L(\tau) x_0 = 0$ for some
    $x_0 \in \nzrealtwo$. This implies that $AL(\tau) x_0 = 0$
      for some $x_0 \in \nzrealtwo$ and $\tau \in \realp$.  However
    this is equivalent to saying
    \begin{equation*}
      \left( \e^{A \tau} - I \right) z_0 = - (1 - \alpha) A A_c^{-1}
      z_0, \quad z_0 = A_c x_0 .
    \end{equation*}
    Note that $A_c$ is invertible because we have assumed it is
    Hurwitz. Thus, there exists a vector $v \ldef S^{-1} A_c x_0$ such
    that
    \begin{equation}\label{eq:L_R}
      \e^{J \tau} v = R v, \ \text{for some }
      \tau > 0 .
    \end{equation}
    Note that
    $\norm{ \e^{J \tau} v } \ge \sigma_{min}(\e^{J \tau})\norm{v}$,
    where $\sigma_{min}(\e^{J \tau})$ denotes the minimum singular
    value of $\e^{J \tau}$. Thus,
    $\norm{R} \ge \sigma_{min}(\e^{J \tau})$ for some $\tau>0$. Recall
    that we assumed that both the eigenvalues of $A$ have positive
    real parts. We can show that if $A$ is diagonalizable and has real
    positive eigenvalues, then
    $\sigma_{min}(\e^{J \tau})=\e^{\lambda_1 \tau}$ where $\lambda_1$
    is the minimum eigenvalue of $A$. Similarly, if $A$ has complex
    conjugate eigenvalues then
    $\sigma_{min}(\e^{J \tau}) = \e^{\mu \tau}$ where $\mu > 0$ is the
    real part of the eigenvalues. On the other hand, if $A$ is
    non-diagonalizable with eigenvalue $\lambda$, then
    $\sigma_{min}(\e^{J \tau})=\e^{\lambda \tau}
    \sqrt{\frac{(\tau^2+2)-\tau \sqrt{\tau^2+4}}{2}}\rdef
    \sigma_m(\tau)$. If $A$ is non-diagonalizable with eigenvalue
    $\lambda \ge 0.5$ or if $A$ is diagonalizable,
    $\sigma_{min}(\e^{J \tau})$ is a monotonically increasing function
    of $\tau$. Thus, $\sigma_{min}(\e^{J \tau})>1$ for all
    $\tau>0$. If $A$ is non-diagonalizable with eigenvalue
    $\lambda \in (0, 0.5)$, $\sigma_{min}(\e^{J \tau})$ attains a
    minimum value when $\tau=\sqrt{\frac{1}{\lambda^2}-4}$. Thus,
    $\sigma_{min}(\e^{J \tau}) \ge \sigma_m \left(
      \sqrt{\frac{1}{\lambda^2}-4} \ \right)$ for all $\tau>0$. This
    completes the proof of the result.
  \end{proof}

  Next we show that the algebraic necessary condition for the \emph{angle map} to have a fixed point under event-triggering rule~\eqref{eq:tr2}
  is always satisfied if $A$ is diagonalizable with eigenvalues having
  positive real parts and $A_c$ has real negative eigenvalues.

\begin{prop} \label{lem:nec-cond-fixed-point-Ac-with-real-evs}%
  Consider planar system~\eqref{eq:system} under the event-triggering
  rule~\eqref{eq:tr2} or equivalently~\eqref{eq:general_tr_rule} with
  $M(.) = M_2(.)$ given in~\eqref{eq:Ms}. Suppose that the parameter
  $\sigma \in (0, 1)$ is such that the origin of the closed loop
  system is globally asymptotically stable. Further, assume that both
  the eigenvalues of $A$ have positive real parts and $A$ is
  diagonalizable. Let $A \rdef S J S^{-1}$, where $J$ is the real
  Jordan form of $A$ and let $A_c$ have real negative
  eigenvalues. Then, $\norm{R} > 1$, where
  \begin{equation*}
    R \ldef S^{-1} \left[ I - (1-\alpha)A A_c^{-1} \right] S . 
  \end{equation*}
\end{prop}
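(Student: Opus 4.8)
The plan is to establish $\norm{R}>1$ by producing a single nonzero vector whose image under $R$ has strictly larger Euclidean norm; since $\norm{R}$ equals the largest singular value of $R$, this is enough.

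The first step is to pass to coordinates in which $A$ is in real Jordan form. Writing $\beta \ldef 1-\alpha = \sigma/(1+\sigma)\in(0,1/2)$ (only $\beta>0$ is used) and $\hat{A}_c \ldef S^{-1}A_cS$, and using $S^{-1}AS = J$ and $S^{-1}A_c^{-1}S = \hat{A}_c^{-1}$, one obtains
\begin{equation*}
  R = S^{-1}\left[ I - \beta A A_c^{-1} \right] S = I - \beta J \hat{A}_c^{-1} .
\end{equation*}
Note $\hat{A}_c$ is similar to $A_c$, hence invertible (as $A_c$ is Hurwitz) and with the same real negative eigenvalues. Two structural facts then do the work. (i) Since $A$ is diagonalizable and both its eigenvalues have positive real part, the symmetric part of its real Jordan form is positive definite: if the eigenvalues are real then $J = \mathrm{diag}(\lambda_1,\lambda_2)$ with $\lambda_i>0$ and $J+J^T = 2J \succ 0$, and if they form a complex pair $a\pm\mathrm{i}b$ then $J=\begin{bmatrix}a & b\\ -b & a\end{bmatrix}$ and $J+J^T = 2aI\succ 0$ with $a>0$. (This is exactly where diagonalizability enters — for a nondiagonalizable $A$ the symmetric part of the Jordan block is indefinite unless the eigenvalue exceeds $1/2$, which is why Proposition~\ref{prop:nec-cond-fixed-point} needs a case split there.) (ii) Since $\hat{A}_c$ has a real negative eigenvalue $\mu$, fix a real eigenvector $w\neq 0$ with $\hat{A}_c w = \mu w$.

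The decisive step is to evaluate $R$ on $u\ldef \hat{A}_c w = \mu w$, which is nonzero. Because $\hat{A}_c^{-1}u = w$, we have $Ru = u - \beta J w = \mu w - \beta J w$, and expanding the squared norm gives
\begin{equation*}
  \norm{Ru}^2 - \norm{u}^2 = -2\beta\mu\, w^T J w + \beta^2 \norm{Jw}^2 .
\end{equation*}
By fact (i), $w^T J w = \tfrac{1}{2} w^T(J+J^T)w > 0$; since $\mu<0$ and $\beta>0$, the first term is strictly positive and the second is nonnegative, so $\norm{Ru}>\norm{u}$ and therefore $\norm{R}\ge \norm{Ru}/\norm{u} > 1$. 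The same computation handles the real-eigenvalue and complex-eigenvalue cases of $A$ simultaneously.

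I expect the choice of the test vector to be the only subtle point. An eigenvalue estimate will not suffice: one can construct admissible pairs $(A,A_c)$ for which the spectral radius $\rho(R)<1$ (for instance when $AA_c^{-1}$ has a repeated eigenvalue in $(0,1]$), so some genuinely non-normal estimate is required. The reason the above vector works is structural: transporting an eigenvector of $\hat{A}_c$ by $\hat{A}_c$ itself makes $\hat{A}_c^{-1}u$ collapse to $w$, which turns the cross term into the single scalar $\mu\, w^T J w$; it is then precisely the sign of $\mu$ (contributed by $A_c$) together with the positive definiteness of the symmetric part of $J$ (contributed by the diagonalizability and positive-real-part eigenvalues of $A$) that forces the strict expansion.
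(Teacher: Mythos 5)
Your proof is correct and follows essentially the same route as the paper's: both evaluate $R$ along the $S^{-1}$-image of a real eigenvector of $A_c$ associated with a negative eigenvalue and combine the sign of that eigenvalue with the positive definiteness of the symmetric part of $J$. The only cosmetic difference is that the paper lower-bounds $\norm{R}$ by the bilinear form $y^TRy$ on that unit vector, whereas you use the ratio $\norm{Ru}/\norm{u}$; both yield the strict inequality for the same structural reasons.
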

\begin{proof}
  Note that the induced 2-norm of matrix $R$ can be expressed as
  $\norm{R}=\sup\{u^TRv:\norm{u}=\norm{v}=1,u,v \in \real^2\}$. Let
  $(\lambda_c,x)$ be an eigen-pair of $A_c$, where $\lambda_c<0$. Let
  $y$ be a unit vector defined as
  $y \ldef \frac{S^{-1}x}{\norm{S^{-1}x}}$. Then,
  \begin{align*}
    \norm{R} &\ge y^TRy = y^Ty-(1-\alpha)y^TJ S^{-1}A_c^{-1}Sy \\
             &=1-\frac{1-\alpha}{\lambda_c}y^TJ y 
               \ge 1-\frac{1-\alpha}{\lambda_c} \lambda >1
  \end{align*}
  where $\lambda=\lambda_{\min}(A) > 0$ if $A$ has real eigenvalues or
  $\lambda=\text{Re}(\lambda(A)) > 0$ if $A$ has complex conjugate
  eigenvalues. For obtaining the last inequality, we have used the
  facts that $\alpha \in (0, 1)$ (see proof of
  Lemma~\ref{lem:fixed_point}) and that $\lambda_c < 0$.
\end{proof}

Next, we present a geometric interpretation of the event-triggering
rule~\eqref{eq:tr2}, which we then use to give bounds on the
difference between an angle $\theta$ and $\phi(\theta)$.

\begin{rem}\label{rem:circle_interpretation} \thmtitle{Geometric
		interpretation of the event-triggering rule~\eqref{eq:tr2}} The
	locus of points $x$ which satisfy the equation
	$\norm{x-\hat{x}}=\sigma \norm{x}$ for a fixed $\hat{x}$ and
	$\sigma$ is a circle with center at $\frac{\hat{x}}{1-\sigma^2}$ and
	radius $\frac{\sigma}{1-\sigma^2}\norm{\hat{x}}$. Also note that
	origin is always outside this circle. Hence, the event-triggering
	rule~\eqref{eq:tr2} ensures that for all $k \in \natz$, $x(t_k)$ and
	$x(t_{k+1})$ satisfy
	$\norm{x(t_{k+1})-\frac{x(t_k)}{1-\sigma^2}}=\frac{\sigma}{1-\sigma^2}
	\norm{x(t_k)}$.  \remend
\end{rem}

This observation leads us to an upper bound for
$|\phi(\theta_{k})-\theta_{k}|, \ \forall k \in \natz$, which we
present in the following result. Note that this bound is useful
  from a computational point of view for determining the fixed points
  of the \emph{angle map}.

\begin{lem}\thmtitle{Upper bound on
    $|\phi(\theta_{k})-\theta_{k}|$} \label{lem:phi-theta_bound}
  Consider the planar system~\eqref{eq:system} under the
  event-triggering rule~\eqref{eq:tr2} or
  equivalently~\eqref{eq:general_tr_rule} with $M(.) = M_2(.)$ given
  in~\eqref{eq:Ms}. Suppose that the parameter $\sigma \in (0, 1)$ is
  such that the origin of the closed loop system is globally
  asymptotically stable. Then the evolution of the ``angle'' of the state
  from one sampling time to the next is uniformly bounded by
  $\sin^{-1}\left(\sigma\right)$. That is,
  $|\phi(\theta_{k})-\theta_{k}|\le \sin^{-1}\left(\sigma\right),
  \quad \forall k \in \natz$.
\end{lem}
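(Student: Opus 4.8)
The plan is to use the geometric interpretation from Remark~\ref{rem:circle_interpretation}. Since $\tau_s$ and $\phi$ are scale-invariant, we may take $x(t_k) = x_{\theta_k}$ with $\norm{x(t_k)} = 1$. By Remark~\ref{rem:circle_interpretation}, the next state $x(t_{k+1})$ lies on the circle centered at $c \ldef x(t_k)/(1-\sigma^2)$ with radius $\rho \ldef \sigma \norm{x(t_k)}/(1-\sigma^2) = \sigma/(1-\sigma^2)$. The quantity $|\phi(\theta_k) - \theta_k|$ is precisely the angle, measured at the origin, between the ray through $x(t_k)$ and the ray through $x(t_{k+1})$. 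So the task reduces to a planar geometry fact: over all points $y$ on the circle $B_\rho(c)$, the maximum angle subtended at the origin between $y$ and the direction of $c$ (equivalently the direction of $x(t_k)$, since $c$ is a positive multiple of $x(t_k)$) is achieved when the ray from the origin is tangent to the circle.

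First I would set up the tangent-line computation. Let $d \ldef \norm{c} = 1/(1-\sigma^2)$ be the distance from the origin to the center of the circle; note $d > \rho$ (equivalently $1 > \sigma^2 \cdot \tfrac{1}{1-\sigma^2}\cdot(1-\sigma^2)$, i.e. the origin is strictly outside the circle, exactly as stated in Remark~\ref{rem:circle_interpretation}). For a point $y$ on the circle, the angle $\psi$ between the ray $\overrightarrow{0y}$ and the ray $\overrightarrow{0c}$ satisfies $\sin\psi \le \rho/d$, with equality exactly at the two tangent points; this is the standard fact that the half-angle of the cone from an external point to a circle has sine equal to (radius)/(distance to center). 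Hence
\begin{equation*}
  |\phi(\theta_k) - \theta_k| \le \arcsin\!\left( \frac{\rho}{d} \right)
  = \arcsin\!\left( \frac{\sigma/(1-\sigma^2)}{1/(1-\sigma^2)} \right)
  = \arcsin(\sigma) = \sin^{-1}(\sigma),
\end{equation*}
which is the claimed bound, uniform in $k$ since $\sigma$ is fixed.

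I would justify the key geometric fact either by a direct argument — drop the perpendicular from $c$ to the ray $\overrightarrow{0y}$; in the right triangle with hypotenuse $\overline{0c}$ of length $d$, the distance from $c$ to that ray is $d\sin\psi$, and for the ray to meet the circle this distance must be $\le \rho$ — or by explicitly parameterizing $y = c + \rho(\cos t, \sin t)$ and maximizing $\psi(t)$, whose stationary condition gives $\overline{cy} \perp \overline{0y}$, i.e.\ tangency. The only mild subtlety, and the one place to be careful, is making sure the relation $\sin\psi \le \rho/d$ really bounds the \emph{unsigned} angle $|\phi(\theta_k)-\theta_k|$: since $\phi$ takes values in $R^1$ and $\arcsin(\sigma) < \pi/2$, the displacement per step is less than a quarter turn, so there is no ambiguity in interpreting $|\phi(\theta_k)-\theta_k|$ as the genuine (smaller) angle between the two rays, and the bound transfers verbatim. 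This is the main — and only real — obstacle; everything else is elementary trigonometry.
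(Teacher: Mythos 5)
Your proposal is correct and follows essentially the same route as the paper: both use the circle interpretation from Remark~\ref{rem:circle_interpretation} and the fact that the maximal angle subtended at the origin is attained at the tangent point, giving $\sin^{-1}(\rho/d)=\sin^{-1}(\sigma)$. Your write-up is in fact somewhat more detailed than the paper's (you make the ratio $\rho/d$ and the unsigned-angle point explicit), but the underlying argument is identical.
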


\begin{proof}
  According to the geometric interpretation of the event-triggering
  rule~\eqref{eq:tr2} provided in
  Remark~\ref{rem:circle_interpretation}, $x(t_{k+1})$ is on the
  circle with center at $\frac{x(t_k)}{1-\sigma^2}$ and radius
  $\frac{\sigma}{1-\sigma^2}\norm{x(t_k)}$. Thus, the angle between
  $x(t_k)$ and $x(t_{k+1})$ is the maximum when $x(t_{k+1})$ is on a
  tangent to this circle that passes through the origin. As a tangent
  to the circle is perpendicular to the radial line passing through
  the point of tangency, the maximum possible angle is exactly equal
  to $\sin^{-1}\left(\sigma\right)$. That is,
  $|\phi(\theta_{k})-\theta_{k}|\le \sin^{-1}\left(\sigma\right), \
  \forall k \in \natz$. 
\end{proof}

\begin{rem}\label{rem:RT_non-convergence}
	In the event-triggered control literature, typically, the relative thresholding parameter $\sigma \in (0,1)$ in the event-triggering rule~\eqref{eq:tr2} is
	such that the origin of the closed loop system is globally
	asymptotically stable. Then,  $|\phi(\theta_{k})-\theta_{k}| \leq 
	\sin^{-1}\left(\sigma\right) < \frac{\pi}{2}, \
	\forall k \in \natz$. According to Theorem~\ref{thm:IET-analysis}, this implies that, the inter-event times converge to a steady state value if and only if the angle map has a fixed point.
\end{rem}

\section{Asymptotic average inter-event time}\label{sec:Asymptotic-avg-inter-event-time}

In this section, we analyze the asymptotic average inter-event time as
a function of the angle of the initial state of the
system~\eqref{eq:system} under the event-triggering
rule~\eqref{eq:general_tr_rule}. First, we study ergodicity of the
\emph{angle map} and then, with the help of ergodic theory, we provide a
sufficient condition for the asymptotic average inter-event time to be
a constant for all non-zero initial conditions of the system
state. Later, with the help of rotation theory, we analyze the
asymptotic behavior of the inter-event times, such as convergence or
non-convergence to a periodic orbit, for a special case where the
\emph{angle map} is an orientation preserving homeomorphism. 
Note that, in this section, we do not provide any fundamentally 
new results. Rather, we invoke the existing results in ergodic theory 
and rotation theory to provide a mathematical explanation for different 
kinds of asymptotic behavior of the inter-event times. We make the
following assumption in this section of the paper.

\begin{enumerate}[resume,label=\textbf{(A\arabic*)},align=left]
\item \hspace{.1em} The inter-event time function $\tau_s(.)$, defined
  as in~\eqref{eq:IET_theta}, is continuous. \label{A:continuous
    tau_s}
\end{enumerate}

Assumption ~\ref{A:continuous tau_s} is not very restrictive as in
Theorem~\ref{thm:ts-continuity} we show, under mild technical
assumptions on $M(\tau)$, that in general the inter-event time
function $\tau_s(.)$ is continuous except at finitely many angles
$\theta$. In Corollary~\ref*{cor:suff-cond-ts-continuous}, we also
provide a sufficient condition under which the function $\tau_s(.)$ is
continuous. Note also that the \emph{angle map} $\phi(.)$ is a continuous
map on a compact metric space under Assumption~\ref{A:continuous
  tau_s}.

\subsection{Ergodicity of the \emph{angle map}}\label{sec:angle-map-ergodicity}

In this subsection, we study about the ergodicity of the \emph{angle map} to analyze the asymptotic average inter-event time as a function
of the initial state of the system.

\begin{rem}\thmtitle{\emph{angle map} is ergodic under Assumption~\ref{A:continuous tau_s}}\label{rem:T_ergodic}
  Consider system~\eqref{eq:system} under the event-triggering
  rule~\eqref{eq:general_tr_rule}. Let Assumption~\ref{A:continuous
    tau_s} hold. Then according to Krylov-Bogolyubov
  theorem~\cite{NK-NB:1937}, there exists an invariant probability
  measure under the \emph{angle map} $\phi:R^1 \to R^1$, defined as
  in~\eqref{eq:Tmap}, as it is a continuous map on the compact space
  $R^1$. Moreover, according to Theorem 4.1.11 in~\cite{AK-BH:1995},
  there exists at least one ergodic measure in the set of all
  $\phi-$invariant probability measures. Hence, the \emph{angle map} is
  ergodic under Assumption~\ref{A:continuous tau_s}.\remend
\end{rem}

Now, we define the asymptotic average inter-event time function, $\tavg$, as
\begin{equation}\label{eq:tau_avg}
\tavg(\theta) \ldef \lim_{k \to \infty}\frac{1}{k}\sum_{j=0}^{k-1}\tau_s(\phi^j(\theta)).
\end{equation}
Note that in general, this function may not be defined for every
$\theta \in R^1$, but for the $\theta$ for which the limit exists,
$\tavg(\theta)$ denotes the asymptotic average inter-event time when
the angle of the initial state of the system is $\theta$. Now,
based on the Birkhoff Ergodic theorems~\cite{PW:1982}, we can say the
following regarding $\tavg(\theta)$.

\begin{lem}\thmtitle{Asymptotic average inter-event time function is a
    constant almost
    everywhere} \label{lem:asy_avg_iet_constant_almost_everywhere}
  Consider system~\eqref{eq:system} under the event-triggering
  rule~\eqref{eq:general_tr_rule}. Let Assumption~\ref{A:continuous
    tau_s} hold. Let the \emph{angle map} $\phi:R^1 \to R^1$, defined as in~\eqref{eq:Tmap}, be ergodic
  on the probability space $(R^1,\mathcal{B},\mu)$. Then the
  asymptotic average inter-event time function $\tavg$ defined as in
  ~\eqref{eq:tau_avg} exists for $\mu$-almost every $\theta \in
  R^1$. Moreover, $\tavg(\theta)=\int \tau_s d\mu$ for $\mu$-almost
  every $\theta \in R^1$.
\end{lem}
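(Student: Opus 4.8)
The plan is to apply the Birkhoff pointwise ergodic theorem directly to the observable $\tau_s$ under the measure-preserving transformation $\phi$. First I would verify the hypotheses of Birkhoff's theorem: the space $(R^1, \mathcal{B}, \mu)$ is a probability space by assumption, $\phi$ is a measurable (indeed continuous, under Assumption~\ref{A:continuous tau_s}) transformation that preserves $\mu$ (since $\mu$ is an invariant measure for $\phi$), and $\tau_s$ is $\mu$-integrable. Integrability is immediate because $\tau_s$ is continuous on the compact space $R^1$ (Assumption~\ref{A:continuous tau_s}), hence bounded, hence in $L^1(\mu)$ for the probability measure $\mu$; indeed $0 < \tmin \le \tau_s(\theta) \le \tmax < \infty$.

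Given these hypotheses, Birkhoff's pointwise ergodic theorem asserts that the time average $\frac{1}{k}\sum_{j=0}^{k-1}\tau_s(\phi^j(\theta))$ converges for $\mu$-almost every $\theta \in R^1$ to a $\phi$-invariant function $\tau^*(\theta) \in L^1(\mu)$, with $\int \tau^* \, d\mu = \int \tau_s \, d\mu$. This already establishes that $\tavg(\theta)$ exists for $\mu$-almost every $\theta$. The remaining step is to identify the limit: since $\phi$ is \emph{ergodic} with respect to $\mu$ (by hypothesis), every $\phi$-invariant $L^1$ function is constant $\mu$-almost everywhere. Therefore $\tau^*(\theta) = \int \tau^* \, d\mu = \int \tau_s \, d\mu$ for $\mu$-almost every $\theta$, which is exactly the claimed value of $\tavg(\theta)$.

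I do not anticipate a substantive obstacle here: the result is essentially a textbook invocation of Birkhoff's ergodic theorem, and all the work has already been done in setting up the framework — establishing $\phi$ continuous on a compact space (so that an invariant measure exists by Krylov--Bogolyubov, per Remark~\ref{rem:T_ergodic}), and establishing $\tau_s$ continuous hence integrable. The only point requiring a word of care is to state clearly that the ``almost everywhere'' qualification is with respect to the particular ergodic measure $\mu$, and that the conclusion is a statement about that measure; different ergodic measures could in principle give different constants. The proof is thus a short citation-based argument, referencing~\cite{PW:1982} for the statement of Birkhoff's theorem and the characterization of invariant functions under an ergodic transformation.
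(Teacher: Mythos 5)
Your proof is correct and follows exactly the paper's route: the paper's own proof is a one-line citation of the Birkhoff ergodic theorems for measure-preserving and ergodic transformations, and your write-up simply fills in the same argument with the hypothesis checks (invariance of $\mu$, integrability of $\tau_s$ via continuity on the compact space $R^1$) made explicit. Your closing caveat that the constant depends on the choice of ergodic measure $\mu$ is a worthwhile clarification that the paper itself only addresses in the surrounding discussion.
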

\begin{proof}
  Proof of this lemma follows directly from the Birkhoff ergodic
  theorems for measure preserving transformations and ergodic
  transformations, respectively.
\end{proof}

Ergodicity of the \emph{angle map} implies that the asymptotic average
inter-event time function is a constant almost everywhere, with
respect to the measure $\mu$, on $R^1$. However, $\tavg$ may be
different in each invariant set of the \emph{angle map}. Now, we provide
a sufficient condition for the uniform convergence of average
inter-event time to a constant for every point on $R^1$.

\begin{prop}\thmtitle{Sufficient condition for the uniform convergence of average inter-event time to a constant for every point on $R^1$} \label{thm:unique_ergodicity}
  Consider system~\eqref{eq:system} under the event-triggering
  rule~\eqref{eq:general_tr_rule}. Let Assumption~\ref{A:continuous
    tau_s} hold. If the \emph{angle map}, $\phi:R^1 \to R^1$ defined as
  in ~\eqref{eq:Tmap}, has at most one periodic orbit, then
  the average inter-event time converges uniformly to a constant for
  every point in $R^1$.
\end{prop}
\begin{proof}
  According to the theorem in \cite{CA-1992}, the \emph{angle map},
  $\phi:R^1 \to R^1$, is uniquely ergodic if and only if $\phi$ has at
  most one periodic orbit. Hence, by the Oxtoby ergodic
  theorem~\cite{PW:1982}, if the \emph{angle map} has at most one periodic orbit then the average inter-event time converges
  uniformly to a constant for every point on $R^1$.
\end{proof}
Note that, Proposition~\ref{thm:unique_ergodicity} provides a sufficient
condition for the asymptotic average inter-event time function,
$\tau_{\text{avg}}(\theta)$ to be a constant for all $\theta \in R^1$.
This result also suggests that the analysis of periodic orbits of the
\emph{angle map} helps in analyzing the asymptotic average inter-event
time function.

\subsection{Asymptotic behavior of the inter-event times}\label{sec:average-iet-analysis}
In this subsection, we consider a special case where the \emph{angle map} is an orientation preserving homeomorphism. 
\begin{enumerate}[resume,label=\textbf{(A\arabic*)},align=left]
\item \hspace{.1em}The \emph{angle map}, $\phi: R^1 \to R^1$ defined as
  in~\eqref{eq:Tmap}, is an orientation-preserving
  homeomorphism. \label{A:T homeomorphism}
\end{enumerate}
Note that, the \emph{angle map} is said to be a homeomorphism if it is
continous and bijective with a continuous inverse. The \emph{angle map}
is said to be orientation-preserving if it admits a monotonically
increasing lift. Note that, the \emph{angle map} is a homeomorphism if it is continuous and orientation-preserving. Under this special case, we provide a framework to
analyze the asymptotic behavior of the inter-event times, such as
convergence or non-convergence to a periodic orbit, with the help of
rotation theory. This analysis also gives us insights into the
asymptotic average inter-event time as a function of the initial
state.

Now, let $\bar{\pi}: \real \to R^1$ be defined as
$\bar{\pi}(x) = x(\text{mod}\hspace{1ex} 2\pi)$, i.e., the projection
of the real line onto $R^1$. Let $\Phi: \real \to \real$ be a lift of
$\phi$, that is
$(\bar{\pi} \circ \Phi)(x) = (\phi \circ \bar{\pi})(x)$ for all
$x \in \real$. Next, we define the rotation number of the \emph{angle map},
\begin{equation}\label{eq:angle_map_rotation_number}
\bar{\rho}(\phi) \ldef \bar{\pi}(\rho(\Phi)),
\end{equation}
where $\rho(.)$ is defined as, 

\begin{equation*}
\rho (\Phi)=\lim _{n\to \infty }{\frac {\Phi^{n}(x)-x}{n}}.
\end{equation*}
Note that, according to Proposition 11.1.1 in~\cite{AK-BH:1995}, as
$\phi(.)$ is an orientation-preserving homeomorphism of the circle,
this limit exists for every $x \in \real$ and is independent of the
point $x$. The rotation number plays a crucial role in determining the
qualitative behavior of the orbits of an orientation-preserving
homeomorphism. We can determine the existence of a periodic point of
an orientation-preserving homeomorphism if we know the rationality of
the rotation number of the map. Thus, the rationality of the rotation
number of the \emph{angle map} indirectly helps us to comment about the
uniform convergence of the average inter-event time to a constant for
every point on $R^1$.

\begin{prop}\thmtitle{\emph{angle map} with irrational rotation number}\label{thm:irrational_angle_map}
  Consider system~\eqref{eq:system} under the event-triggering
  rule~\eqref{eq:general_tr_rule}. Let Assumptions~\ref{A:continuous
    tau_s} and~\ref{A:T homeomorphism} hold. If
  the rotation number of the \emph{angle map}, defined as
  in~\eqref{eq:angle_map_rotation_number}, is irrational, then the
  average inter-event time converges to a constant uniformly for all
  initial states of the system. Moreover, the $\omega-$limit set
  $\omega(\theta)$ is independent of $\theta \in R^1$ and is either
  $R^1$ or perfect and nowhere dense.
\end{prop}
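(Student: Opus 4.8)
The plan is to reduce the first claim to Theorem~\ref{thm:unique_ergodicity} and to read off the structure of the $\omega$-limit set from the Poincar\'e classification of orientation-preserving circle homeomorphisms. First I would recall the classical fact (see Chapter~11 of~\cite{AK-BH:1995}) that an orientation-preserving homeomorphism of the circle has a periodic point if and only if its rotation number is rational. Hence, under Assumption~\ref{A:T homeomorphism} together with the hypothesis that $\bar{\rho}(\phi)$ is irrational, $\phi$ has no periodic orbits whatsoever; in particular it has \emph{at most} one distinct periodic orbit, so Theorem~\ref{thm:unique_ergodicity} applies verbatim and yields that the average inter-event time converges uniformly to a constant for every point of $R^1$. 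Equivalently, one may invoke the classical unique ergodicity of such maps directly and combine it with continuity of $\tau_s$ (Assumption~\ref{A:continuous tau_s}) through the Oxtoby ergodic theorem, exactly as in the proof of Theorem~\ref{thm:unique_ergodicity}.

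For the second claim, I would appeal to the Poincar\'e classification theorem (Chapter~11 of~\cite{AK-BH:1995}): since $\bar{\rho}(\phi)$ is irrational, there is a continuous, monotone, degree-one map $h: R^1 \to R^1$ semiconjugating $\phi$ to the rigid rotation with the same rotation number, and $\phi$ admits a \emph{unique} minimal set $K$. The classification dichotomy is precisely that either $K = R^1$ (then $\phi$ is minimal and in fact topologically conjugate to the irrational rotation), or $K$ is a Cantor set, i.e. compact, perfect and nowhere dense (the Denjoy case). It then remains only to show $\omega(\theta) = K$ for \emph{every} $\theta \in R^1$.

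To establish this, observe that $R^1$ is compact and $\phi$ continuous, so $\omega(\theta)$ is nonempty, closed and $\phi$-invariant; it therefore contains a minimal set, which by uniqueness is $K$, giving $K \subseteq \omega(\theta)$. For the reverse inclusion: if $\theta \in K$ then $\omega(\theta) = K$ by minimality of $K$; if $\theta \notin K$, then $\theta$ lies in one of the connected components $I$ of the open set $R^1 \setminus K$, and because $\phi$ has no periodic points the forward iterates $\{\phi^n(I)\}_{n \ge 0}$ are pairwise disjoint open arcs, whence $\sum_{n \ge 0} |\phi^n(I)| \le 2\pi$ and in particular $|\phi^n(I)| \to 0$. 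Consequently $\mathrm{dist}(\phi^n(\theta), K) \to 0$, forcing $\omega(\theta) \subseteq K$. Combining the two inclusions gives $\omega(\theta) = K$ for all $\theta$, and the stated alternatives for $\omega(\theta)$ are exactly the two cases of the dichotomy.

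The main obstacle I anticipate is the bookkeeping in this last step: justifying that the complementary intervals of $K$ are genuinely wandering (no return, and disjoint forward orbits), which in turn gives summability of their lengths and the squeezing of forward orbits onto $K$. This is standard rotation theory, so in the write-up I would either cite the relevant statements of~\cite{AK-BH:1995} directly or give the short interval-shrinking argument sketched above; the genuinely paper-specific content is simply feeding continuity of $\tau_s$ and unique ergodicity into the Birkhoff/Oxtoby machinery for the first claim, which is immediate given Theorem~\ref{thm:unique_ergodicity}.
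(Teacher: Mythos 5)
Your proposal is correct and follows essentially the same route as the paper: the first claim is reduced to Theorem~\ref{thm:unique_ergodicity} via the fact that irrational rotation number excludes periodic orbits, and the second claim is the standard Poincar\'e classification statement, which the paper simply cites (Propositions 11.1.4, 11.1.5 and 11.2.5 of Katok--Hasselblatt) while you additionally supply the wandering-interval argument establishing $\omega(\theta)=K$ for all $\theta$. That extra detail is sound but not a different approach, merely an unpacking of the cited result.
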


\begin{proof}
  According to Proposition 11.1.4 and Proposition 11.1.5
  in~\cite{AK-BH:1995}, if the rotation number of the \emph{angle map} is
  irrational then there does not exist a periodic orbit for the
  \emph{angle map}. Hence, by Proposition~\ref{thm:unique_ergodicity}, the
  average inter-event time converges to a constant uniformly for all
  initial states of the system. Moreover, according to Proposition
  11.2.5 in~\cite{AK-BH:1995}, the $\omega-$limit set $\omega(\theta)$
  is independent of $\theta \in R^1$ and is either $R^1$ or perfect
  and nowhere dense.
\end{proof}

Rotation theory also helps us to describe the qualitative behavior of the orbits of an orientation-preserving homeomorphism with rational rotation number.

\begin{prop}\thmtitle{\emph{angle map} with rational rotation number}\label{thm:rational_angle_map}
  Consider system~\eqref{eq:system} under the event-triggering
  rule~\eqref{eq:general_tr_rule}. Let Assumptions~\ref{A:continuous
    tau_s} and~\ref{A:T homeomorphism} hold. If
  the rotation number of the \emph{angle map}, defined as
  in~\eqref{eq:angle_map_rotation_number}, is rational,
  $\bar{\rho}(\phi)=\frac{p}{q}$, then every forward orbit of $\phi$
  converges to a periodic sequence with period $q$. Moreover, for all
  initial states of the system, inter event times converges to a
  periodic orbit with period $q$.
\end{prop}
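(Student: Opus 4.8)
The plan is to invoke the classical Poincar\'e theory of orientation-preserving circle homeomorphisms to control the orbits of $\phi$, and then transfer the conclusion to the inter-event times using continuity of $\tau_s$. Write the rotation number in lowest terms, $\bar{\rho}(\phi) = p/q$ with $\gcd(p,q) = 1$. The first step is to recall from rotation theory (the propositions in Chapter~11 of~\cite{AK-BH:1995} on homeomorphisms with rational rotation number) that, since $\bar\rho(\phi)$ is rational, $\phi$ possesses at least one periodic orbit and that every periodic orbit of $\phi$ has minimal period exactly $q$. Equivalently, $\phi^q$ has rotation number $q \cdot (p/q) = p \in \ints$, so, viewed through the lift $\widetilde{\Phi} := \Phi^q - p$ of $\phi^q$, it is a circle homeomorphism with rotation number $0$; hence its fixed-point set is nonempty and coincides with the set of periodic points of $\phi$, all of which have period $q$.

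The second step is to show that every forward orbit of $\phi$ converges to such a periodic orbit. Fix $\theta_0 \in R^1$ and set $\theta_k := \phi^k(\theta_0)$. The subsequence $\{\theta_{qm}\}_{m \in \natz}$ is an orbit of $\phi^q$; lifting it to $\real$ via $\widetilde\Phi$, observe that $x \mapsto \widetilde\Phi(x) - x$ is continuous, $2\pi$-periodic, and vanishes exactly on the lifts of the fixed points of $\phi^q$. Between two consecutive zeros it has constant sign, and since $\widetilde\Phi$ is monotone increasing (as $\phi$ is orientation-preserving, Assumption~\ref{A:T homeomorphism}), the lifted sequence is monotone and bounded, hence converges to a zero. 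Therefore $\theta_{qm} \to \psi$ for some periodic point $\psi$ of period $q$. By continuity of $\phi$, $\theta_{qm+j} = \phi^j(\theta_{qm}) \to \phi^j(\psi)$ for each $j \in \{0,\dots,q-1\}$, so the whole sequence $\{\theta_k\}$ is asymptotic to the periodic sequence $\theta^{\star}_k := \phi^{\,k \bmod q}(\psi)$, i.e.\ $|\theta_k - \theta^{\star}_k| \to 0$; since $\psi$ has period $q$ this limiting sequence has period $q$. (If $\theta_0$ is itself periodic the argument degenerates to $\theta_k = \theta^{\star}_k$.) This proves the first assertion.

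The third step transfers this to inter-event times. Along the trajectory of~\eqref{eq:system} one has $t_{k+1} - t_k = \tau_s(\theta_k)$, with $\theta_k$ evolving under the ``angle'' map~\eqref{eq:Tmap}. Under Assumption~\ref{A:continuous tau_s}, $\tau_s$ is continuous on the compact space $R^1$, hence uniformly continuous, so $|\theta_k - \theta^{\star}_k| \to 0$ gives $|\tau_s(\theta_k) - \tau_s(\theta^{\star}_k)| \to 0$. The sequence $\{\tau_s(\theta^{\star}_k)\}_{k \in \natz}$ is $q$-periodic, being the composition of $\tau_s$ with a $q$-periodic sequence of angles. Hence the inter-event times converge to a periodic orbit of period $q$, which is the ``moreover'' claim.

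The substantive content lies in the second step: a rational rotation number must force \emph{every} orbit, not merely one, to be captured by a periodic orbit of the common period $q$. If one cites the corresponding statement from~\cite{AK-BH:1995} directly, little remains to do; the only points requiring care are that $p/q$ be in lowest terms so that the minimal period is exactly $q$, and the alignment argument showing that the $q$ convergent subsequences $\{\theta_{qm+j}\}_m$ assemble into a single $q$-periodic limiting sequence. The final passage to inter-event times is routine given Assumption~\ref{A:continuous tau_s}.
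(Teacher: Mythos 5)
Your proposal is correct and follows essentially the same route as the paper: the paper's proof simply cites Propositions 11.1.4, 11.1.5 and the Poincar\'e classification from Katok--Hasselblatt, which are exactly the facts you derive (rational rotation number forces all periodic orbits to have period $q$ and every orbit of $\phi^q$ to converge monotonically, in the lift, to a fixed point), followed by the same routine transfer to inter-event times via continuity of $\tau_s$. Your write-up just makes explicit the standard arguments the paper leaves to the reference.
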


\begin{proof}
  Proof of this proposition follows directly from Proposition 11.1.4,
  Proposition 11.1.5 and the \emph{Poincare classification}
  in~\cite{AK-BH:1995}.
\end{proof}

\begin{rem} \thmtitle{Number of periodic points and periodic
    orbits} %
  If the rotation number of $\phi$ is rational,
  $\bar{\rho}(\phi)=\frac{p}{q}$, then $\phi^q$ map has $mq$ fixed
  points where $m \in \nat$ denotes the number of periodic
  orbits of the \emph{angle map}. If there are $m > 1$ periodic orbits,
  without semi-stable periodic orbits, then $m$ is always
  even. \remend
\end{rem}

\begin{rem}\thmtitle{Stability of periodic orbits and $\tavg(\theta)$
    in the region of convergence of a stable periodic orbit}
Note that, if we know the rotation number of the \emph{angle map} precisely
and if it is a rational number, $\bar{\rho}(\phi)=\frac{p}{q}$, then we can determine
the periodic orbits of the \emph{angle map} by analyzing the
fixed points of $\phi^q(.)$ map. Further, we can analyze stability
and region of convergence of periodic orbits of $\phi(.)$ map by
analyzing stability and region of convergence of fixed points
of $\phi^q(.)$ map. Note also that, the asymptotic average inter-event
time $t_{avg}(\theta)$ is a constant for all $\theta$ in the region of
convergence of a stable periodic orbit. And this constant is
equal to the average of $\tau_s(\theta)$, averaged over the finitely many
(q number of them) periodic points $\theta$ in the corresponding
periodic orbit.
  \remend
\end{rem}

Note that, we can easily analyze stability and region of convergence of periodic orbits $\phi(.)$ map by using Lemma~\ref{lem:asy_stable_fixed_point} and Theorem~\ref{thm:general_angle_map_behavior}.

There are a number of papers, see for
example~\cite{VV-1988,RP-1995,AB-2016}, which propose different
methods to determine a numerical approximation of the rotation number
of a circle homeomorphism. In these papers, authors claim that it is
possible to check the rationality of the rotation
number. Specifically, if the rotation number is rational then it is
possible to find the exact value in finite number of steps. But one
drawback is, if the rotation number is irrational then these
algorithms will not terminate in finite number of steps. Moreover, due
to the inevitability of numerical errors in finite-precision
arithmetic on computers, in general it is practically not possible to
conclusively determine if the rotation number is rational or
irrational. Nevertheless, results that we have presented in this paper
are still valuable as they provide a mathematical explanation for different kinds of asymptotic behavior of the orbits of the
angle as well as the inter-event times. For the situations in which we
know that the \emph{angle map} has a fixed point or that it has periodic
points of a certain period then one could use the insights from our
results to determine $\tavg(\theta)$ for different $\theta$ in an
efficient manner.

\section{Numerical examples} \label{sec:numerical-examples}

In this section, we illustrate our results and highlight some
interesting behavior of the inter-event times using numerical examples
of several different systems of the form~\eqref{eq:system} with the
event-triggering rule~\eqref{eq:tr1} and~\eqref{eq:tr2}. In each case, we
choose the control gain matrix $K$ so that $A_c = (A+BK)$ is
Hurwitz. We choose a quadratic Lyapunov function $V(x) = x^T P x$,
where $P$ is the solution of the Lyapunov equation~\eqref{eq:lyap}
with $Q=I$. Then, from an analysis as in~\cite{PT:2007}, we set the
thresholding parameter
$\sigma=\frac{0.99\lambda_{\min}(Q)}{2\norm{PBK}}$ in the
event-triggering rule~\eqref{eq:tr2} for the sampled data
controller~\eqref{eq:control_input}. Next, we describe specific cases
in detail.

\subsubsection*{Case 1:}\hspace{0.1em}
Consider the system
\begin{equation*}
\dot{x}=\begin{bmatrix*}[r]
0 & 1 \\ -2 & 3
\end{bmatrix*}x+\begin{bmatrix} 1 & 0 \\ 0 & 1
\end{bmatrix}u \rdef Ax + Bu .
\end{equation*}
The system matrix $A$ has real eigenvalues at $\{1, 2\}$. We let the
control gain %
$K =
\begin{bmatrix*}[r]
  -1 & -0.8 \\ 3 & -4
\end{bmatrix*}$ %
so that $A_c$ has real eigenvalues at $\{-0.5528,
-1.4472\}$. Figure~\ref{fig:Ac_with_real_evs} shows the simulation
results of this system for the event triggering
	rule \eqref{eq:tr2}. Figure~\ref{fig:case1_LmdM} presents the evolution of the smallest eigenvalue of the time-varying symmetric matrix $\dot{M}$ and it shows that the sufficient condition for continuous differentiability of $\tau_s(.)$, given in Corollary~\ref{cor:suff-cond-ts-continuous}, is satisfied. Hence, for this case, the inter-event time function
$\tau_s(\theta)$ is continuously differentiable and it is also periodic with period $\pi$. From
Figure~\ref{fig:case1_detM} and Figure~\ref{fig:case1_tau} we can
verify that $\det(M(\tau))=0$ has exactly two solutions and they are
precisely $\tmin$ and $\tmax$
respectively. Figure~\ref{fig:case1_detL} shows that there are two
points at which $\det(L(\tau))=0$. Figure~\ref{fig:case1_angle_map}
verifies that the \emph{angle map} $\phi(.)$ has exactly two fixed points at
$\theta_1=1.15$rad and $\theta_2=1.85$rad in the interval $[0,\pi)$,
where $\theta_1$ is a stable fixed point. Figure~\ref{fig:Lift_1}
presents a lift of the \emph{angle map}. As the lift $\Phi(.)$ is
increasing monotonically, the \emph{angle map} $\phi(.)$ is an
orientation preserving homeomorphism. Based on our analysis, we can
say that there does not exist any periodic orbit with period greater
than one. We also know that every forward orbit of the \emph{angle map}
converges to one of the fixed
points. Figure~\ref{fig:case1_phase_portrait} is the phase portrait of
the closed loop system. Notice that the state trajectories converge to
a radial line which makes an angle of $1.15$ radian with the positive
$x_1$ axis, which is exactly the point at which the \emph{angle map}
$\phi(.)$ has the stable fixed point. From
Figure~\ref{fig:case1_tau_k} it is clear that, for multiple values for
the initial state of the system, the inter-event time converges to a
steady state value of $0.18$, which is exactly equal to
$\tau_s(\theta_1)$. Figure~\ref{fig:tau_avg1} presents the average
inter-event time, evaluated for different values of total number of
sampling instants, as a function of the angle of the initial state
of the system. Note that as the total number of sampling instants
($N$) increases, the average inter-event time, for all initial
conditions except the case where the angle of the initial state of
the system is an unstable fixed point of the \emph{angle map}, converges
to the value of inter-event time function at the stable fixed points
of the \emph{angle map}. Due to the error in numerical computations, as
$N$ increases, the \emph{computed} value of the average inter-event
time at the unstable fixed points of the \emph{angle map} diverges from
the actual value of the inter-event time function at those points.

\begin{figure}[ht]
  \centering
  \begin{subfigure}{4cm}
  	\includegraphics[width=\textwidth]{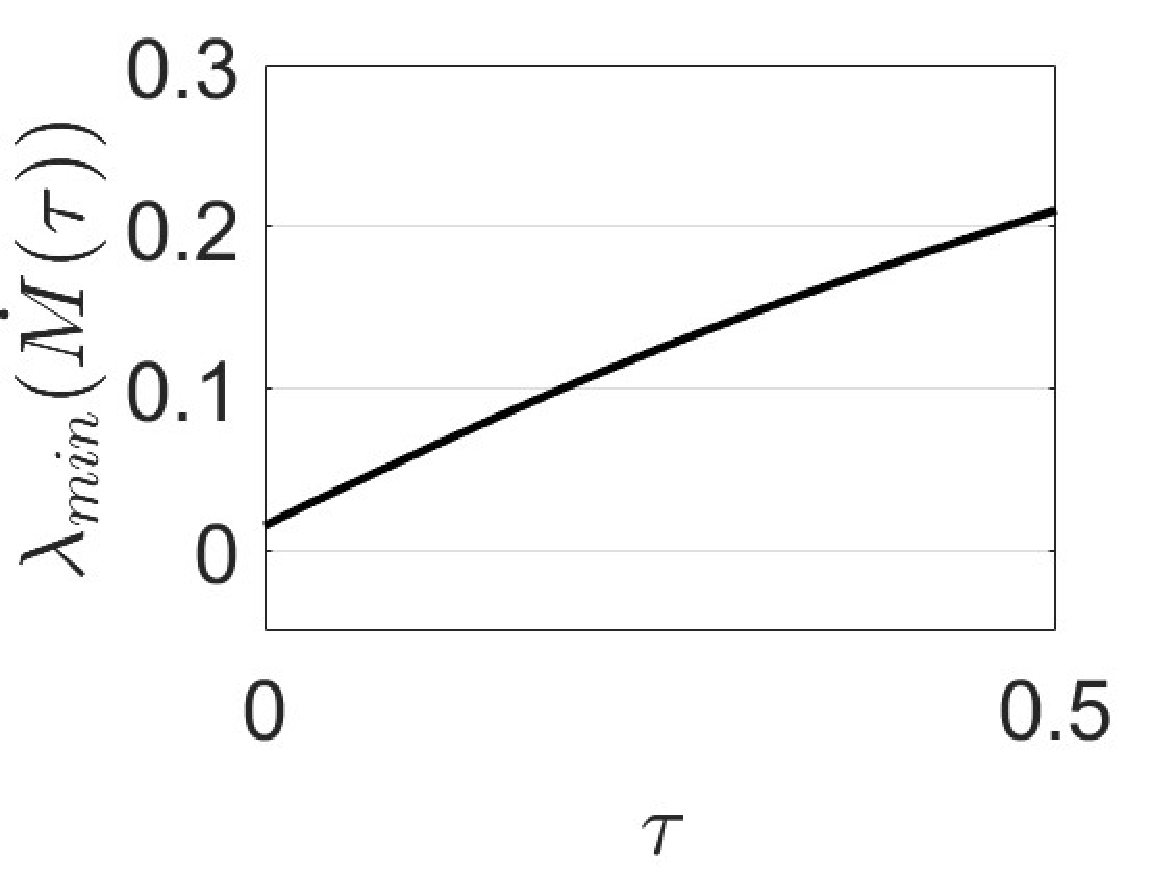}
  	\caption{Evolution of $\lambda_{\min}(\dot{M})$}
  	\label{fig:case1_LmdM}
  \end{subfigure}
  \quad
  \begin{subfigure}{4cm}
    \includegraphics[width=\textwidth]{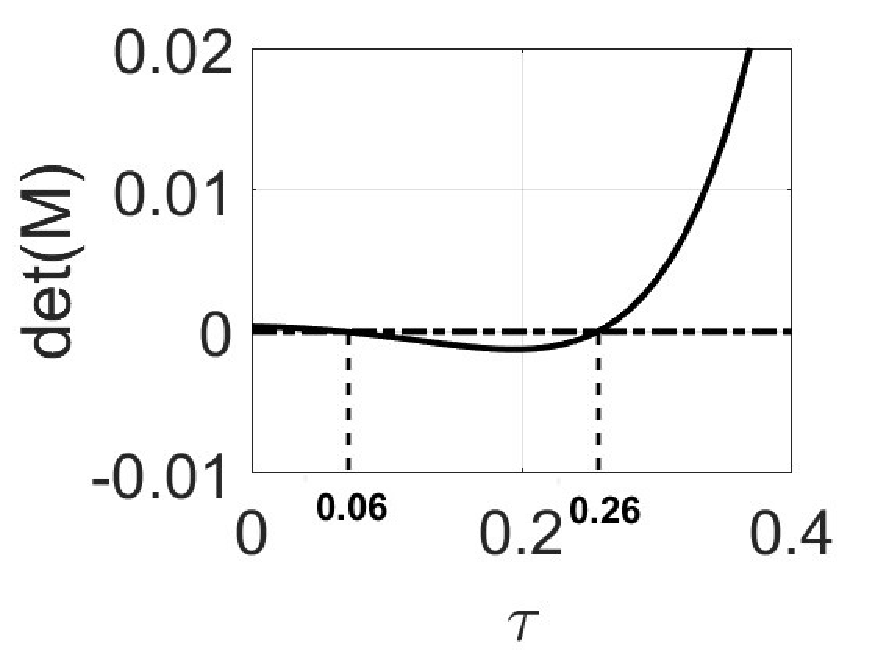}
    \caption{Evolution of $\det(M)$}
    \label{fig:case1_detM}
  \end{subfigure}
  \quad
  \begin{subfigure}{4cm}
    \includegraphics[width=\textwidth]{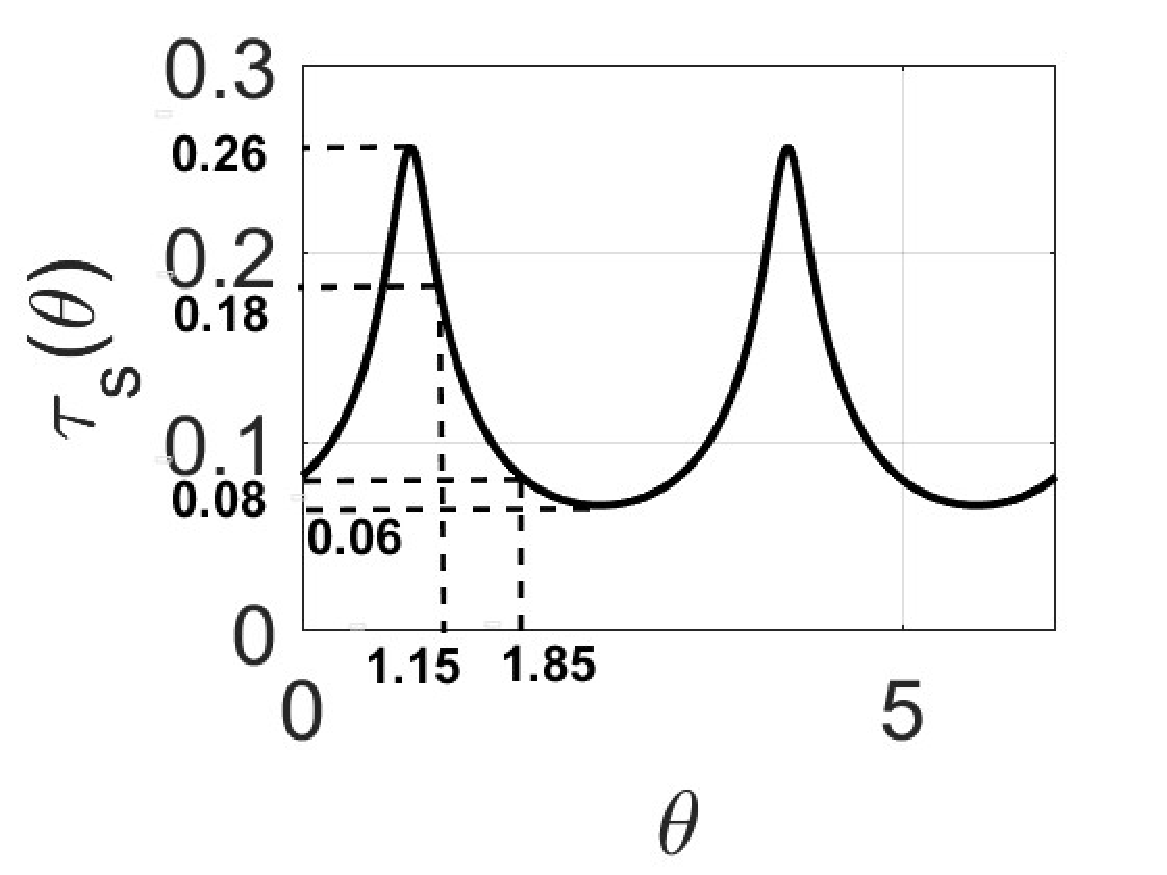}
    \caption{Inter-event time function}
    \label{fig:case1_tau}
  \end{subfigure}
  \quad
  \begin{subfigure}{4cm}
    \includegraphics[width=4cm]{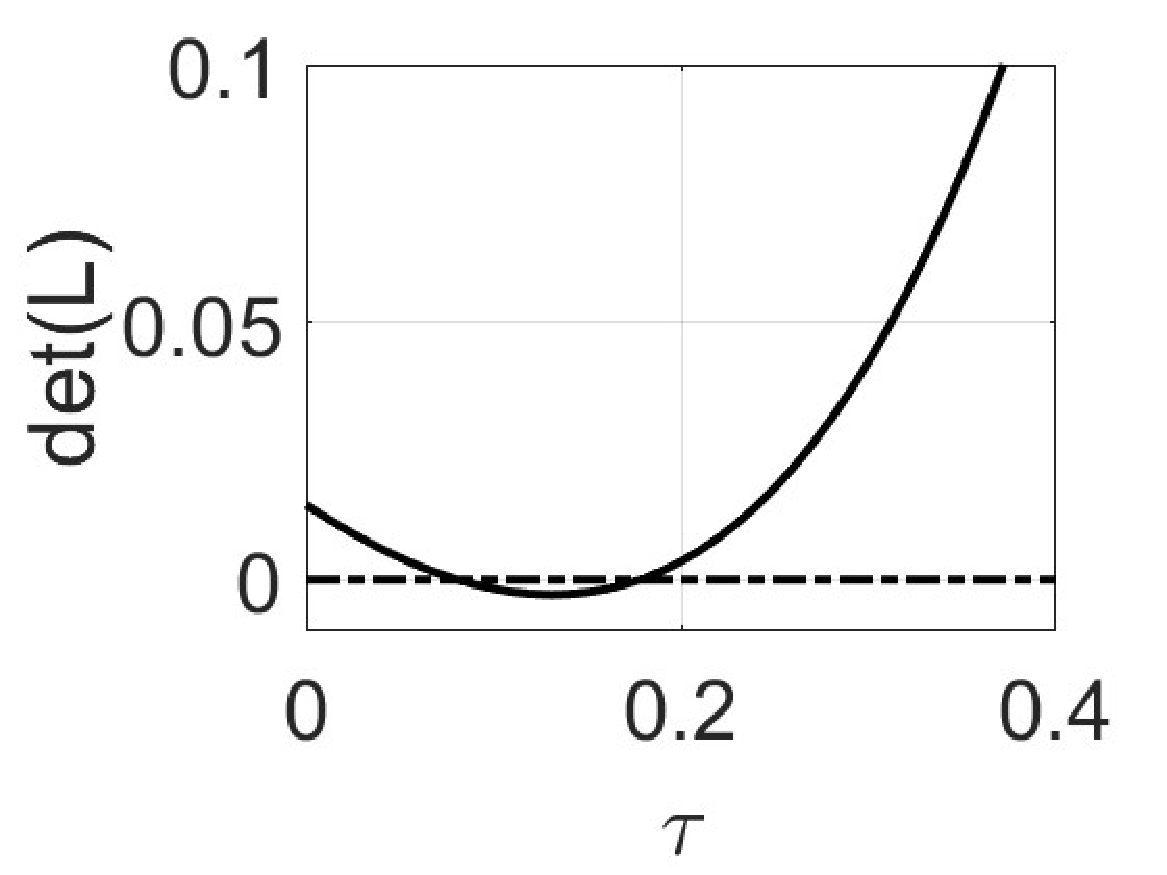}
    \caption{Evolution of $\det(L)$}
    \label{fig:case1_detL}
  \end{subfigure}
  \quad
  \begin{subfigure}{4cm}
    \includegraphics[width=4cm]{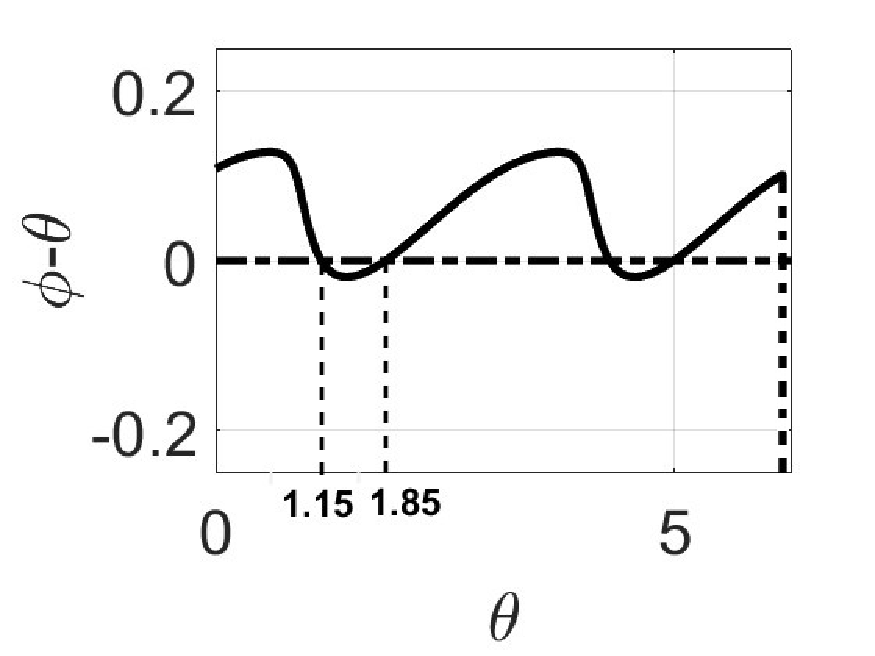}
    \caption{$\phi(\theta)-\theta$ map}
    \label{fig:case1_angle_map}
  \end{subfigure}
  \quad
  \begin{subfigure}{4cm}
  	\includegraphics[width=4cm]{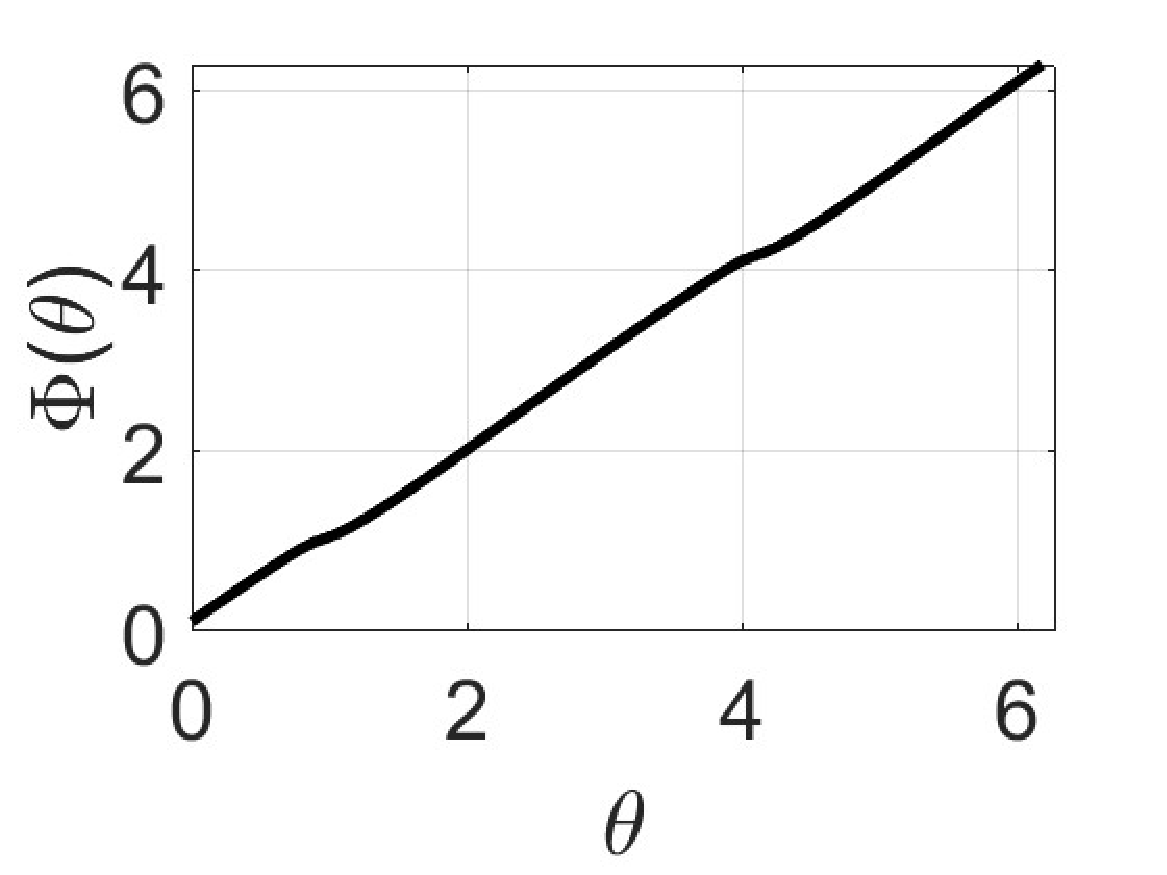}
  	\caption{Lift of the \emph{angle map}}
  	\label{fig:Lift_1}
  \end{subfigure}
  \quad
  \begin{subfigure}{4cm}
    \includegraphics[width=4cm]{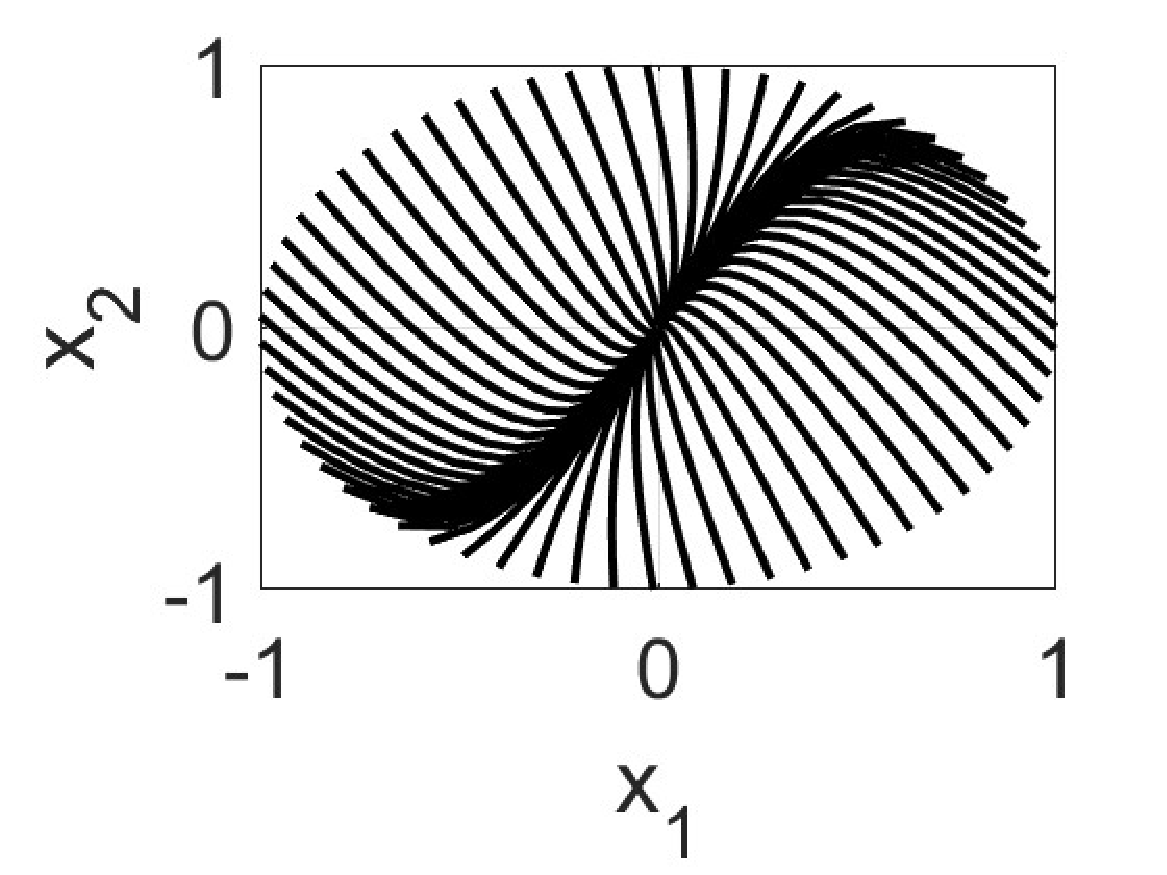}
    \caption{Phase portrait}
    \label{fig:case1_phase_portrait}
  \end{subfigure}
  \quad
  \begin{subfigure}{4cm}
    \includegraphics[width=4cm]{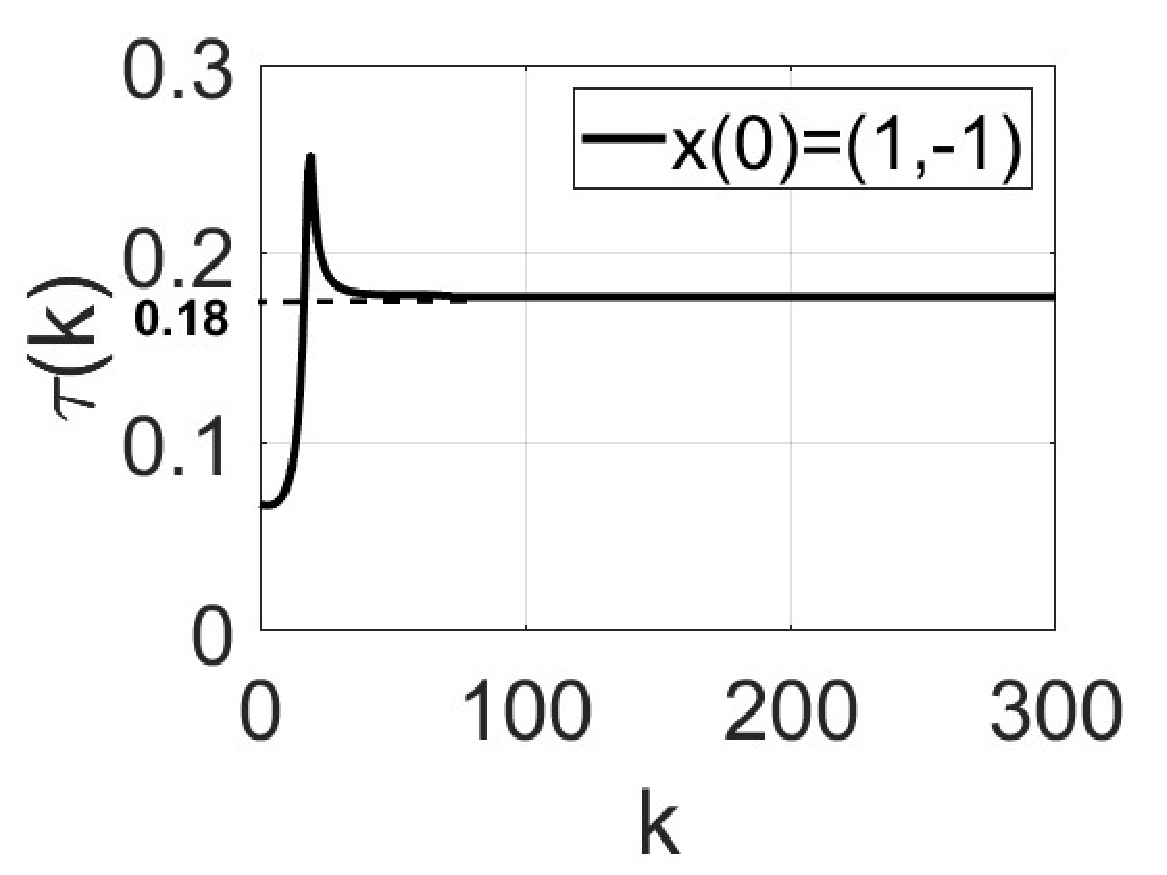}
    \caption{Inter-event time evolution}
    \label{fig:case1_tau_k}
  \end{subfigure}
\quad
\begin{subfigure}{4cm}
	\includegraphics[width=4cm]{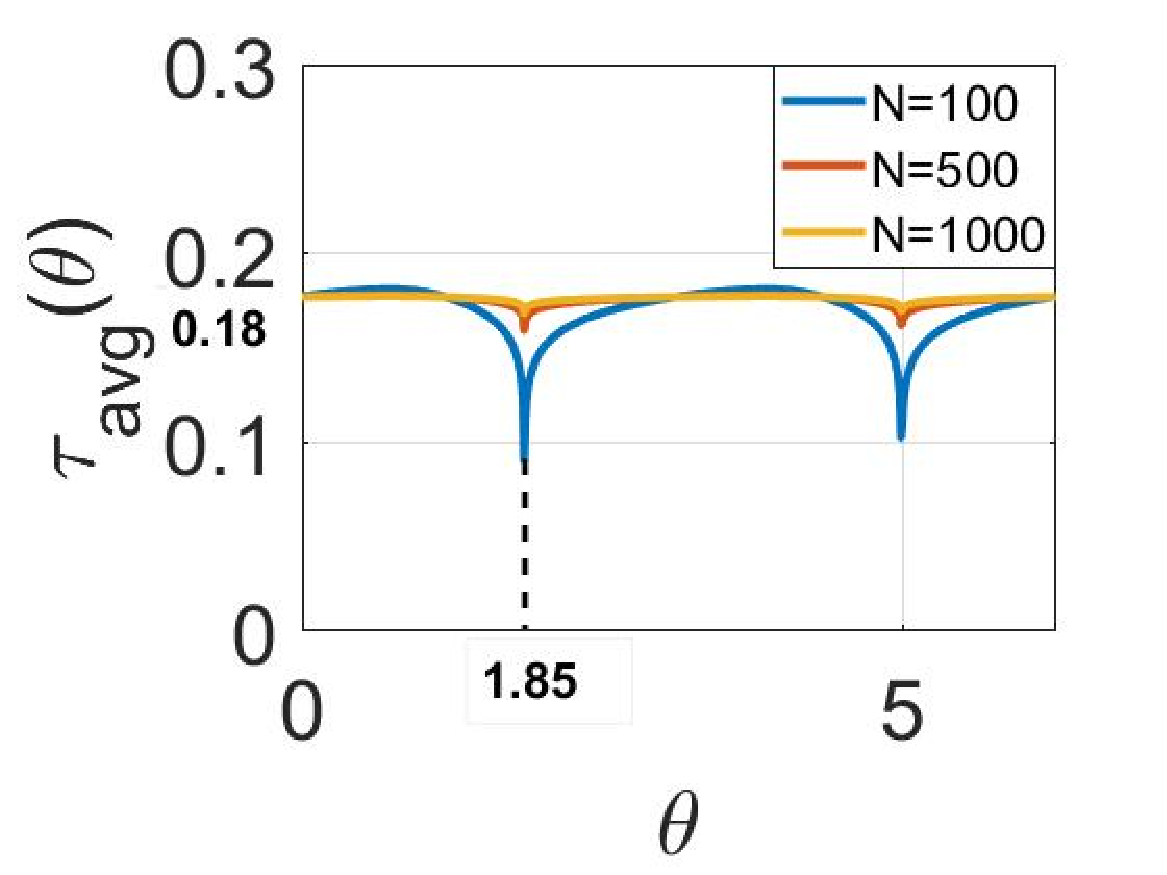}
	\caption{Average inter-event time}
	\label{fig:tau_avg1}
\end{subfigure}
  \caption{Simulation results of Case 1 when $A_c$ has real
  eigenvalues at $\{-0.5528,-1.4472\}$.}
  \label{fig:Ac_with_real_evs}
\end{figure}

\subsubsection*{Case 2:}
\hspace{0.1em} In this case, we use the same $A$ matrix as in Case 1
but choose the input matrix $B=[0\quad1]^T$ and the control gain
$K=[0\quad-5]$ so that $A_c$ has complex conjugate eigenvalues at
$\{-1+i, -1-i\}$. Figure~\ref{fig:Ac_with_complex_evs} shows the
simulation results for Case 2. For this case also the inter-event time
function is continuously differentiable and periodic with period $\pi$. From
Figure~\ref{fig:case2_detM} and Figure~\ref{fig:case2_tau} we can
verify that $\det(M(\tau))=0$ has exactly two solutions and these two
points are $\tmin$ and $\tmax$
respectively. Figure~\ref{fig:case2_detL} shows that $\det(L(\tau))$
is always positive. Therefore the $\phi$ map in
Figure~\ref{fig:case2_angle_map} has no fixed point. Figure~\ref{fig:case2_angle_map} shows that the sufficient condition, given in Theorem~\ref{thm:IET-analysis}, for non-convergence of inter-event times to a steady-state value is satisfied. Hence, we
  can say that the inter-event times do not converge to a steady
  state value for any initial condition. Note that, under the 
  event-triggering rule~\eqref{eq:tr2}, we can also use 
  Remark~\ref{rem:RT_non-convergence} to show the non-convergence of 
  inter-event times to a steady-state value.
Figure~\ref{fig:Lift_2}
presents a lift of the \emph{angle map} $\phi(.)$. As the lift
$\Phi(.)$ is increasing monotonically, the \emph{angle map} $\phi(.)$ is
an orientation preserving homeomorphism.
Figure~\ref{fig:case2_phase_portrait} represents the phase portrait of
the closed loop system. Figure~\ref{fig:case2_tau_k} shows the
evolution of inter-event times, for two arbitrary initial conditions,
is oscillating in nature. Figure~\ref{fig:tau_avg2} presents the
average inter-event time, for the total number of sampling instants
equal to 1000, as a function of the angle of the initial state of
the system. For this case, the average inter-event time is a constant
for all inital states of the system. This may be due to several
reasons. Either there does not exist a periodic orbit for the \emph{angle map} or the \emph{angle map} has a unique periodic orbit with period greater
than one or the average inter-event time correspoding to all periodic
orbits of the \emph{angle map} is the same. It is not easy to distinguish
between these cases from the simulation results.

\begin{figure}[ht]
  \centering
   \begin{subfigure}{4cm}
  	\includegraphics[width=\textwidth]{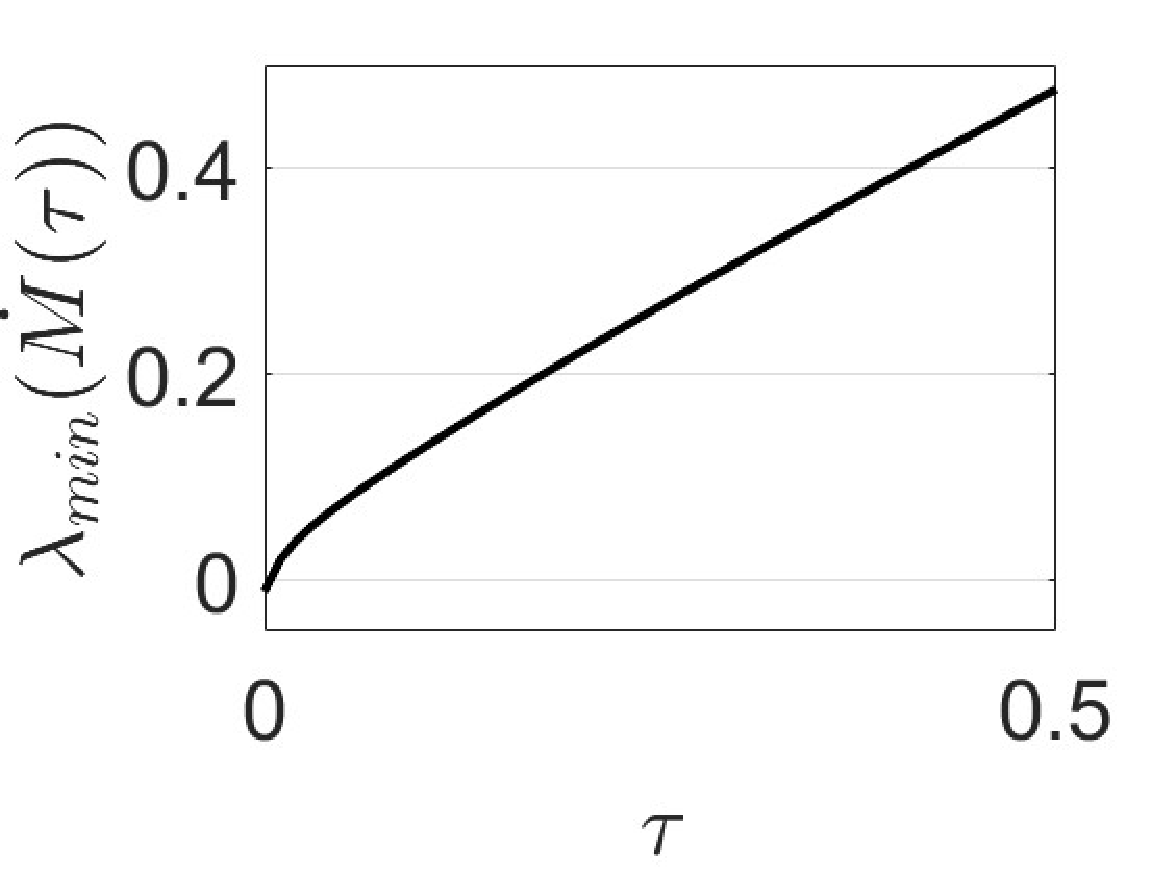}
  	\caption{Evolution of $\lambda_{\min}(\dot{M})$}
  	\label{fig:case2_LmdM}
  \end{subfigure}
  \quad
  \begin{subfigure}{4cm}
    \includegraphics[width=4cm]{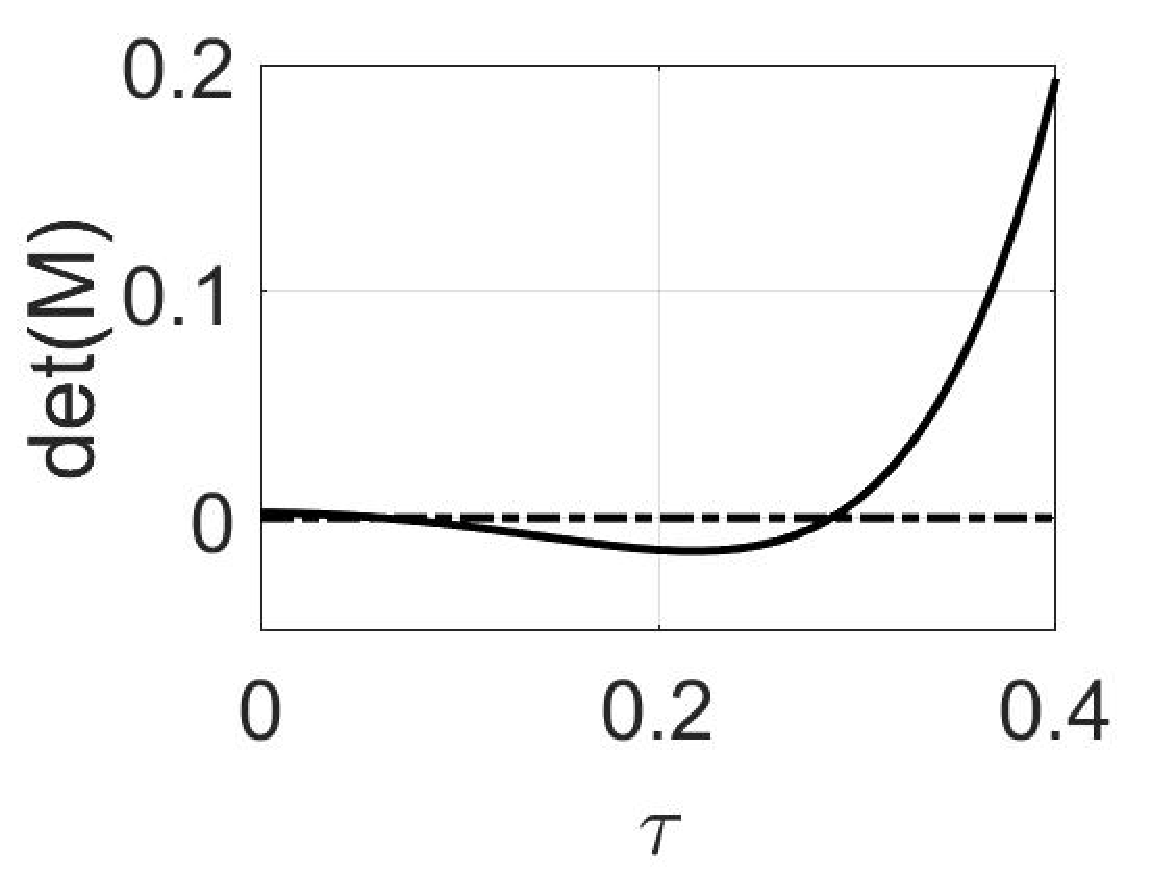}
    \caption{Evolution of $\det(M)$}
    \label{fig:case2_detM}
  \end{subfigure}
  \quad
  \begin{subfigure}{4cm}
    \includegraphics[width=4cm]{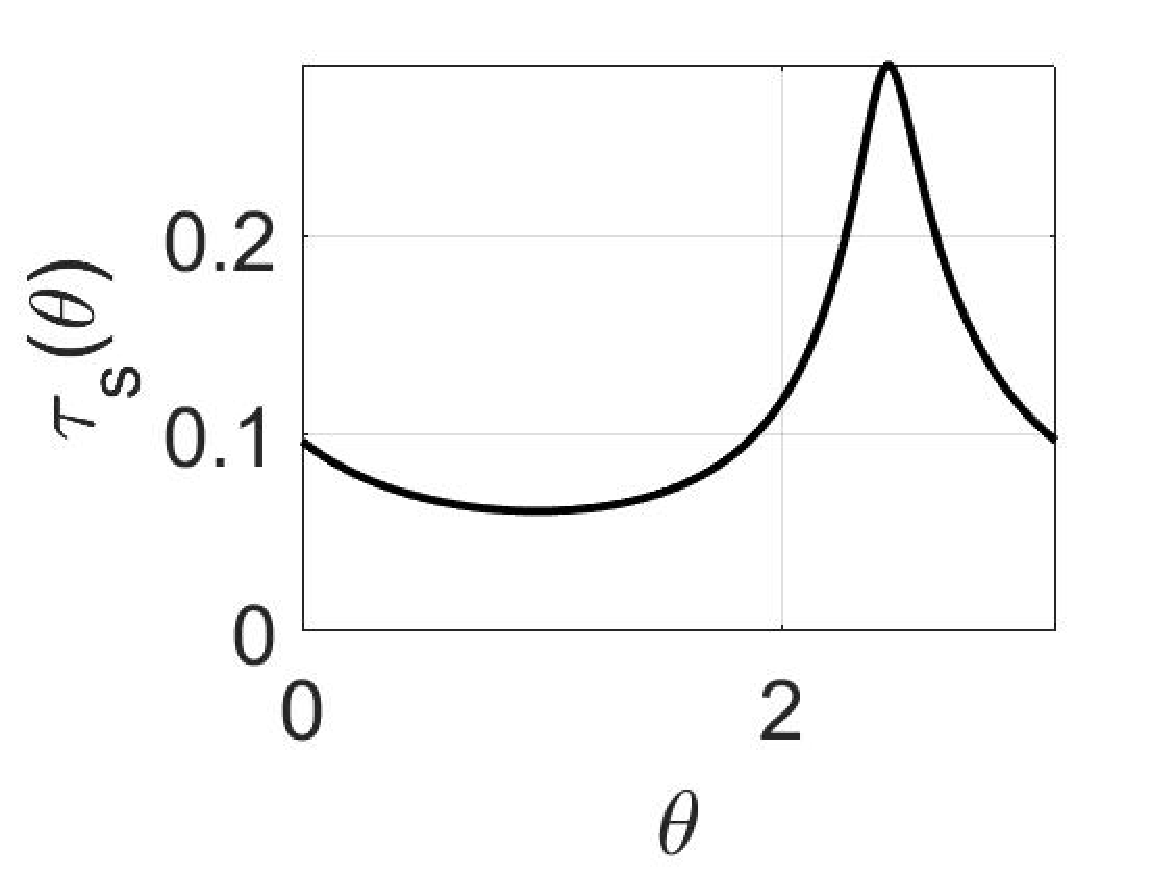}
    \caption{Inter-event time function}
    \label{fig:case2_tau}
  \end{subfigure}
  \quad
  \begin{subfigure}{4cm}
    \includegraphics[width=4cm]{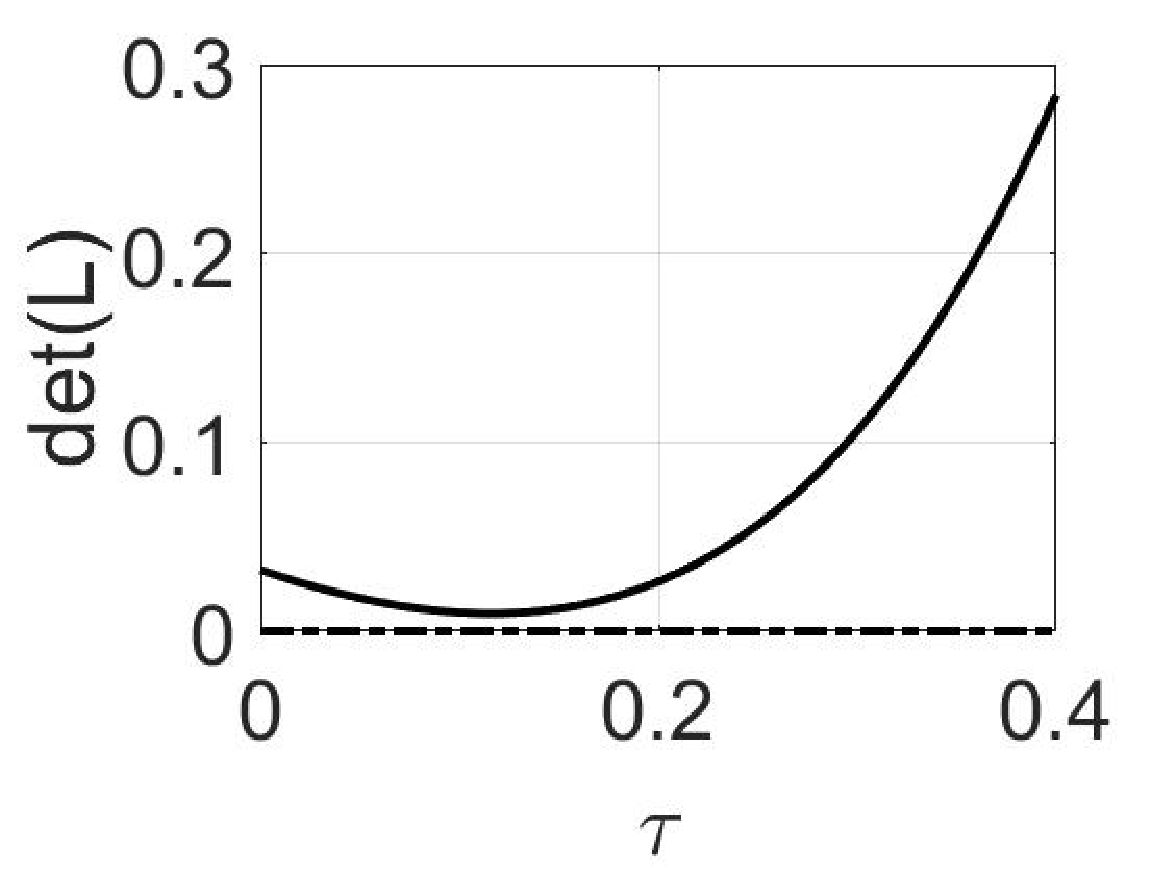}
    \caption{Evolution of $\det(L)$}
    \label{fig:case2_detL}
  \end{subfigure}
  \quad
  \begin{subfigure}{4cm}
    \includegraphics[width=4cm]{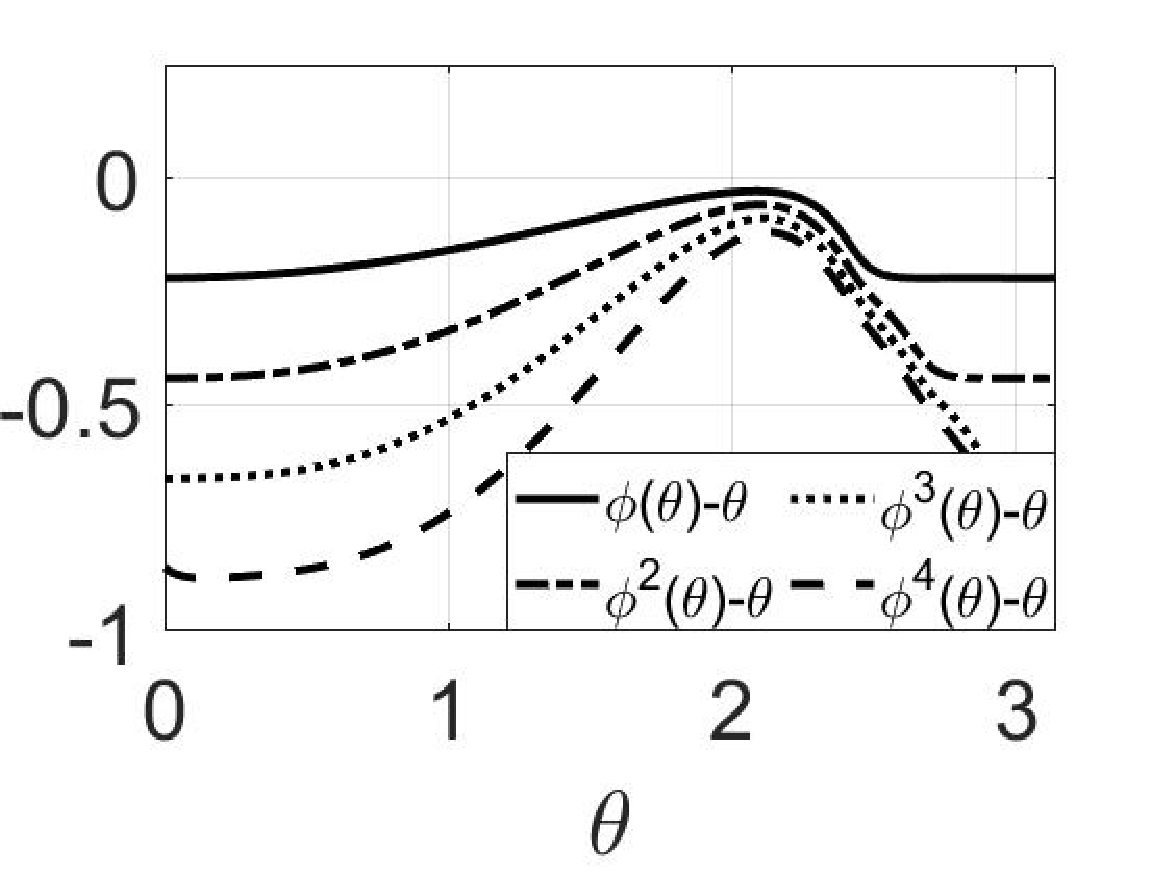}
    \caption{$\phi^k(\theta)-\theta$ map}
    \label{fig:case2_angle_map}
  \end{subfigure}
\quad
\begin{subfigure}{4cm}
	\includegraphics[width=4cm]{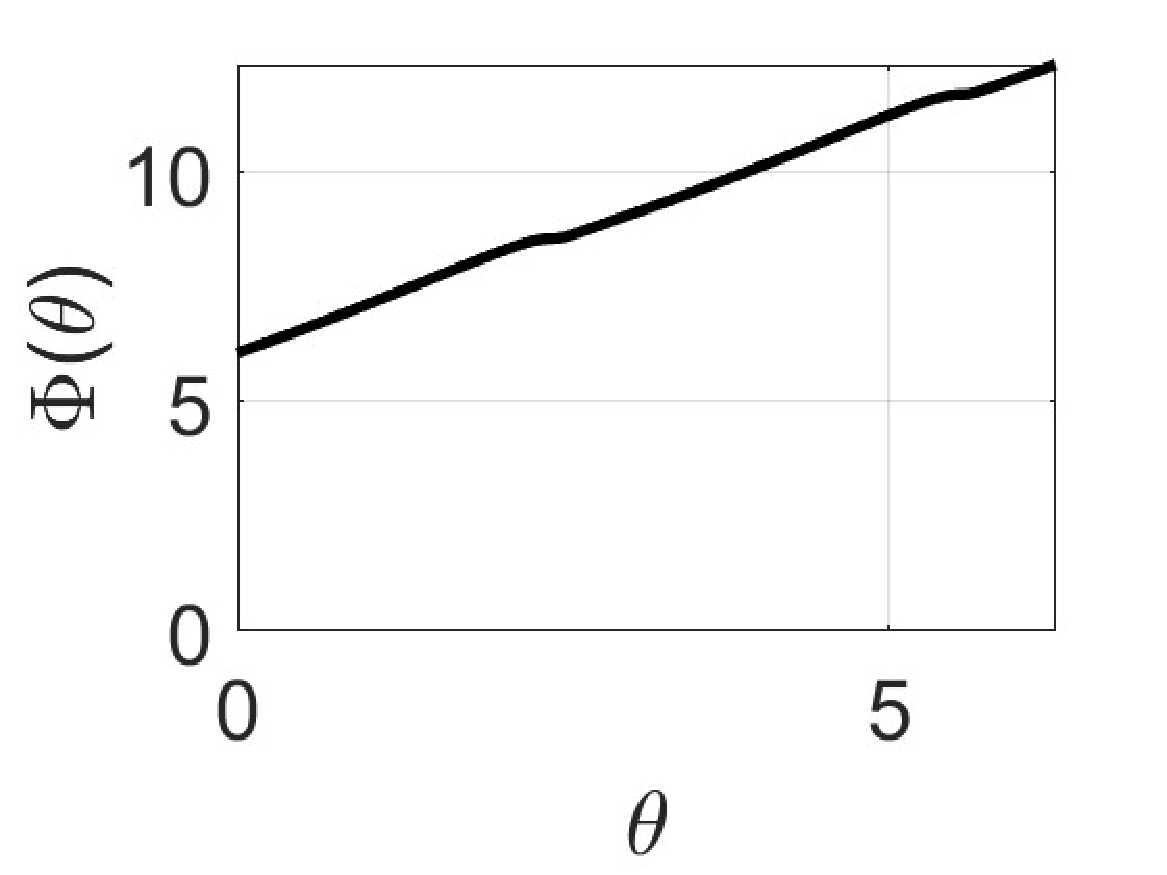}
	\caption{Lift of the \emph{angle map}}
	\label{fig:Lift_2}
\end{subfigure}
  \quad
  \begin{subfigure}{4cm}
    \includegraphics[width=4cm]{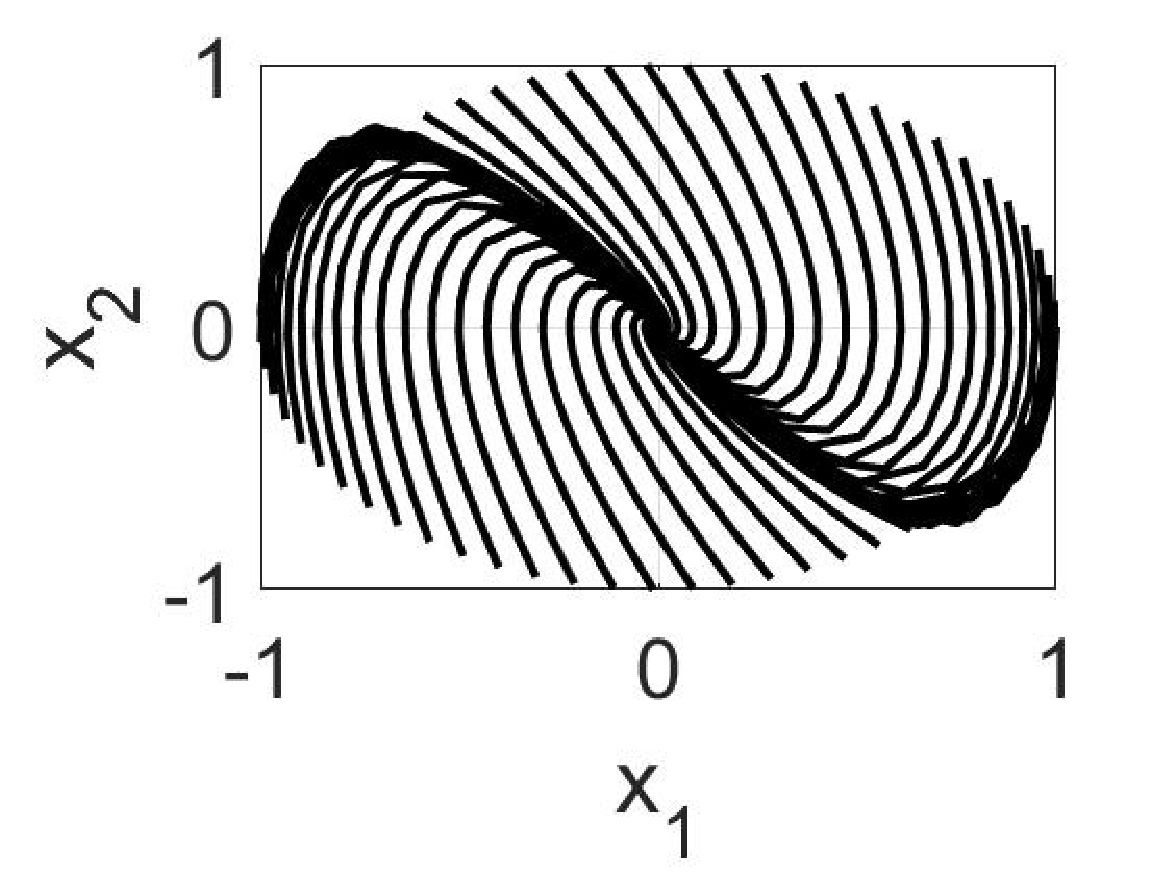}
    \caption{Phase portrait of the closed loop system}
    \label{fig:case2_phase_portrait}
  \end{subfigure}
  \quad
  \begin{subfigure}{4cm}
    \includegraphics[width=4cm]{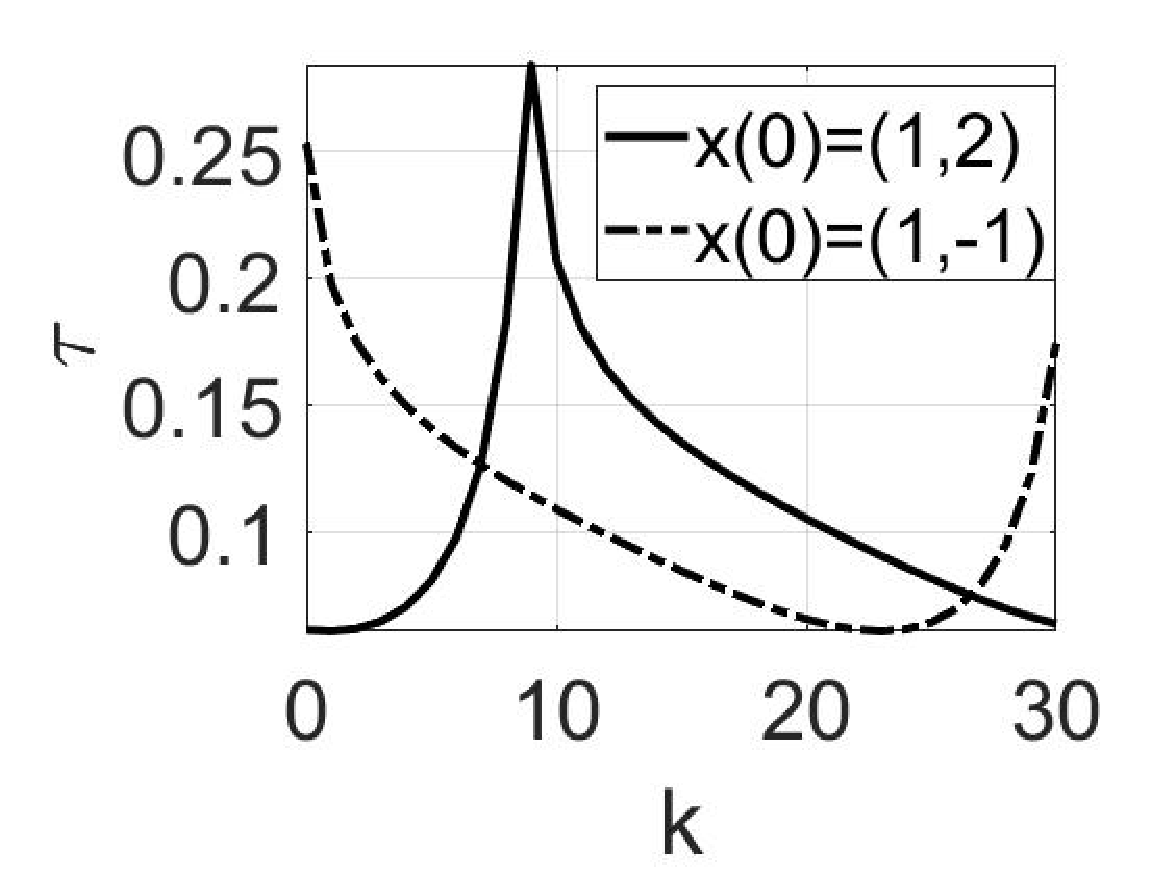}
    \caption{Evolution of inter-event times}
    \label{fig:case2_tau_k}
  \end{subfigure}
\quad
\begin{subfigure}{4cm}
	\includegraphics[width=4cm]{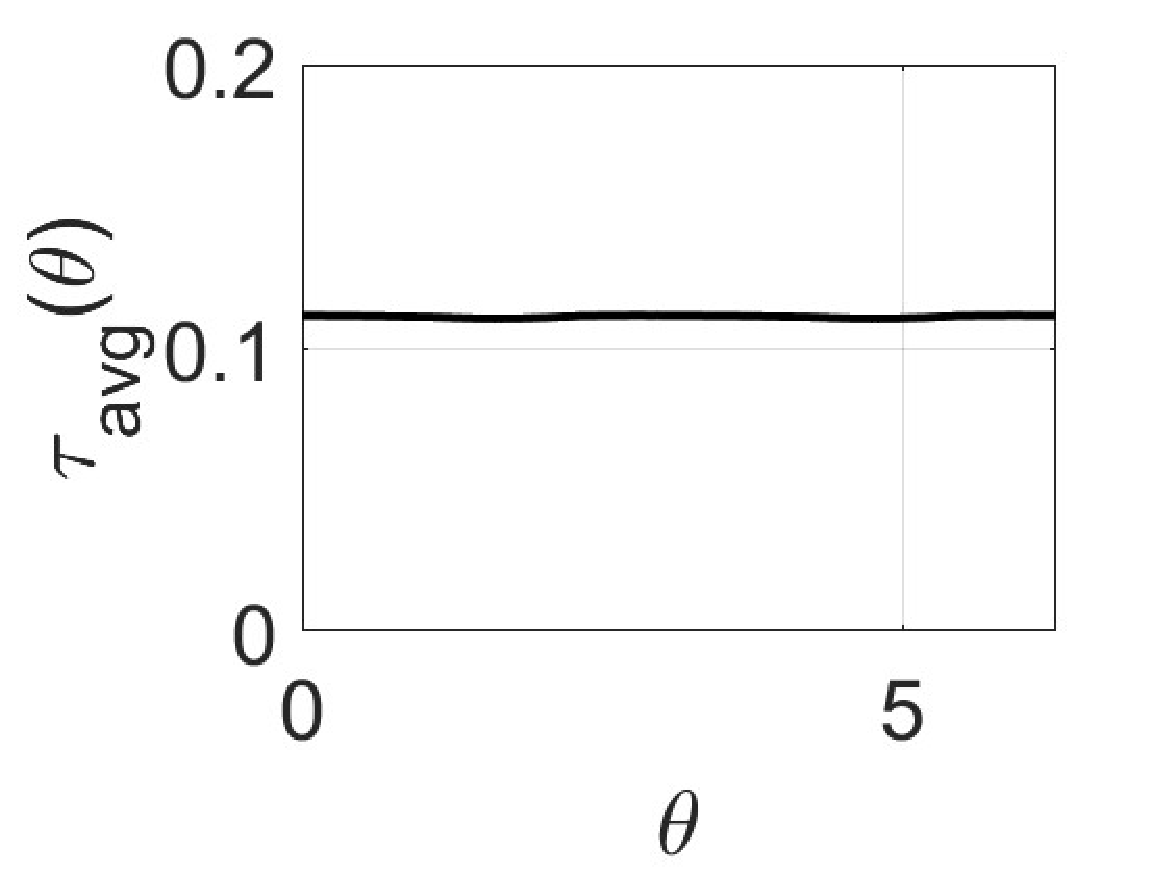}
	\caption{Average inter-event time}
	\label{fig:tau_avg2}
\end{subfigure}
  \caption{Simulation results of Case 2 when $A_c$ has complex
  conjugate eigenvalues at $\{-1+i,-1-i\}$.}
  \label{fig:Ac_with_complex_evs}
\end{figure}

\subsubsection*{Case 3:}
Consider another system,
\begin{equation*}
  \dot{x}=\begin{bmatrix*}[r]
    0 & 1 \\ -2 & 3
    \end{bmatrix*}x+\begin{bmatrix} 1 & 0 \\ 0 & 1
  \end{bmatrix}u \rdef Ax + Bu.
\end{equation*}
The system matrix $A$ has real eigenvalues at $\{1,2\}$. The control
gain $K=\begin{bmatrix*}[r] -1 & -0.8 \\ 1.8 & -4
\end{bmatrix*}$ so that $A_c$ has complex conjugate eigenvalues at
$\{-1+0.2i,-1-0.2i\}$. Figure~\ref{fig:Ac_with_complex_evs_and_fixed_points}
shows the simulation results of this system for the event triggering
rule \eqref{eq:tr2} with
$\sigma=0.2251$. Figure~\ref{fig:case3_angle_map} shows that the \emph{angle map} $\phi(.)$ has two fixed points, where the larger one is a stable
fixed point. Note that according to
  Proposition~\ref{prop:nec-cond-fixed-point}, there exists a fixed
  point for the \emph{angle map} only if $\left\lVert R \right\rVert >1$. In
  this case, we can verify that
  $\left\lVert R \right\rVert = 1.3136$. In
Figure~\ref{fig:case3_tau_k} the inter-event time is converging to a
steady state value for two different initial conditions. Under
  the assumption of sufficiently small relative threshold parameters,
  ~\cite{RP-RS-WH-2022} says that if the eigenvalues of the closed
  loop system matrix $A_c$ are complex conjugates then the inter-event
  times oscillate in a near periodic manner. But, in this example we
show that even if $A_c$ has only complex conjugate eigenvalues, the
inter-event times may still converge to a steady state
value. Note however, that we cannot claim that this is a
  counter-example to the results of~\cite{RP-RS-WH-2022} as the bound
  on the relative thresholding parameter for which their results hold
  is not explicitly stated.

\begin{figure}[ht]
  \centering
   \begin{subfigure}{4cm}
    \includegraphics[width=4cm]{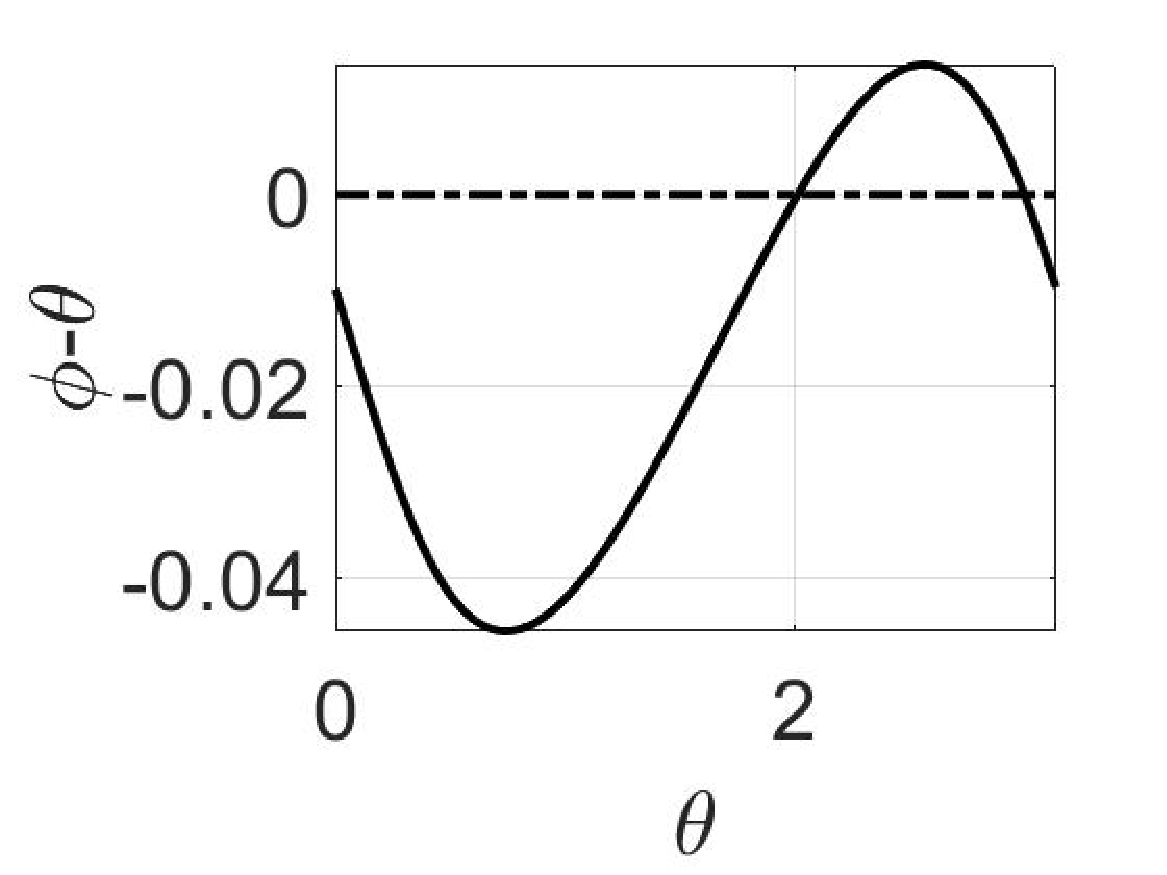}
    \caption{$\phi(\theta)-\theta$ map}
    \label{fig:case3_angle_map}
  \end{subfigure}
  \quad
  \begin{subfigure}{4cm}
    \includegraphics[width=4cm]{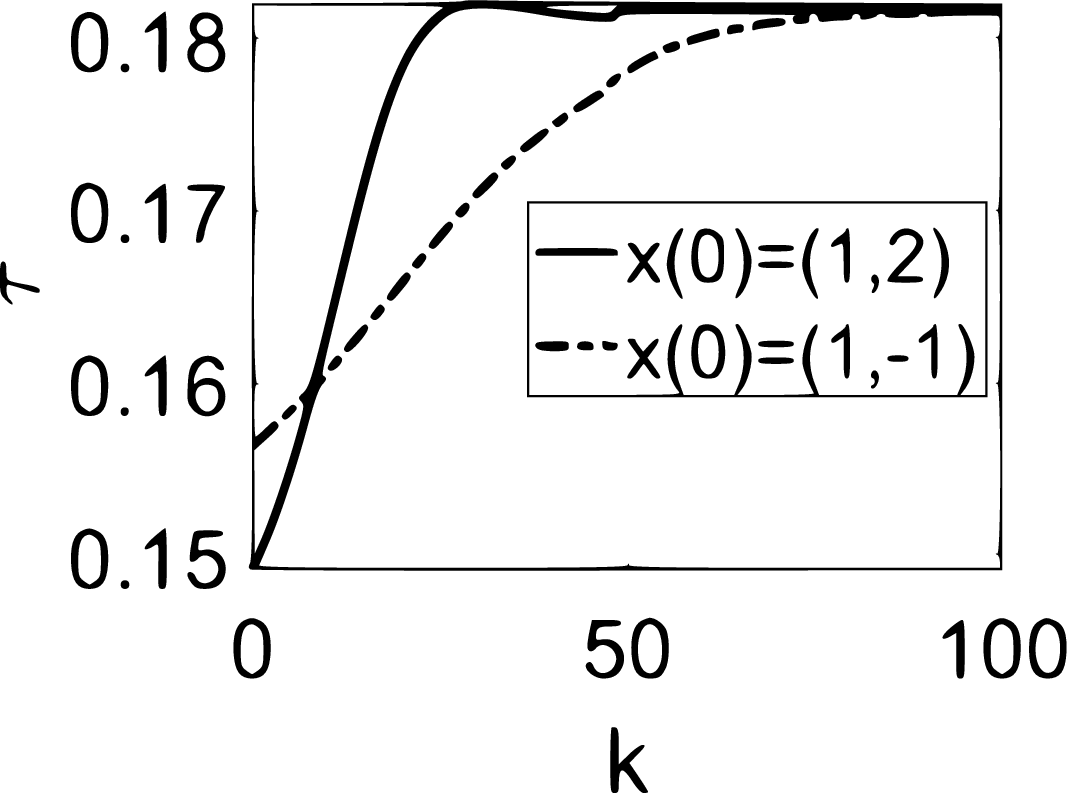}
    \caption{Evolution of inter-event times}
    \label{fig:case3_tau_k}
  \end{subfigure}
  \caption{Simulation results of Case 3 when $A_c$ has complex
  conjugate eigenvalues at $[-1+0.2i,-1-0.2i]$.}
  \label{fig:Ac_with_complex_evs_and_fixed_points}
\end{figure}

\subsubsection*{Case 4:}
Now consider the system,
\begin{equation*}
  \dot{x}=\begin{bmatrix}
    1 & 4 \\ 0 & 1
    \end{bmatrix}x+\begin{bmatrix} 0 \\ 1
  \end{bmatrix}u \rdef Ax + Bu .
\end{equation*}
$A$ has real and equal eigenvalues at $\{1,1\}$. We let the control
gain $K=[-2 \quad -4]$, so that $A_c$ has eigenvalues at
$\{-1+2i,-1-2i\}$. Figure~\ref{fig:discontinuous_iet} shows the
simulation results of this system for the triggering rule
\eqref{eq:tr1}. Figure~\ref{fig:case4_tau} shows that the inter-event
time function $\tau_s(\theta)$ is discontinuous around $\theta=$ 2.3
radians. In Figure~\ref{fig:case4_level_set}, we can see that around
$\theta=2.3$ radians, there is a jump in the smallest $\tau$ value at
which $f_s(\theta,\tau)=0$. This causes a point of discontinuity in
the inter-event time function.

\begin{figure}[ht]
  \centering
  \begin{subfigure}{4cm}
    \includegraphics[width=4cm]{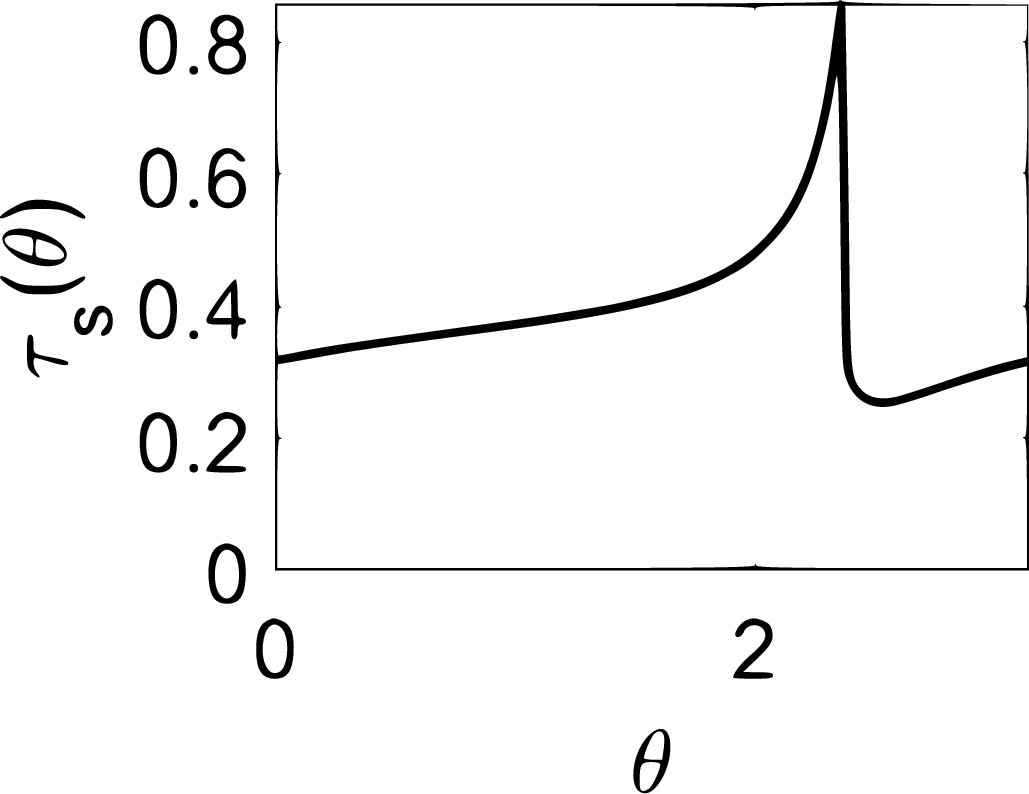}
    \caption{Inter-event time function}
    \label{fig:case4_tau}
  \end{subfigure}
  \quad
  \begin{subfigure}{4cm}
    \includegraphics[width=4cm]{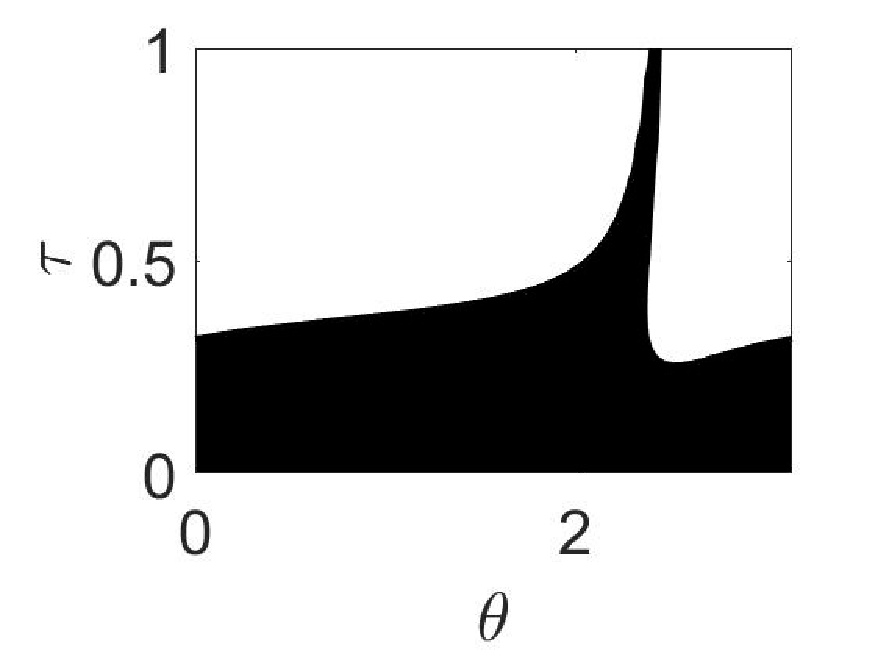}
    \caption{0-sub-level set of $f_s$}
    \label{fig:case4_level_set}
  \end{subfigure}
  \caption{Simulation results of Case 4 with discontinuous inter-event
  time function.}
  \label{fig:discontinuous_iet}

\end{figure}

\section{Conclusion} \label{sec:conclusion}

In this paper, we analyzed the asymptotic behavior of inter-event
times in planar linear systems under a general class of
scale-invariant event-triggering rules. As the inter-event time is 
a function of the angle of the state at an event, we carried out 
inter-event time analysis indirectly by studying the evolution of the 
angle of the state from one event to the next. The analysis of 
evolution of inter-event times is complex even for planar systems and 
the results in this paper are among very few in the literature that 
even seek to explain the variety of evolutions that is possible for the 
inter-event times. The proposed analytical results on the evolution of 
inter-event times are not directly
	extendable to a general n-dimensional system. However, the idea that analyzing the state evolution from one event
	to next as a means to analyzing the evolution of inter-event times
	does certainly apply to n-dimensional systems. We have used the same 
	idea in one of our recent work~\cite{AR-PT:2023} to analyze the 
	evolution of inter-event times of general n-dimensional LTI systems 
	under the region-based self-triggered control method. Future work 
	includes analysis of asymptotic behaviors of inter-event times using 
	an approximate rotation number of the angle map. Another direction 
	could be to determine an approximate asymptotic average inter-event 
	time with known error bounds. One could use more ideas from ergodic 
	theory to do the same. Other potential research directions include 
	extensions of the analysis to event-triggered control systems of 
	higher dimensions and to nonlinear systems, at least in a 
	self-triggered control context. 

\section*{Acknowledgements}

This work was partially supported by Science and Engineering
Research Board under grant CRG/2019/005743. Anusree Rajan was supported
by a fellowship grant from the Centre for Networked Intelligence
(a Cisco CSR initiative) of the Indian Institute of Science.

\bibliographystyle{elsarticle-num}
\bibliography{references}

\begin{thebibliography}{10}
\expandafter\ifx\csname url\endcsname\relax
  \def\url#1{\texttt{#1}}\fi
\expandafter\ifx\csname urlprefix\endcsname\relax\def\urlprefix{URL }\fi
\expandafter\ifx\csname href\endcsname\relax
  \def\href#1#2{#2} \def\path#1{#1}\fi

\bibitem{PT:2007}
P.~Tabuada, Event-triggered real-time scheduling of stabilizing control tasks,
  IEEE Transactions on Automatic Control 52~(9) (2007) 1680--1685.

\bibitem{WH:2012}
W.~Heemels, K.~Johansson, P.~Tabuada, An introduction to event-triggered and
  self-triggered control, in: 2012 IEEE 51st IEEE Conference on Decision and
  Control (CDC), 2012, pp. 3270--3285.

\bibitem{ML:2010}
M.~Lemmon, Event-triggered feedback in control, estimation, and optimization,
  in: Networked control systems, Springer, 2010, pp. 293--358.

\bibitem{DT-SH:2017-book}
D.~Toli{\'c}, S.~Hirche, Networked control systems with intermittent feedback,
  CRC Press, 2017.

\bibitem{FB:2017}
F.~D. {Brunner}, W.~P. M.~H. {Heemels}, F.~Allgower, Robust event-triggered
  {MPC} with guaranteed asymptotic bound and average sampling rate, IEEE
  Transactions on Automatic Control 62~(11) (2017) 5694--5709.

\bibitem{PT-MF-JC:2018-tac}
P.~Tallapragada, M.~Franceschetti, J.~Cort{\'e}s, Event-triggered second-moment
  stabilization of linear systems under packet drops, IEEE Transactions on
  Automatic Control 63~(8) (2018) 2374--2388.

\bibitem{SB-PT:2021-cta}
S.~Bose, P.~Tallapragada, Event-triggered second moment stabilisation under
  action-dependent {M}arkov packet drops, IET Control Theory and Applications
  15~(7) (2021) 949--964.

\bibitem{KA-BB:2002}
K.~Astrom, B.~Bernhardsson, Comparison of {Riemann} and {Lebesgue} sampling for
  first order stochastic systems, in: Proceedings of the 41st IEEE Conference
  on Decision and Control, Vol.~2, 2002, pp. 2011--2016.

\bibitem{BD-AL-DQ:2017}
B.~Demirel, A.~S. Leong, D.~E. Quevedo, Performance analysis of event-triggered
  control systems with a probabilistic triggering mechanism: The scalar case,
  20th IFAC World Congress 50~(1) (2017) 10084--10089.

\bibitem{PT:2016}
P.~{Tallapragada}, J.~Cort{\'e}s, Event-triggered stabilization of linear
  systems under bounded bit rates, IEEE Transactions on Automatic Control
  61~(6) (2016) 1575--1589.

\bibitem{QL:2017}
Q.~{Ling}, Bit rate conditions to stabilize a continuous-time scalar linear
  system based on event triggering, IEEE Transactions on Automatic Control
  62~(8) (2017) 4093--4100.

\bibitem{JP-JPH-DL:2017}
J.~{Pearson}, J.~P. {Hespanha}, D.~{Liberzon}, Control with minimal
  cost-per-symbol encoding and quasi-optimality of event-based encoders, IEEE
  Transactions on Automatic Control 62~(5) (2017) 2286--2301.

\bibitem{MJK-PT-JC-MF:2020-tac}
M.~J. Khojasteh, P.~Tallapragada, J.~Cort{\'e}s, M.~Franceschetti, The value of
  timing information in event-triggered control, IEEE Transactions on Automatic
  Control 65~(3) (2020) 925--940.

\bibitem{BAK-DJA-WPMH:2018}
B.~{Asadi Khashooei}, D.~J. {Antunes}, W.~P. M.~H. {Heemels}, A consistent
  threshold-based policy for event-triggered control, IEEE Control Systems
  Letters 2~(3) (2018) 447--452.

\bibitem{FDB-DA-FA:2018}
F.~D. Brunner, D.~Antunes, F.~Allgower, Stochastic thresholds in
  event-triggered control: A consistent policy for quadratic control,
  Automatica 89 (2018) 376 -- 381.

\bibitem{PT-MF-JC:2019-ijsc}
P.~Tallapragada, M.~Franceschetti, J.~Cort{\'e}s, Event-triggered control under
  time-varying rate and channel blackouts, {IFAC} Journal of Systems and
  Control 9 (2019) 100064.

\bibitem{AA:2010}
A.~Anta, P.~Tabuada, To sample or not to sample: Self-triggered control for
  nonlinear systems, IEEE Transactions on Automatic Control 55~(9) (2010)
  2030--2042.

\bibitem{WH:2013}
W.~P. M.~H. {Heemels}, M.~C.~F. {Donkers}, A.~R. {Teel}, Periodic
  event-triggered control for linear systems, IEEE Transactions on Automatic
  Control 58~(4) (2013) 847--861.

\bibitem{MV-PM-EB:2009}
M.~Velasco, P.~Martí, E.~Bini, Equilibrium sampling interval sequences for
  event-driven controllers, in: European Control Conference (ECC), 2009, pp.
  3773--3778.

\bibitem{RP-RS-WH-2022}
R.~Postoyan, R.~G. Sanfelice, W.~Heemels, Explaining the ``mystery" of
  periodicity in inter-transmission times in two-dimensional event-triggered
  controlled system, IEEE Transactions on Automatic Control (2022).

\bibitem{AK-MM:2018}
A.~Sharifi~Kolarijani, M.~Mazo, Formal traffic characterization of lti
  event-triggered control systems, IEEE Transactions on Control of Network
  Systems 5~(1) (2018) 274--283.

\bibitem{GD-MM:2020}
G.~Delimpaltadakis, M.~Mazo, Traffic abstractions of nonlinear homogeneous
  event-triggered control systems, in: 59th IEEE Conference on Decision and
  Control (CDC), 2020, pp. 4991--4998.

\bibitem{GD-LL-MM:2021}
G.~Delimpaltadakis, L.~Laurenti, M.~Mazo, Abstracting the sampling behaviour of
  stochastic linear periodic event-triggered control systems, in: 2021 60th
  IEEE Conference on Decision and Control (CDC), 2021, pp. 1287--1294.

\bibitem{GAG-MM:2021}
G.~de~A.~Gleizer, M.~Mazo, Computing the sampling performance of
  event-triggered control, in: Proceedings of the 24th International Conference
  on Hybrid Systems: Computation and Control, 2021.

\bibitem{GG-MM:2021}
G.~{de Albuquerque Gleizer}, M.~Mazo, Computing the average inter-sample time
  of event-triggered control using quantitative automata, Nonlinear Analysis:
  Hybrid Systems 47 (2023) 101290.

\bibitem{GG-MM:2022}
G.~d.~A. Gleizer, M.~Mazo, Chaos and order in event-triggered control, IEEE
  Transactions on Automatic Control (2023) 1--16.

\bibitem{AR-PT:2020}
A.~Rajan, P.~Tallapragada, Analysis of inter-event times for planar linear
  systems under a general class of event triggering rules, in: 59th IEEE
  Conference on Decision and Control (CDC), 2020, pp. 5206--5211.

\bibitem{AR-PT:2023}
A.~Rajan, P.~Tallapragada, Analysis of inter-event times in linear systems
  under region-based self-triggered control, IEEE Transactions on Automatic
  Control (2023) 1--8\href {https://doi.org/10.1109/TAC.2023.3314136}
  {\path{doi:10.1109/TAC.2023.3314136}}.

\bibitem{JS-VS:1972}
J.~Souček, V.~Souček, Morse-sard theorem for real-analytic functions,
  Commentationes Mathematicae Universitatis Carolinae 013~(1) (1972) 45--51.

\bibitem{NK-NB:1937}
N.~Kryloff, N.~Bogoliouboff, The general theory of measurement in its
  application to the study of dynamical systems of nonlinear mechanics, Annals
  of Mathematics 38 (1937) 65.

\bibitem{AK-BH:1995}
A.~Katok, B.~Hasselblatt, Introduction to the Modern Theory of Dynamical
  Systems, Encyclopedia of Mathematics and its Applications, Cambridge
  University Press, 1995.

\bibitem{PW:1982}
P.~Walters, An Introduction to Ergodic Theory, Graduate texts in mathematics,
  Springer-Verlag, 1982.

\bibitem{CA-1992}
C.~Arteaga, Unique ergodicity of continuous self-maps of the circle, Journal of
  Mathematical Analysis and Applications 163 (1992) 536--540.

\bibitem{VV-1988}
M.~{Van Veldhuizen}, On the numerical approximation of the rotation number,
  Journal of Computational and Applied Mathematics 21~(2) (1988) 203--212.

\bibitem{RP-1995}
R.~Pavani, A numerical approximation of the rotation number, Applied
  Mathematics and Computation 73 (1995) 191--201.

\bibitem{AB-2016}
A.~Belova, Rigorous enclosures of rotation numbers by interval methods, Journal
  of Computational Dynamics 3~(1) (2016) 81--91.

\end{thebibliography}

  \end{document}